\newcommand{\BlackBoxes}{\global\overfullrule5pt}
\newcommand{\R}{\mathbb{R}} 
\newcommand{\Q}{\mathbb{Q}} 
\newcommand{\N}{\mathbb{N}} 
\newcommand{\PP}{\mathbb{P}}
\newcommand{\EE}{\mathbb{E}}
\definecolor{darkgreen}{rgb}{0.01, 0.74, 0.25}
\newtheorem{theorem}{Theorem}
\newtheorem{lemma}[theorem]{Lemma}
\theoremstyle{definition}
\newtheorem{example}[theorem]{Example}
\newtheorem{remark}[theorem]{Remark}
\numberwithin{equation}{section} \numberwithin{theorem}{section}
\def\0{\kern0pt\-\nobreak\hskip0pt\relax}
\def\makeoverbar#1#2#3#4#5#6#7{ \setbox0=\hbox{$\m@th#2\mkern#5mu{{}#3{}}\mkern#6mu$} \setbox1=\null \dimen@=#4\fontdimen8#13 \dimen@=3.5\dimen@
\advance\dimen@ by \ht0 \dimen@=-#7\dimen@ \advance\dimen@ by \wd0
\ht1=\ht0 \dp1=\dp0 \wd1=\dimen@
\dimen@=\fontdimen8#13 \fontdimen8#13=#4\fontdimen8#13
\rlap{\hbox to \wd0{$\m@th\hss#2{\overline{\box1}}\mkern#5mu$}}
\fontdimen8#13=\dimen@}
\def\mylabel#1#2{{\def\@currentlabel{#2}\label{#1}}}
\begin{document}
\title[  ]{Optimal investment in ambiguous financial markets with learning}

\author[N. \smash{B\"auerle}]{Nicole B\"auerle$^*$}
\address[N. B\"auerle]{Institute of Stochastics,
Karlsruhe Institute of Technology (KIT), D-76128 Karlsruhe, Germany}

\email{\href{mailto:nicole.baeuerle@kit.edu}{nicole.baeuerle@kit.edu}}

\author[A. \smash{Mahayni}]{Antje Mahayni}
\address[A. Mahayni]{ Mercator School of Management, University of Duisburg–Essen, Lotharstr. 65, 47057, Duisburg, Germany}

\email{\href{amahayni@uni-due.de} {amahayni@uni-due.de}}

\thanks{${}^*$ Corresponding author}

\begin{abstract}
We consider the classical multi-asset Merton investment problem under drift uncertainty, i.e. the asset price dynamics are given by geometric Brownian motions with constant but unknown drift coefficients. The investor assumes a prior drift distribution  and is able to learn by observing the asset prize realizations during the investment horizon.
While the solution of an expected utility maximizing investor with constant relative risk aversion (CRRA) is well known, we consider the optimization problem under risk and ambiguity preferences by means of the KMM (\cite{klibanoff2005smooth})  approach. Here, the investor maximizes a double certainty equivalent. The inner certainty equivalent is  for given drift coefficient, the outer  is based on a drift
distribution.
Assuming also a CRRA type ambiguity function, it turns out that the optimal strategy can be stated in terms of the solution without ambiguity preferences but an adjusted drift distribution. To the best of our knowledge an explicit solution method in this setting is new. We rely on some duality theorems to prove our statements.\\
Based on our theoretical results, we are able to shed light on the impact of the prior drift distribution as well as the consequences of ambiguity preferences via the transfer to an adjusted  drift distribution, i.e. we are able to explain the interaction of risk and ambiguity preferences.
We compare our results with the ones in a pre-commitment setup where the investor is restricted to deterministic strategies. It turns out that (under risk and ambiguity aversion)
an infinite investment
horizon implies in both cases a maximin decision rule, i.e. the investor follows the  worst (best) Merton fraction (over all realizations of it) if she is more (less)  risk averse than a log-investor. We illustrate our findings with an extensive numerical study.   
\end{abstract}
\maketitle

\makeatletter \providecommand\@dotsep{5} \makeatother

\vspace{0.5cm}
\begin{minipage}{14cm}
{\small
\begin{description}
\item[\rm \textsc{ Key words}]
{\small portfolio optimization, learning, smooth ambiguity, duality theory, Bayesian investment problem}
\item[\rm \textsc{JEL classifications}] {\small C61, G11, D81}
\end{description}
}
\end{minipage}

\section{Introduction}
We investigate the effects of model ambiguity preferences on optimal investment decisions in a multi asset Black Scholes market. Since the seminal paper by  \cite{ellsberg1961risk}, we know that decision makers may have a non-neutral attitude towards model ambiguity. As a result, preferences are decomposed into risk preferences (based on known probabilities) and preferences concerning the degree of uncertainty about the (unknown) model parameters and are evaluated separately. This is in particular relevant for portfolio optimization problems. A recent literature suggests that model ambiguity is at least as prominent as risk in making investment decisions, see \cite{chen2002ambiguity}.

There are different ways to incorporate model ambiguity in decision making, like for example summarized in \cite{guidolin2013ambiguity}. In our setting, model ambiguity refers to the drift uncertainty in the dynamics of asset prices \footnote{It is well-known that the drift of stock prices is notoriously difficult to estimate, \cite{gennotte1986optimal}} and we apply the smooth ambiguity approach of \cite{klibanoff2005smooth} to deal with it. The risk in asset prices itself is evaluated by a utility function applied to the terminal wealth. Thus, the expected utility is itself a random variable (determined by the prior distribution of the drift parameters) which is evaluated by a second utility function (ambiguity function) capturing the model ambiguity. This approach allows for a separation of risk and ambiguity.  As a result we end up with a stochastic optimization problem over a nested expectation which leads to non-linear expectations and the fact that we cannot solve the problem with a standard HJB approach. As in \cite{balter2021time}  we assume that both the risk aversion and ambiguity aversion of the investor are described by (CRRA) power functions. While \cite{balter2021time} consider pre-commitment strategies, we take into account for the possibility that the investor is able to gradually learn about the drift by observing the asset price movements.

First of all, we contribute to the literature by analytically solving the portfolio optimization problem under drift uncertainty and learning while taking into account for both,  risk and ambiguity preferences. To the best of our knowledge this has not yet been achieved before in our setting. We exploit the fact that the norm-like functions which appear in the concatenations of the certainty equivalents for risk and ambiguity allow for a specific dual representation. On the one hand, such a result should have been expected from previous research about the representation of smooth ambiguity, see \cite{iwaki2014dual}, on the other hand it shows that smooth ambiguity is nothing else than a special kind of robust control. Indeed, we can think of the optimization problem as a classical Bayesian problem with adjusted prior probability distribution for the drift where the adjustment is computed in a second optimization problem. The attitude towards ambiguity depends on the relation of the levels of risk and ambiguity aversion (\cite{balter2021time}, \cite{iwaki2014dual}). Our quasi closed form solutions allow an in depth analysis of the interaction of risk and ambiguity preferences. Among others it turns out that with relatively increasing ambiguity aversion, the prior distribution is smoothly shifted from 'good' to 'bad' drift scenarios, i.e. an ambiguity averse decision maker is more pessimistic.

Second, we are able to determine the long-time behavior of an ambiguity neutral/averse Bayes investor in the multi asset case. This is much more challenging than in the single-asset case. There it has been shown in  \cite{bauerle2017extremal} that the worst/best drift is crucial. In the vector-valued case it is not clear which drift scenarios are worst and which best cases. It turns out that these extreme scenarios are determined by the Euclidean norm of the possible drift vectors. An investor who is more (less) risk averse than a log-investor \footnote{investor with logarithmic ultility}   then tends to a maximin (maximax) decision rule (which does not depend on the probability distribution of the drift). More precisely, an infinite investment horizon implies that the  more (less) risk averse investor initially acts as someone who knows that the drift belongs to the worst (best) case scenario.
Relying on this result we are able to compare the optimal strategy under learning with pre-commitment strategies, i.e. strategies where the investor is restricted to strategies which are deterministic functions of time.  We are able to explain why (compared to a pre-commitment strategy) the value of learning is rather low, even in the case of high investment horizons. For short investment horizons (or remaining investment horizons), the investor is not able to learn {\it{much}} about the drift. For long investment horizons, both investors are initially guided by the worst (best) scenario. Finally this observation also carries over to the investor with model ambiguity since she essentially behaves like a Bayesian investor with adjusted prior which does not play a role when a large time horizon is present. This is maybe expected since a large time horizon allows for perfectly learning the model.
\\

Related literature:\\
{\em Bayes optimization problems and their sensitivity:} We treat the fact that the asset drifts or market prices of risk are unknown as a Bayesian problem where we have a prior distribution (knowledge) about the values of the parameters which are here the drift or equivalently the market price of risk. The observations of the asset prices can then be used to update the belief which is also known as learning. This is done with the help of a filter. This filter then becomes part of the state process of the optimization problem. Techniques like these are well-known in finance, see e.g.
\cite{lakner1995utility} \cite{brennan1998role}, \cite{karatzas2001bayesian}, \cite{honda2003optimal}, \cite{rieder2005portfolio}, \cite{bjork2010optimal} for investment problems in Black Scholes markets.
The sensitivity of the optimal investment strategy in a single-asset Bayesian Black Scholes model w.r.t.\ model parameters is an interesting topic and investigated among others in  \cite{rieder2005portfolio}, \cite{longo2016learning}, \cite{bauerle2017extremal}.

{\em Duality and ambiguity:} One early approach to deal with ambiguity is to consider robust approaches, e.g.\ \cite{gilboa2004maxmin,hansen2001robust} and in continuous time control problems with ambiguous interest rates \cite{lin2021optimal} and in semimartingale markets \cite{schied2007optimal}, \cite{schied2009robust}. The latter survey paper discusses different robust formulations and their connection to risk measures.   Robust approaches care about worst-case scenarios and have sometimes been criticised for being too pessimistic. Thus, \cite{klibanoff2005smooth} introduced smooth ambiguity (KMM) which weights possible scenarios in a smooth way. Using Yaari's duality theory (\cite{yaari1987dual}), \cite{iwaki2014dual} already showed that there is a dual approach to smooth ambiguity which connects the 'smoothing function' to some distorted probabilities. The dual representation of entropic risk measures is used  in 
 \cite{bauerle2019markov} to tackle discrete-time decision problems with exponential utility for the KMM ambiguity applied to an unknown parameter. In  \cite{skiadas2003robust,skiadas2013smooth} the author relates a robust control to a recursive utility approach.

{\em Solving smooth ambiguity:} Smooth ambiguity models are by definition more complex than standard decision models under uncertainty. There are not many explicitly solved cases and approaches to tackle the problem. A static two-asset problem is considered and solved directly in \cite{gollier2011portfolio}.  In \cite{balter2021time}, the authors treat a Black-Scholes market with one risky asset and restrict to deterministic strategies which excludes learning. In  \cite{guan2022equilibrium} the authors look for equilibrium strategies in a smooth ambiguity problem with investment in a single-asset Black Scholes market and reinsurance and the mean-variance criterion. The recent paper \cite{glx} also considers equilibrium portfolio strategies for smooth ambiguity preferences.

{\em Learning and ambiguity:} The effect of learning under ambiguity has e.g.\ been investigated in \cite{epstein2007learning} among others with the help of dynamic variants of the Ellsberg problem. In  \cite{ju2012ambiguity} a generalized recursive smooth ambiguity and the effects on learning are considered. Both references deal with problems in discrete time.
\cite{suzuki2018continuous} generalizes the results in \cite{ju2012ambiguity} by taking a limit to continuous time. Whereas in \cite{baillon2018effect}  the effect of learning information on people’s attitudes toward ambiguity is investigated. \cite{miao2009ambiguity} considers optimal consumption and investment in a similar financial market with incomplete information, however apply a recursive multiple priors approach which immediately yields some kind of worst case problem over densities. Numerical results are not provided in the paper.

The outline of the paper is as follows.
First, Section \ref{sec_theory} states the optimization problem and its solution. We start with a multi-asset Black Scholes model where we set the interest rate to zero for simplicity. Then we review the classical Bayesian case and  explain how the problem with ambiguity can be solved analytically. A main tool is Sion's minimax theorem. Proofs are deferred to the appendix.  
Sec. \ref{sec:sensitivity} first analyzes the optimal strategy in the Bayesian model, since the optimal strategy in the ambiguous case boils down to this setting with adjusted prior. We discuss the behavior of the optimal investment strategy for short and long time horizon. The latter one being quite difficult to analyse. A particular focus is on the special case with two-point prior where we can represent the optimal investment strategy as a convex combination of Merton fractions which are optimal in the setting with complete information given by the two drift settings. We also consider the case with pre-commitment i.e. when the investment strategy has to be chosen at time $0$ and compare the two cases. Finally we simplify in the two-prior scenario the problem with model ambiguity. In Section \ref{sec_numerics} we present an extensive numerical study which sheds further light on our theoretical statements and gives some intuitive explanations. We restrict here to the single-asset case and two-point prior. The appendix contains proofs and further parameter constellations not discussed in the main sections.

\section{Optimization problem and solution}\label{sec_theory}
Let $(\Omega, \mathcal{F}, (\mathcal{F}_t), \mathbb{P})$ be a filtered probability space and $T>0$ be a finite time horizon. The underlying financial market consists of $d$ stocks and one riskless bond, each defined on the previously mentioned probability space. The price process $S=(S_1(t),\ldots,S_d(t))_{t\in [0,T]}$ of the $d$ stocks will for $i=1,\ldots ,d$ be given by
\begin{equation}
d S_i(t) = S_i(t) \left[ \mu_i dt + \sum_{j=1}^d \sigma_{ij} dW_j(t)\right] = S_i(t) \left[  \sum_{j=1}^d \sigma_{ij} dY_j(t)\right],
\end{equation}
where $W=(W_1(t),\ldots,W_d(t))^\top_{t\in [0,T]}$ is a d-dimensional Brownian motion, $\mu_i \in \R, \sigma_{ij} \in \R_+, i,j=1,\ldots,d$ and $\sigma=(\sigma_{ij})$ is regular. We further set $$ Y(t):= W(t)+\Theta t,\quad \Theta^\top := \sigma^{-1} \mu, \quad \mu:=(\mu_1,\ldots,\mu_d),  $$
where $\Theta$ denotes the market price per unit of risk.
The price process  of the riskless bond is for simplicity assumed to be identical to $1$.

We further assume that $\mu$   is not known and thus a random variable. This implies that the market price of risk $\Theta$ is also not known to the investor.  However, she has a prior knowledge about $\Theta$ in form of a prior distribution $\PP$ on $\R^d$. For the numerical part we assume that the random variable $\Theta$  may take only one of the values $\vartheta_1,\ldots ,\vartheta_m$ with $\PP(\Theta=\vartheta_i)=p_i$.

In a next step we introduce a suitable set of trading strategies. Since the riskless bond is equal to 1 and we only consider self-financing strategies we can express the wealth process with the help of the investment in risky assets only. By $\pi = (\pi_1,\ldots,\pi_d)$ we denote a $d$-dimensional stochastic process representing the trading strategy of some investor, where $\pi_k(t)$ describes the amount   invested  in the $k$-th stock  at time $t\in [0,T]$.  We denote by
$$ \mathcal{F}^Y(t) := \sigma(Y(s), 0\le s\le t), \quad \mathcal{F}^Y:= (\mathcal{F}^Y(t))$$
the filtration generated by $Y$ which is equivalent to the filtration generated by $S$. Strategies $\pi$ should be  $\mathcal{F}^Y$-progressively measurable. This means that the agent is able to observe the stock prices and updates the belief about the market price of risk from this observation. In other words, the agent is able to {\em learn} the right market price of risk.
The associated wealth process denoted by $(X^\pi_t)_{t\in [0,T]}$ is given by
\begin{equation}
d X^\pi_t = \sum_{k=1}^d \pi_k(t) \frac{\mathrm{d}S_k(t)}{S_k(t)} =\pi(t) \sigma dY(t)\label{eq sde price process}
\end{equation}
with initial  capital $x_0 \in \R$. In what follows let
$$ u(x) = \frac1\alpha x^\alpha, \quad \alpha<1, \alpha\neq 0$$
be a CRRA utility function. The absolute and relative risk aversion is here given by
\begin{align*}
R_A(x):=-\frac{u''(x)}{u'(x)}= \frac{1-\alpha}{x} \text{ and } R_R(x):= x R_A(x)=1-\alpha.
\end{align*}
It is well-known that the limiting case $\alpha\to 0$ corresponds to the logarithmic utility. 

\subsection{The classical Bayesian case}\label{sec:classicBayes}
The investor aims to maximize her expected utility of terminal wealth. First we assume that the investor is ambiguity-neutral w.r.t.\ the unknown parameter and consider
\begin{equation}\label{eq:Bayesproblem}
V(x_0)= \sup_\pi \int \EE_\vartheta [u(X_T^\pi)] \PP(d\vartheta)
\end{equation}
where the supremum is taken over all $\mathcal{F}^Y$-adapted strategies $\pi$ for which the stochastic integral and the expectations are defined and $X_T^\pi\ge 0$. We denote this set by $\mathcal{A}.$ $\EE_\vartheta$ is the conditional expectation, given $\Theta=\vartheta.$
This problem is the well-known Bayesian adaptive portfolio problem. We summarize its solution in the following theorem (\cite{karatzas2001bayesian,rieder2005portfolio}) (where $\|\cdot\|$ is the usual Euclidean norm):

\begin{theorem}\label{theo:Bayes}
The maximal expected utility attained in \eqref{eq:Bayesproblem} is given by
\begin{equation}
V(x_0) = \frac{x_0^\alpha}{\alpha} \left( \int_{\R^d} F(T,z)^\gamma \varphi_T(z) dz\right)^{1/\gamma}, \quad x_0>0
\end{equation}\label{eq:FL}
where $\gamma = 1/(1-\alpha)$, $\varphi_T$ is the density of the $d$-dimensional normal distribution $\mathcal{N}(0,TI)$ ($I$ being the identity matrix) and for $t\ge 0, z\in\R^d, \vartheta\in\R$ we define
\begin{equation}
F(t,z):= \int L_t(\vartheta,z) \PP(d\vartheta), \quad  L_t(\vartheta,z) := \left\{ \begin{array}{ll}
       \exp{\big(z\cdot \vartheta-\frac12 \|\vartheta\|^2 t\big)}  &  t>0\\
     1    & t=0.
   \end{array}\right.\;  
\end{equation}
The optimal fractions invested in the stocks are for $t\ge 0$ given by
\begin{equation}\label{eq:Bayesfraction}
\frac{\pi^*(t)}{X^*(t)}
=\gamma (\sigma^\top)^{-1}  \frac{ \int_{\R^d} \nabla F(T,z+Y(t))(F(T,z+Y(t))^{\gamma-1}\varphi_{T-t}(z) dz}{\int_{\R^d} (F(T,z+Y(t))^{\gamma}\varphi_{T-t}(z) dz}
\end{equation}
where $X^*$ is the wealth process under the optimal strategy $\pi^*.$
\end{theorem}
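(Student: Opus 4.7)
The plan is to reduce the Bayesian problem to a static terminal-wealth optimization via a change of measure, solve it with a Lagrangian, and then recover the strategy by martingale representation and Itô. The starting point is to introduce, on $(\Omega,\mathcal{F}^Y(T))$, the auxiliary measure $\tilde{\PP}$ under which $Y$ is a standard $d$-dimensional Brownian motion. Conditionally on $\Theta=\vartheta$, the process $L_t(\vartheta,Y(t))$ is the Girsanov density that turns $Y$ from a BM with drift $\vartheta$ into a BM. Averaging over the prior then shows that the density of $\PP$ w.r.t.\ $\tilde{\PP}$ on $\mathcal{F}^Y(t)$ equals $F(t,Y(t))$, so the objective can be rewritten
\begin{equation*}
\int \EE_\vartheta[u(X_T^\pi)]\,\PP(d\vartheta) \;=\; \tilde{\EE}\bigl[F(T,Y(T))\,u(X_T^\pi)\bigr].
\end{equation*}

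Under $\tilde{\PP}$, the wealth equation $dX^\pi_t=\pi(t)\sigma\,dY(t)$ makes $X^\pi$ a nonnegative $\tilde{\PP}$-local martingale, which yields the budget constraint $\tilde{\EE}[X_T^\pi]\leq x_0$ for every $\pi\in\mathcal A$; conversely, the martingale representation theorem applied to $Y$ under $\tilde{\PP}$ shows that every $\mathcal{F}^Y(T)$-measurable $X_T\geq 0$ with $\tilde{\EE}[X_T]= x_0$ is attained by some admissible $\pi$. The problem thus reduces to the static concave program of maximizing $\tilde{\EE}[F\,u(X_T)]$ subject to $\tilde{\EE}[X_T]=x_0$. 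Pointwise optimization of the Lagrangian gives $F\cdot (X_T^*)^{\alpha-1}=\lambda$, hence $X_T^*=\lambda^{-\gamma}F(T,Y(T))^\gamma$ with $\gamma=1/(1-\alpha)$; the budget constraint fixes $\lambda$ and produces
\begin{equation*}
V(x_0)=\frac{x_0^\alpha}{\alpha}\bigl(\tilde{\EE}[F(T,Y(T))^\gamma]\bigr)^{1/\gamma}
=\frac{x_0^\alpha}{\alpha}\Bigl(\int_{\R^d}F(T,z)^\gamma\varphi_T(z)\,dz\Bigr)^{1/\gamma},
\end{equation*}
using that $Y(T)\sim\mathcal{N}(0,TI)$ under $\tilde{\PP}$.

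For the strategy, the optimal wealth is $X^*_t=\tilde{\EE}[X_T^*\mid\mathcal F^Y(t)]$. By the independence of the increment $Y(T)-Y(t)\sim\mathcal{N}(0,(T-t)I)$ under $\tilde{\PP}$ we get $X_t^*=c\,H(t,Y(t))$ with $H(t,y):=\int F(T,y+z)^\gamma\varphi_{T-t}(z)\,dz$ and $c=x_0/\tilde{\EE}[F^\gamma]$. Applying Itô and noticing that the drift must vanish because $X^*$ is a $\tilde{\PP}$-martingale, $dX_t^*=c\,\nabla_yH(t,Y(t))^\top dY(t)$; matching this against $\pi^*(t)\sigma\,dY(t)$ and dividing by $X_t^*=c\,H(t,Y(t))$ gives
\begin{equation*}
\frac{\pi^*(t)}{X^*(t)}=(\sigma^\top)^{-1}\frac{\nabla_yH(t,Y(t))}{H(t,Y(t))},
\end{equation*}
which upon differentiating $F^\gamma$ under the integral sign is exactly \eqref{eq:Bayesfraction}.

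The main obstacle is the measure-change/representation step: one has to justify carefully that the fibrewise Girsanov argument integrates properly against the prior (so that $F(t,Y(t))$ really is the Radon--Nikodym density on $\mathcal F^Y$, and not merely on the larger filtration $\sigma(\Theta)\vee\mathcal F^W$), and that attainability holds on $\mathcal F^Y$ — this requires that $Y$ is a $\tilde{\PP}$-Brownian motion with respect to its own filtration, so that the Brownian martingale representation theorem applies. Integrability of $F(T,Y(T))^\gamma$ under $\tilde{\PP}$ (needed for a finite value function and for exchanging gradient and integral) is the remaining analytic check and holds whenever $\PP$ has sufficiently light tails, which is automatic in the finitely-supported prior $\vartheta_1,\ldots,\vartheta_m$ used in the numerics.
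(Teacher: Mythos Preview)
Your argument is correct and is precisely the martingale/duality method of the references the paper cites for this theorem (\cite{karatzas2001bayesian,rieder2005portfolio}); the paper itself does not give a proof but simply quotes the result from those sources. The reduction to the reference measure $\tilde\PP$ under which $Y$ is Brownian, the identification of $F(t,Y(t))$ as the Radon--Nikodym density on $\mathcal F^Y(t)$, the static Lagrangian yielding $X_T^*\propto F(T,Y(T))^\gamma$, and the recovery of $\pi^*$ via It\^o on $H(t,Y(t))$ are exactly the steps in \cite{karatzas2001bayesian}, so there is nothing to contrast.
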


\begin{remark}\label{rem:Bayes_sol}
\begin{itemize}
    \item[a)] Recall in particular that in case the market price of risk is known and is equal to $\vartheta,$ the optimal fractions which have to be invested do not depend on time and wealth and are given by
$$ \frac{\pi^*(t)}{X^*(t)}=\kappa^{\text{Mer}}(\gamma,\vartheta):=  \gamma (\sigma^\top)^{-1} \vartheta.$$
This is a special case of our model when the prior distribution is concentrated on $\vartheta.$ It is often named  {\em Merton fractions} of a CRRA investor with a level of relative risk aversion $\frac{1}{\gamma}=1-\alpha$.
 Indeed this fraction can  be recovered from  \eqref{eq:Bayesfraction} as follows: Since  the prior is concentrated on $\vartheta$ we obtain that
$F(t,z)=  L_t(\vartheta,z) $ and thus
\begin{equation}
\frac{\pi^*(t)}{X^*(t)}
=\gamma (\sigma^\top)^{-1}  \frac{ \int_{\R^d}  \vartheta F(T,z+Y(t))(F(T,z+Y(t))^{\gamma-1}\varphi_{T-t}(z) dz}{\int_{\R^d} (F(T,z+Y(t))^{\gamma}\varphi_{T-t}(z) dz }= \gamma (\sigma^\top)^{-1} \vartheta.
\end{equation}
    \item[b)]  In case $\alpha\to 0$ which can be interpreted as the logarithmic utility we obtain that the optimal fractions invested in the stocks are for $t\ge 0$ given by
    \begin{equation}\label{eq:Bayesfraction_log}
\frac{\pi^*(t)}{X^*(t)}
= (\sigma^\top)^{-1}  \hat \Theta_t =  (\sigma^\top)^{-1} \EE[\Theta|\mathcal{F}^Y(t)] =  (\sigma^\top)^{-1}\frac{  \nabla F(t,Y(t))}{F(t,Y(t))}.
\end{equation}
  This is sometimes called {\em certainty equivalence principle} since the unknown market price of risk in the Merton fractions is simply replaced by its conditional expectation, given the information so far. The fraction here does not depend on the time horizon of investment. Moreover for discrete distribution $\PP$ on $\vartheta_1,\ldots ,\vartheta_m$ with $\PP(\Theta=\vartheta_i)=p_i$, it holds that
  $$ \PP(\Theta = \vartheta_i| \mathcal{F}^Y(t)) = \frac{  p_i L_i(\vartheta_i,Y(t))}{F(t,Y(t))}, \quad i=1,\ldots,m$$
  is the conditional distribution of $\Theta$, \cite{karatzas2001bayesian,rieder2005portfolio}. Also note that the expectation of $\hat\Theta_t$ remains constant over time and is thus equal to the expectation of the prior, since the process is a martingale by construction.    
\end{itemize}
\end{remark}

\subsection{The case with model ambiguity concerns}
Now we are interested in an investor who takes model ambiguity into account, i.e. instead of problem \eqref{eq:Bayesproblem} we consider for a second utility function $$v(x)=\frac1\lambda\; x^\lambda, \lambda <1,\lambda \neq 0$$ the problem (see e.g.\ \cite{balter2021time})
\begin{eqnarray}\nonumber
 && \sup_{\pi\in\mathcal{A}}  v^{-1} \int v \circ u^{-1} \EE_\vartheta [u(X_T^\pi)] \PP(d\vartheta)\\ \label{eq:Aproblem}
 &=& \sup_{\pi\in\mathcal{A}}  \left(  \int \left( \EE_\vartheta [(X_T^\pi)^\alpha]\right)^{\lambda/\alpha} \PP(d\vartheta)\right)^{1/\lambda}
\end{eqnarray}
This means that model ambiguity, represented by an uncertain market price of risk, is evaluated with a second utility function $v$ which is here of the same form but with possibly different parameter.
In  case $\alpha >0$ problem \eqref{eq:Aproblem} is equivalent to
\begin{equation}\label{eq:Aproblem1}
 \sup_{\pi\in\mathcal{A}}  \left(  \EE \left[\big( \EE_\Theta [(X_T^\pi)^\alpha]\big)^{\lambda/\alpha} \right]\right)^{\alpha/\lambda},
\end{equation}
in case $\alpha<0$ it is equivalent to 
\begin{equation}\label{eq:Aproblem2}
 \inf_{\pi\in\mathcal{A}}  \left(  \EE \left[\big( \EE_\Theta [(X_T^\pi)^\alpha]\big)^{\lambda/\alpha} \right]\right)^{\alpha/\lambda},
\end{equation}
In case $\alpha=\lambda$ the problem reduces to the Bayesian problem discussed previously. Thus, if model risk and the market risk is evaluated with the same parameter we are back in the setting of Section \ref{sec:classicBayes}. In what follows we restrict the discussion to the case $\alpha,\lambda\in(0,1).$ The cases where at least one of the parameters is negative are similar and discussed in the appendix. 

\subsubsection{The ambiguity loving case}
Let us now assume that $\lambda> \alpha>0$ and define $\mathbf{p} := \lambda/\alpha>1$. The economic interpretation is that the agent is less concerned about model ambiguity than about the risk in the stock market itself. In this case by using the $L^\mathbf{p}$ norm $\|\cdot\|_\mathbf{p}$ we can write  problem \eqref{eq:Aproblem1} as
\begin{equation}\label{eq:Aproblem3}
 \sup_{\pi\in\mathcal{A}}  \left\| \EE_\Theta [(X_T^\pi)^\alpha]\right\|_\mathbf{p}
\end{equation}
where the norm is w.r.t. $\Theta$. It is well-known that the $L^\mathbf{p}$ norm has the following dual representation for a r.v. $X\ge 0$, where $1/\mathbf{p}+1/\mathbf{q}=1$ (see e.g.\ \cite{rudin1991functional}):
\begin{lemma}\label{lem:duality_1}
If $\mathbf{p} := \lambda/\alpha>1$ we obtain for non-negative $X\in L^\mathbf{p}$
    \begin{equation}\label{eq:dual_case1}
\| X\|_\mathbf{p} = \sup\left\{ \int X d \Q : \left\| \frac{d\Q}{d\PP}\right\|_\mathbf{q} \le 1 \right\}.
\end{equation}
where on the right-hand side of \eqref{eq:dual_case1} the supremum is taken over all measures $\Q$ (not necessarily probability measures) which are absolutely continuous w.r.t.\ $\PP$ and satisfy the constraint. Moreover, an optimal measure $\Q^*$ exists. 
\end{lemma}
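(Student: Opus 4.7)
The plan is to establish the duality via the two classical inequalities and to exhibit an explicit optimizer, which is the standard $L^\mathbf{p}$--$L^\mathbf{q}$ duality pattern as referenced in \cite{rudin1991functional}; the mild twist here is that $\Q$ ranges over non-negative measures rather than probability measures, which actually simplifies the construction.

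For the inequality ``$\geq$'' of \eqref{eq:dual_case1}, I would apply H\"older's inequality. For any measure $\Q\ll\PP$ with density $g:=d\Q/d\PP$ satisfying $\|g\|_\mathbf{q}\le 1$,
\[
\int X\, d\Q \;=\; \int X g\, d\PP \;\le\; \|X\|_\mathbf{p}\,\|g\|_\mathbf{q} \;\le\; \|X\|_\mathbf{p},
\]
and taking the supremum over admissible $\Q$ gives the bound. This step uses nothing beyond $X\ge 0$, $X\in L^\mathbf{p}$, and the conjugate exponent relation.

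For the reverse inequality ``$\leq$'' and for existence of the optimizer simultaneously, I would guess the extremal density. In the nontrivial case $\|X\|_\mathbf{p}>0$, set
\[
g^* \;:=\; \frac{X^{\mathbf{p}-1}}{\|X\|_\mathbf{p}^{\mathbf{p}-1}}, \qquad \frac{d\Q^*}{d\PP}\;:=\;g^*.
\]
Using the identity $\mathbf{q}(\mathbf{p}-1)=\mathbf{p}$, a direct computation gives $\|g^*\|_\mathbf{q}=1$, so $\Q^*$ is admissible, and a second direct computation gives $\int X\, d\Q^* = \|X\|_\mathbf{p}^{1-\mathbf{p}}\int X^\mathbf{p}\, d\PP = \|X\|_\mathbf{p}$. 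The degenerate case $X=0$ $\PP$-a.s.\ is handled by taking $\Q^*=0$, where the statement holds trivially.

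There is essentially no hard step: the only points worth flagging are (i) that no normalization constraint on $\Q$ interferes with the construction of $g^*$, since $\Q$ need not be a probability measure, and (ii) that $g^*\in L^\mathbf{q}$ uses precisely the hypothesis $X\in L^\mathbf{p}$. Existence of the optimal $\Q^*$ is automatic from the explicit formula, so no weak compactness or limiting argument is required.
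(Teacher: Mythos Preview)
Your proposal is correct and follows exactly the standard $L^\mathbf{p}$--$L^\mathbf{q}$ duality argument via H\"older's inequality and the explicit extremal density $g^*=X^{\mathbf{p}-1}/\|X\|_\mathbf{p}^{\mathbf{p}-1}$. The paper itself does not spell out a proof of this lemma at all, simply citing it as the well-known duality from \cite{rudin1991functional}, so your argument is precisely the proof that the paper defers to the reference.
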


For a random variable $X$ with values $\{x_1,\ldots,x_m\}$ and corresponding probabilities $ p_1,\ldots, p_d$ we can thus write
\begin{equation}\label{eq:dual:discrete}
\left( \sum_{i=1}^m x_i^\mathbf{p} p_i\right)^{1/\mathbf{p}} = \sup\left\{ \sum_{i=1}^m x_i q_i  : \sum_{i=1}^m \left(\frac{q_i}{p_i}\right)^\mathbf{q} p_i \le 1, q_i\ge 0 \right\}.
\end{equation}
In what follows define the set of measures $\mathfrak{Q}$
as the set of measures which satisfy the constraints in \eqref{eq:dual_case1}.
This gives immediately rise to the following solution algorithm for our problem:

\begin{theorem}\label{theo:main0}
In the model of this subsection we have
\begin{eqnarray}\nonumber
 \sup_{\pi\in\mathcal{A}} \left\| \EE_\Theta [(X_T^\pi)^\alpha]\right\|_\mathbf{p} &=&
 \sup_{\pi\in\mathcal{A}} \sup_{\Q \in \mathfrak{Q}} \int \EE_\vartheta [(X_T^\pi)^\alpha] \Q(d\vartheta) =  \sup_{\Q \in \mathfrak{Q}} \sup_{\pi\in\mathcal{A}}   \int \EE_\vartheta [(X_T^\pi)^\alpha]\Q(d\vartheta)\\ \label{eq:Aproblem4}
 &=& \int \EE_\vartheta [(X_T^{\pi^*})^\alpha]\Q^*(d\vartheta).
\end{eqnarray}
After normalizing $\Q$, the inner optimization problem is however, exactly the Bayesian portfolio problem of the previous section with distribution $\tilde\Q:= \Q/ \Q(\R)$ for the unknown parameter.  So solving \eqref{eq:Aproblem1} boils down to solving the classical Bayesian portfolio problem first with value given in Theorem \ref{theo:Bayes} and then in a second step  finding the optimal prior distribution implied by $\Q^*$ which is obtained from the outer optimization problem. The optimal strategy $\pi^*$ is then the one in Theorem \ref{theo:Bayes} with $\PP$ replaced by $\Q^*.$
\end{theorem}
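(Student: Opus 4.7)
The overall strategy is to apply the $L^{\mathbf{p}}$-duality from Lemma \ref{lem:duality_1} to rewrite the outer norm as a supremum over a family of measures, then commute the two suprema, and finally recognise the inner problem as a scaled version of the Bayesian problem of Section \ref{sec:classicBayes}. Since the statement asserts three equalities plus an algorithmic reduction, each will be handled in turn.

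First I would observe that for every admissible $\pi\in\mathcal{A}$ the random variable $X_\pi := \EE_\Theta[(X_T^\pi)^\alpha]$ is non-negative (because $X_T^\pi\ge 0$ and $\alpha>0$) and lies in $L^{\mathbf{p}}$ under $\PP$. Lemma \ref{lem:duality_1} therefore applies pointwise in $\pi$ and gives
\begin{equation*}
\|X_\pi\|_{\mathbf{p}} \;=\; \sup_{\Q\in\mathfrak{Q}} \int \EE_\vartheta[(X_T^\pi)^\alpha]\,\Q(d\vartheta),
\end{equation*}
so taking $\sup_{\pi\in\mathcal{A}}$ on both sides yields the first equality in \eqref{eq:Aproblem4}. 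The second equality is a routine swap of two suprema, which is valid for any real-valued function on $\mathcal{A}\times\mathfrak{Q}$ and so requires no minimax argument.

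Next, for the reduction to the Bayesian problem, I would fix $\Q\in\mathfrak{Q}$ with $\Q(\R^d)>0$ (the case $\Q\equiv 0$ is trivial), set $\tilde\Q := \Q/\Q(\R^d)$, which is a probability measure absolutely continuous with respect to $\PP$, and rewrite the inner functional as
\begin{equation*}
\int \EE_\vartheta[(X_T^\pi)^\alpha]\,\Q(d\vartheta) \;=\; \alpha\,\Q(\R^d)\int \EE_\vartheta[u(X_T^\pi)]\,\tilde\Q(d\vartheta).
\end{equation*}
The inner supremum over $\pi\in\mathcal{A}$ is then, up to the multiplicative constant $\alpha\,\Q(\R^d)$, exactly problem \eqref{eq:Bayesproblem} with prior $\tilde\Q$, so Theorem \ref{theo:Bayes} supplies both the value and an explicit maximiser $\pi^*(\tilde\Q)$ given by \eqref{eq:Bayesfraction}.

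Finally, for the attainment statement and the last equality in \eqref{eq:Aproblem4}, I would invoke Lemma \ref{lem:duality_1} once more but at the level of the optimum: plugging an optimal strategy $\pi^*$ into the first expression in \eqref{eq:Aproblem4}, the lemma guarantees an optimal measure $\Q^*$ realising $\|X_{\pi^*}\|_{\mathbf{p}}$, and by the commutation of the two suprema this $\Q^*$ must also realise the outer supremum in the second expression. Conversely, by Theorem \ref{theo:Bayes}, the inner sup for this $\Q^*$ is attained at the Bayesian-optimal strategy with prior $\tilde\Q^* = \Q^*/\Q^*(\R^d)$, and by the equality of the two decompositions this strategy coincides with $\pi^*$. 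The main obstacle is a technical one: verifying joint attainment, i.e.\ that one can pick the maximising $\pi^*$ and $\Q^*$ consistently. This is not a Sion-type issue because both operations are suprema; one merely has to check that the supremum on the left-hand side of \eqref{eq:Aproblem4} is actually attained (e.g.\ by taking a maximising sequence $\pi_n$, extracting a subsequence using weak compactness of $\mathfrak{Q}$ and the explicit form of $V^{\tilde\Q}(x_0)$ from Theorem \ref{theo:Bayes} as a function of the prior, and passing to the limit), after which the rest of the argument is automatic.
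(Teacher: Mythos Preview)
Your proposal is correct and follows essentially the same approach as the paper: the theorem is stated without a separate proof there, the argument being contained in the statement itself (apply Lemma~\ref{lem:duality_1} for the first equality, interchange the two suprema trivially, then identify the inner problem with the Bayesian problem of Theorem~\ref{theo:Bayes}). You add more care than the paper about the joint attainment of $(\pi^*,\Q^*)$, which the paper treats as immediate from the existence clause in Lemma~\ref{lem:duality_1} together with the explicit Bayesian optimiser; your remark that this is not a minimax issue but merely an attainment check is apt.
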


\subsubsection{The ambiguity averse case}
Let us now assume that $\alpha >\lambda> 0$, i.e. the agent is more concerned about model ambiguity than about the risk in the financial market. This case is slightly more complicated.  Define again $\mathbf{p} := \lambda/\alpha<1$ and $\mathbf{q}$ by $1/\mathbf{p}+1/\mathbf{q}=1$.   Note that $\mathbf{q}<0$. We obtain (see Appendix \ref{app:duality}):

\begin{lemma}\label{lem:duality2}
If $\mathbf{p} := \lambda/\alpha<1$   we obtain for non-negative $X\in L^1 $ 
\begin{equation}\label{eq:dual_caseb}
\left(\int  X^\mathbf{p} d\PP\right)^{1/\mathbf{p}} = \inf\left\{\int X d \Q : \Big(\int \Big(\frac{d\Q}{d\PP}\Big)^\mathbf{q} d\PP\Big)^{1/\mathbf{q}} \ge 1  \right\}.
\end{equation}
where on the right-hand side of \eqref{eq:dual_caseb} the infimum is taken over all measures $\Q$ (not necessarily probability measures) which are absolutely continuous w.r.t.\ $\PP$. Moreover, an optimal measure $\Q^*$ exists. 
\end{lemma}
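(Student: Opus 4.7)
My strategy is the standard reverse Hölder duality for the $L^\mathbf{p}$ quasinorm with $\mathbf{p}\in(0,1)$: first bound the infimum from below by a reverse Hölder inequality, and then exhibit an explicit density that attains equality.

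\textbf{Lower bound via reverse Hölder.} For measurable $f,g>0$ and $\mathbf{p}\in(0,1)$ with $\mathbf{q}=\mathbf{p}/(\mathbf{p}-1)<0$ the reverse Hölder inequality reads
$$\int fg\,d\PP \;\ge\; \Bigl(\int f^\mathbf{p}\,d\PP\Bigr)^{1/\mathbf{p}}\Bigl(\int g^\mathbf{q}\,d\PP\Bigr)^{1/\mathbf{q}}.$$
I would derive it by applying the usual Hölder inequality with conjugate exponents $1/\mathbf{p}>1$ and $1/(1-\mathbf{p})$ to the decomposition $f^\mathbf{p}=(fg)^\mathbf{p}\cdot g^{-\mathbf{p}}$; noting $-\mathbf{p}/(1-\mathbf{p})=\mathbf{q}$ and $(1-\mathbf{p})/\mathbf{p}=-1/\mathbf{q}$ yields the displayed bound after rearrangement. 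Plugging in $f=X$ and $g=d\Q/d\PP$ and using the constraint $(\int g^\mathbf{q} d\PP)^{1/\mathbf{q}}\ge 1$ immediately produces $\int X\,d\Q \ge (\int X^\mathbf{p} d\PP)^{1/\mathbf{p}}$ for every admissible $\Q$, so the infimum on the right of \eqref{eq:dual_caseb} dominates the left-hand side.

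\textbf{Attainment.} To match the lower bound I would try the candidate density
$$\frac{d\Q^*}{d\PP} := c\,X^{\mathbf{p}-1},\qquad c:=\Bigl(\int X^\mathbf{p} d\PP\Bigr)^{-1/\mathbf{q}},$$
suggested by the equality case of Hölder. The algebraic identity $(\mathbf{p}-1)\mathbf{q}=\mathbf{p}$ yields $\int (d\Q^*/d\PP)^\mathbf{q} d\PP = c^\mathbf{q}\int X^\mathbf{p} d\PP = 1$, so the constraint binds; using $1-1/\mathbf{q}=1/\mathbf{p}$ one computes
$$\int X\,d\Q^* = c\int X^\mathbf{p} d\PP = \Bigl(\int X^\mathbf{p} d\PP\Bigr)^{1/\mathbf{p}},$$
proving both the reverse inequality and the existence of a minimiser.

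\textbf{Expected obstacle.} The main technical nuisance is that the candidate density need not be strictly positive and genuinely $\sigma$-finite: on $\{X=0\}$ the formula gives $+\infty$, while the constraint functional $\int g^\mathbf{q} d\PP$ only behaves nicely for strictly positive $g$. Because $\mathbf{q}<0$, the set $\{X=0\}$ contributes zero to both $\int g^\mathbf{q} d\PP$ and to $\int X\,d\Q^*$, so heuristically the recipe still works; to make this rigorous I would regularise by $X_\varepsilon := X+\varepsilon$, run the two steps above for $X_\varepsilon$, and pass to the limit $\varepsilon\downarrow 0$ by monotone convergence. The degenerate possibility $\int X^\mathbf{p} d\PP=+\infty$ is handled symmetrically by truncating $X$ from above and letting the truncation level tend to infinity.
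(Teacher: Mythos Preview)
Your proposal is correct and follows essentially the same route as the paper: the paper also first proves the reverse H\"older inequality by reducing to the ordinary H\"older inequality with exponents $1/\mathbf{p}$ and $1/(1-\mathbf{p})$ (applied to the pair $(XY)^\mathbf{p}$ and $(1/Y)^\mathbf{p}$, which is exactly your decomposition $f^\mathbf{p}=(fg)^\mathbf{p}g^{-\mathbf{p}}$), then plugs in $g=d\Q/d\PP$ for the lower bound, and finally exhibits the same minimising density $d\Q^*/d\PP=(\int X^\mathbf{p}d\PP)^{-1/\mathbf{q}}X^{1/(\mathbf{q}-1)}=cX^{\mathbf{p}-1}$. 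The only difference is that the paper simply works under the blanket assumption that ``all integrals exist'' and does not discuss the set $\{X=0\}$ or the case $\int X^\mathbf{p}d\PP=+\infty$, so your $\varepsilon$-regularisation is extra care beyond what the paper provides.
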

In what follows we denote by $\mathfrak{Q}'$  the set of all measures which satisfy the constraints in \eqref{eq:dual_caseb}
Hence we can write our optimization problem as
\begin{equation}\label{eq:Aproblem5}
 \sup_{\pi\in\mathcal{A}}  \left(  \EE \left( \EE_\Theta [(X_T^\pi)^\alpha]\right)^{\mathbf{p}} \right)^{\frac{1}{\mathbf{p}}} =  \sup_{\pi\in\mathcal{A}} \inf_{\Q \in \mathfrak{Q}'} \int \EE_\vartheta [(X_T^\pi)^\alpha] \Q(d\vartheta).
\end{equation}
We have to show next that we can interchange the infimum and supremum in order to solve the problem as in the previous case. We proceed as in \cite{bauerle2019markov} and use the following minimax theorem (see \cite{sion1958general}):

\begin{theorem}\label{theo:minimax}
Let $M$  be  any space and $O$ be a compact space,  $h$ a function on $M\times O$ that is concave-convexlike. If  $h(x,y)$ is lower semi-continuous in $y$ for all $x\in M$ then $$\sup_x\inf_y h(x,y)= \inf_y \sup_x h(x,y).$$
\end{theorem}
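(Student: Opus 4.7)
The inequality $\sup_{x}\inf_{y} h(x,y)\le \inf_{y}\sup_{x} h(x,y)$ holds trivially for any function on any product space, so only the reverse direction requires work. Write $v:=\sup_{x}\inf_{y} h(x,y)$ and suppose for contradiction that $v<c<\inf_{y}\sup_{x}h(x,y)$ for some real $c$. The plan is to produce a point $y^{*}\in O$ with $h(x,y^{*})\le v$ for every $x\in M$; such a $y^{*}$ would give $\sup_{x}h(x,y^{*})\le v<c$, contradicting $\inf_{y}\sup_{x}h>c$.

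For each $x\in M$ the level set $A_{x}:=\{y\in O:h(x,y)\le v\}$ is closed in $O$ by lower semi-continuity of $h(x,\cdot)$, and is non-empty since any lower semi-continuous function attains its infimum on the compact set $O$. By compactness of $O$, $\bigcap_{x\in M}A_{x}\neq\emptyset$ follows from the finite intersection property, equivalently
$$\min_{y\in O}\max_{1\le i\le n}h(x_{i},y)\le v\qquad\text{for every finite family }x_{1},\dots,x_{n}\in M.$$
I would prove this by induction on $n$, the base case $n=1$ being the very definition of $v$.

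For the inductive step, the concave-convexlike hypothesis enters decisively. Given $x_{1},\dots,x_{n+1}$, the concavelike property in $x$ produces, for each $t\in[0,1]$, some $\tilde x_{t}\in M$ with $h(\tilde x_{t},y)\ge t\,h(x_{n},y)+(1-t)h(x_{n+1},y)$ for all $y\in O$. Using lower semi-continuity in $y$, compactness of $O$, and the convexlike property in $y$ applied to candidate minimizers at the two extremes $t\in\{0,1\}$, one selects a parameter $t^{*}$ at which replacing the pair $\{x_{n},x_{n+1}\}$ by $\tilde x_{t^{*}}$ does not push $\min_{y}\max_{i}h(x_{i},y)$ above $v$. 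The problem then reduces from $n+1$ to $n$ functions and the induction hypothesis closes the argument.

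The main obstacle is exactly this two-point reduction. Since $M$ carries neither a topology nor a linear structure, classical convex-analytic tools such as Hahn--Banach separation or the KKM lemma do not apply, and the selection of $t^{*}$ must be carried out entirely inside the compact space $O$ using the concavelike axiom as a purely combinatorial surrogate for convexity. This delicate construction is precisely the content of \cite{sion1958general}, which I would invoke for the full technical details rather than reproduce here.
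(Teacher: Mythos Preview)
The paper does not prove this theorem at all: it is stated as a quoted result and attributed to \cite{sion1958general}, then applied as a black box in the proof of Theorem~\ref{theo:main2}. Your proposal is consistent with this treatment, since you also ultimately defer to \cite{sion1958general} for the technical core; the sketch you give (finite intersection property on the compact factor, reduction to finitely many $x$'s, and an inductive two-point merging step using the concave-convexlike hypothesis) is indeed the standard route to Sion's theorem, but goes beyond what the paper itself supplies.
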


\begin{theorem}\label{theo:main2}
In the model of this subsection where $\PP$ has compact support we have
\begin{equation}
    \sup_{\pi\in\mathcal{A}} \inf_{\Q \in \mathfrak{Q}'} \int \EE_\vartheta [(X_T^\pi)^\alpha] \Q(d\vartheta) =  \inf_{\Q \in \mathfrak{Q}'}  \sup_{\pi\in\mathcal{A}} \int \EE_\vartheta [(X_T^\pi)^\alpha] \Q(d\vartheta)=\int \EE_\vartheta [(X_T^{\pi^*})^\alpha]\Q^*(d\vartheta).
\end{equation}
The solution procedure is then as in Theorem \ref{theo:main0} by first solving the inner and then the outer problem.
\end{theorem}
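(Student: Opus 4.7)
The plan is to apply Sion's minimax theorem (Theorem~\ref{theo:minimax}) to the bilinear functional
$$h(\pi,\Q) := \int \EE_\vartheta[(X_T^\pi)^\alpha]\,\Q(d\vartheta)$$
on $\mathcal{A}\times \mathfrak{Q}'$, and then to recognize the inner maximization (for fixed $\Q$) as the Bayesian portfolio problem of Section~\ref{sec:classicBayes} with adjusted prior $\tilde\Q := \Q/\Q(\R)$.

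For the structural hypotheses of Sion, I would use that the wealth dynamics are linear in $\pi$, so $X_T^\pi = x_0 + \int_0^T \pi(t)\sigma\,dY(t)$ is affine in $\pi$. Combined with concavity of $x\mapsto x^\alpha$ on $\R_+$ for $\alpha \in (0,1)$ and the non-negativity of $\Q$, this yields concavity of $\pi \mapsto h(\pi,\Q)$ on the convex set $\mathcal{A}$. Since $\Q \mapsto h(\pi,\Q)$ is linear, $h$ is concave-convexlike. For lower semicontinuity in $\Q$, I would endow the finite non-negative measures on the compact support $K$ of $\PP$ with the weak topology; continuity of $\vartheta \mapsto \EE_\vartheta[(X_T^\pi)^\alpha]$ on $K$ follows from the explicit form of $X_T^\pi$ together with Girsanov's theorem and dominated convergence, so $\Q \mapsto \int f\,d\Q$ is in fact weakly continuous.

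The main obstacle is compactness of $\mathfrak{Q}'$. Since $\mathbf{q}<0$, the constraint $\int(d\Q/d\PP)^\mathbf{q} d\PP \le 1$ bounds $d\Q/d\PP$ away from $0$ but imposes no a priori bound on $\Q(\R)$. I would circumvent this by a coercivity argument. The measure $\PP$ itself lies in $\mathfrak{Q}'$ (density $1$ saturates the constraint), and the constant strategy $\pi_0\equiv 0$ gives $h(\pi_0,\Q) = x_0^\alpha\Q(\R)$; hence any $\Q$ with $\Q(\R) > R := x_0^{-\alpha}\sup_{\pi}h(\pi,\PP)$ produces $\sup_\pi h(\pi,\Q)$ strictly larger than $\sup_\pi h(\pi,\PP)$ and cannot attain the outer infimum. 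Thus $\inf_\Q$ may be restricted to $\mathfrak{Q}'_R := \mathfrak{Q}'\cap\{\Q(\R)\le R\}$, which is convex (as an intersection of a mass ball with the sublevel set of the convex functional $f\mapsto \int f^\mathbf{q} d\PP$) and, by Banach--Alaoglu together with compactness of $K$, weakly compact.

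With all hypotheses verified on $\mathcal{A}\times\mathfrak{Q}'_R$, Sion's theorem delivers the interchange of $\sup_\pi$ and $\inf_\Q$; weak compactness and lower semicontinuity then produce an attaining $\Q^*$. For fixed $\Q$, writing $\tilde\Q := \Q/\Q(\R)$ and pulling out the scalar, the inner supremum equals $\Q(\R)$ times $\alpha$ times the Bayesian value with prior $\tilde\Q$, whose value and optimizer are supplied by Theorem~\ref{theo:Bayes}. Setting $\tilde\Q^* := \Q^*/\Q^*(\R)$, the optimal strategy $\pi^*$ is the Bayesian optimizer associated with the adjusted prior $\tilde\Q^*$, establishing both the interchange and the joint attainment asserted in the statement.
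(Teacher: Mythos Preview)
Your proof is correct and follows essentially the same route as the paper: verify the concave--convexlike and lower-semicontinuity hypotheses of Sion's minimax theorem, truncate $\mathfrak{Q}'$ to a bounded-mass (hence weakly compact) subset to recover compactness, interchange $\sup$ and $\inf$, and identify the inner problem as the Bayesian problem of Theorem~\ref{theo:Bayes} with adjusted prior $\tilde\Q=\Q/\Q(\R)$. Your coercivity argument via the constant strategy $\pi_0\equiv 0$ is a more explicit justification for the truncation than the paper's, which simply appeals to the existence of a finite-mass optimizer from Lemma~\ref{lem:duality2}.
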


\begin{proof}
We apply Theorem \ref{theo:minimax} to the function $L :  \mathcal{A}\times \mathfrak{Q}' \to \R$ defined by
$$L(\pi,\Q)=  \int \EE_\vartheta [(X_T^\pi)^\alpha]\Q(d\vartheta).$$
where $\mathcal{A}$ is the set of admissible strategies. The set $\mathfrak{Q}'$ is not compact, however since we want to minimize over this set we can first restrict to a set $\mathfrak{\tilde Q}$ where the elements $\Q$ have a mass bounded by a constant $K>0$. The proof of Lemma \ref{lem:duality2} shows that an optimal $\Q^*$ with finite mass exists.  Next the mapping $\Q\mapsto L(\Q,\pi)$ is linear and continuous w.r.t.\ weak convergence. Finally, since $X_T^\pi$ is linear in $\pi$ and $x^\alpha$ is concave, the mapping $\pi\mapsto L(\Q,\pi)$ is concave. Hence the assumptions are satisfied and  Theorem \ref{theo:minimax} implies that we are able to interchange  the infimum and supremum.
\end{proof}

\begin{remark}
For simplicity we assume that our financial market is complete. However the approach to solve the smooth ambiguity problem by making use of a dual representation of the target function is independent of the precise formulation of the market. For example we could also solve  smooth optimization problems in incomplete markets like in \cite{karatzas1991martingale} in the same way with the drawback that the solution of the inner optimization problem is more complicated and does not allow for such a nice formula as in Theorem \ref{theo:Bayes}.    
\end{remark}

\begin{remark}
    When we choose as utility function for our investor $u(x)=-e^{-\alpha x}, \alpha>0$ and for the ambiguity $v(x)=-e^{-\lambda x}, \lambda>0$ then a quick calculation shows that $v\circ u^{-1}=- (-x)^{\lambda/\alpha}$ and we can apply the same solution method in this case. However, the idea of the duality approach is more general. For example suppose that ambiguity aversion is smoothly measured via the negative of the entropic risk measure $- \frac{1}{\alpha} \ln ( \EE [e^{-\alpha  X}]), \alpha >0$ and our decision maker is risk neutral. Then we would like to maximize
    $$-\frac{1}{\alpha} \ln \int  e^{-\alpha \EE_\vartheta [X_T^\pi] }\Q(d\vartheta) = \inf_\Q \left\{ \int \EE_\vartheta [X_T^\pi]  \Q(d\vartheta) +\frac1\alpha I(\Q || \PP) \right\}  $$
    where the infimum is taken over all probability measures $\Q$ which are absolutely continuous w.r.t.\ $\PP$ and $I$ is the usual relative entropy. This representation gives again rise to a decomposition of the problem in an inner and outer problem, \cite{bauerle2019markov}. Representations like this are well-known for convex risk measures, \cite{schied2007optimal}. 
\end{remark}

\section{Sensitivity and comparison results}\label{sec:sensitivity}
In the previous section we have seen that solving the optimization problem with smooth ambiguity boils down to solving the classical Bayesian investment problem with the 'right' prior distribution for the unknown market price of risk. In particular, as far as sensitivity results are concerned, the model behaves as the classical Bayesian problem. Thus, we first prove some sensitivity results for the classical case, i.e. for the optimal fractions in Theorem \ref{theo:Bayes} which we denote by 
\begin{equation*}
    \kappa(t,T,Y(t)) := \frac{\pi^*(t)}{X^*(t)}.
\end{equation*}

\subsection{Limiting results for optimal investment fraction - General discrete prior}
In what follows, suppose that the prior distribution $\PP$ is concentrated on $\vartheta_1,\ldots ,\vartheta_m$ with $\PP(\Theta=\vartheta_k)=p_k, k=1,\ldots,m$. The
the $i$-the row of $(\sigma^\top)^{-1}$ will be denoted by $(\sigma^\top)^{-1}_i$ and $ \|\cdot\|$ is the usual Euclidean norm. When we assume that $p_k>0$ for all $k$, we obtain the following results:

\begin{theorem}\label{theo:sensitivity}
In the Bayesian model  it holds for the optimal investment fractions:
\begin{itemize}
    \item[a)] for all $y\in \R:$
    $$\lim_{T\to 0} \kappa(0,T,y) =\gamma  (\sigma^\top)^{-1} \sum_{k=1}^m  \vartheta_k p_k.$$
     \item[b)] when $\|\vartheta_1\| <\ldots < \|\vartheta_m\|$ and $\alpha>0$  then for all $t\ge 0$, $y\in \R:$
    $$ \lim_{T\to\infty}  \kappa(t,T,y)=\gamma  (\sigma^\top)^{-1} \cdot \vartheta_m = \kappa^{\text{Mer}}(\gamma,\vartheta_m).   $$
     \item[c)] when $\|\vartheta_1\| <\ldots < \|\vartheta_m\|$ and $\alpha<0$  then for all $t\ge 0$, $y\in \R:$
      $$ \lim_{T\to\infty}  \kappa(t,T,y)= \gamma (\sigma^\top)^{-1} \cdot \vartheta_1 =\kappa^{\text{Mer}}(\gamma,\vartheta_1).   $$
   \item[d)]  for all $t\ge 0$, $y\in \R$ and $i=1,\ldots,d:$
       $$\gamma \min_k\{ (\sigma^\top)^{-1}_i \cdot \vartheta_k\} \le   \kappa_i(t,T,y) \le  \gamma \max_k \{(\sigma^\top)^{-1}_i \cdot \vartheta_k\}.$$
    \item[e)]  for all $T>0$:
       $$ \lim_{\alpha \downarrow 0} \kappa(0,T,0) =(\sigma^\top)^{-1} \sum_{k=1}^m  \vartheta_k p_k.$$     
\end{itemize}
\end{theorem}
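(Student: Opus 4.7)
My plan is organized around the observation that, expanding $\nabla F = \sum_k p_k \vartheta_k L_T(\vartheta_k,\cdot)$ and $F = \sum_k p_k L_T(\vartheta_k,\cdot)$ inside formula \eqref{eq:Bayesfraction}, the fraction factors as $\kappa(t,T,y) = \gamma(\sigma^\top)^{-1}\bar\vartheta(t,T,y)$, where
\begin{equation*}
\bar\vartheta(t,T,y) = \int \hat\Theta_T(z+y)\,\nu_T^{t,y}(dz), \qquad
\nu_T^{t,y}(dz) := \frac{F(T,z+y)^\gamma \varphi_{T-t}(z)\,dz}{\int F(T,z'+y)^\gamma \varphi_{T-t}(z')\,dz'},
\end{equation*}
and $\hat\Theta_T(w) := \nabla F(T,w)/F(T,w)$ is the posterior mean of $\Theta$ given $Y(T)=w$ (Remark~\ref{rem:Bayes_sol}(b)). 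In particular $\bar\vartheta$ is a compound average of the $\vartheta_k$ with non-negative weights, so (d) follows immediately by linearity of $(\sigma^\top)^{-1}_i$. Part (a) is the initial-condition specialization at $y=Y(0)=0$: the substitution $z=\sqrt{T}w$ reduces the integrals to dominated convergence against $\varphi_1(w)\,dw$, with $L_T(\vartheta_k,\sqrt{T}w)\to 1$ uniformly on compacts as $T\downarrow 0$, so $\hat\Theta_T\to \sum_k p_k\vartheta_k$ on the support of the limiting measure and hence $\bar\vartheta(0,T,0)\to \sum_k p_k\vartheta_k$. Part (e) is the analogous continuity statement in $\gamma$ at $\gamma=1$: the integrals depend continuously on $\gamma$ by dominated convergence, at $\gamma=1$ a direct Gaussian calculation gives $\bar\vartheta(0,T,0)=\sum_k p_k\vartheta_k$, and this constant is the limit as $\alpha\downarrow 0$, consistent with the log-investor formula of Remark~\ref{rem:Bayes_sol}(b).

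The substance of the theorem lies in (b) and (c), for which I would carry out a Laplace-type analysis. Rescaling $z=(T-t)\tilde w$, up to polynomial factors the integrand defining $\nu_T$ behaves as $\exp(T\,h(\tilde w))$, since $F(T,(T-t)\tilde w+y)$ is asymptotically dominated by its largest summand; here
\begin{equation*}
h(\tilde w) \;=\; \gamma \max_k\!\Big(\tilde w\cdot\vartheta_k - \tfrac12 \|\vartheta_k\|^2\Big) \;-\; \tfrac12\|\tilde w\|^2.
\end{equation*}
The decisive algebraic step is interchanging the two maxima: $\sup_{\tilde w} h(\tilde w) = \max_k \sup_{\tilde w}\{\gamma(\tilde w\cdot\vartheta_k - \tfrac12\|\vartheta_k\|^2) - \tfrac12\|\tilde w\|^2\} = \max_k \gamma(\gamma-1)\|\vartheta_k\|^2/2$, with the inner supremum attained at $\tilde w=\gamma\vartheta_k$. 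For $\gamma>1$ the outer max is uniquely at $k=m$; for $\gamma<1$ at $k=1$. Denote this winner by $k^*$ and the saddle point by $\tilde w^* := \gamma\vartheta_{k^*}$. To confirm that $\tilde w^*$ is a (unique) global maximizer of $h$, I would check that at $\tilde w^*$ the inner max is attained by $k^*$, which reduces to the identity
\begin{equation*}
(\vartheta_{k^*} - \vartheta_j)\cdot\big[(2\gamma-1)\vartheta_{k^*} - \vartheta_j\big] \;=\; \gamma\,\|\vartheta_{k^*} - \vartheta_j\|^2 \,+\, (1-\gamma)\big(\|\vartheta_j\|^2 - \|\vartheta_{k^*}\|^2\big),
\end{equation*}
which is strictly positive for $j\neq k^*$ in both regimes under the strict ordering $\|\vartheta_1\|<\cdots<\|\vartheta_m\|$ (in (b), $(1-\gamma)<0$ and $\|\vartheta_j\|^2-\|\vartheta_{k^*}\|^2<0$; in (c), both are positive). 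Standard Laplace concentration then shows $\nu_T$ concentrates at $\tilde w^*$; at the same time the posterior $\hat\Theta_T$ concentrates on $\vartheta_{k^*}$ at that point, because the same identity controls the exponential rates of the posterior weights $p_k L_T(\vartheta_k,\cdot)/F$ at $z\approx T\gamma\vartheta_{k^*}$. Combining these two concentrations yields $\bar\vartheta(t,T,y)\to\vartheta_{k^*}$, which is precisely the claim in (b) and (c).

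The principal obstacle is the Laplace step itself: $F^\gamma$ is a sum raised to a non-integer power, so the usual steepest-descent templates do not apply off the shelf. The fix I have in mind is to sandwich $F(T,\cdot)$ between $M:=\max_k p_k L_T(\vartheta_k,\cdot)$ and $mM$, so that $M^\gamma\le F^\gamma\le m^\gamma M^\gamma$ reduces the denominator integral to finitely many classical Gaussian Laplace integrals indexed by $k$; the identity above then isolates the unique dominant exponential rate $T\gamma(\gamma-1)\|\vartheta_{k^*}\|^2/2$ and hence the concentration region of $\nu_T$. Everything else — exchanging limits with integrals, controlling the posterior weights on indices $j\neq k^*$ exponentially at the dominant saddle, and passing to the limit inside the convex-combination representation of the first paragraph — is routine dominated convergence.
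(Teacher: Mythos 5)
Your proposal is correct in substance, but it reaches the core parts b) and c) by a genuinely different route than the paper. You share the paper's starting point, the convex-combination representation of the fraction (your $\bar\vartheta$ and mixing measure $\nu_T$ correspond to the paper's weights $f_k(T)$), which settles d), and your handling of a) and e) matches the paper's direct arguments (like the paper, you effectively take $y=0$ in a), which is the only value consistent with $t=0$; for $y\neq 0$ the limit would pick up the tilted weights $p_k e^{y\cdot\vartheta_k}$). For the long-horizon limits the paper fixes $t=0,y=0$, bounds $f_{k^*}(T)$ below by an explicit Gaussian ratio, and then must upper-bound $\EE[F(T,Z)^\gamma]$ with the \emph{correct leading constant}: it takes $\gamma=n/l$ rational, expands $F^n$ by the multinomial theorem, uses $(x_1+\dots+x_K)^{1/l}\le x_1^{1/l}+\dots+x_K^{1/l}$, controls the cross moments by a Lorentz/supermodularity (comonotone) inequality, shows every cross term has rate strictly below $\tfrac12\gamma(\gamma-1)\|\vartheta_{k^*}\|^2$, and finally passes from rational to general $\gamma$ by continuity. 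You never need the leading constant: your two concentration statements (of $\nu_T$ at the saddle $\gamma\vartheta_{k^*}$, and of the posterior weights at $\vartheta_{k^*}$ near that saddle) require only exponential rates, which your sandwich $\max_k p_kL_T(\vartheta_k,\cdot)\le F\le m\,\max_k p_kL_T(\vartheta_k,\cdot)$ indeed delivers; note that this lossy bound would \emph{not} suffice for the paper's constant-matching argument (it loses a factor $m^\gamma$ in the would-be limit of $f_{k^*}$), but in your scheme the loss is harmless because it is dominated by the exponential gaps, and the uniform posterior concentration on the region of mass does the rest. Your sign identity at the saddle is exactly the inequality that appears in the paper as $\|\vartheta^*\|^2\gamma(\gamma-1)<\|\vartheta_{k^*}\|^2\gamma(\gamma-1)$. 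What your route buys: no rational-$\gamma$ detour, no supermodularity lemma, and it works directly for general $t\ge0$ and $y$ (the paper proves $t=0,y=0$ and asserts the general case is similar), since the shift by $y$ and the variance $T-t$ versus $T$ only perturb constants, not rates. What it costs: the Laplace step must be written out carefully (uniform exponential smallness of the posterior weights of $j\neq k^*$ on a neighbourhood of the saddle, boundedness $\|\hat\Theta_T\|\le\max_k\|\vartheta_k\|$ off it, Gaussian tail bounds for $\nu_T$), which, as you say, is routine.
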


A proof of this theorem can be found in  Appendix \ref{app:sensi}. Recall again that the limiting case $\alpha\to 0$ corresponds to the logarithmic utility (cp.\ Rem.\ \ref{rem:Bayes_sol} b)) and  the unknown market price of risk $\Theta$ in the  Merton fraction is simply replaced by its expectation e). 
The same is true, when the time horizon is short, see a). For a large time horizon,  b), c) investors tend to extreme Merton fractions. In general the optimal investment fractions can be bounded by the entries in the Merton fractions, d). 

\begin{remark}
  The statements are substantially easier to formulate and prove in the case when we have only one stock. In this case part a), d) and e)  are special cases of Theorem 9 in \cite{rieder2005portfolio}, part b) and c) are Theorem 3.1/3.3 in \cite{bauerle2017extremal}. 
  Further sensitivity results in the one risky asset case can be found in \cite{longo2016learning}.
\end{remark}

\subsection{Representation for optimal investment fraction - Two-point prior}

Assume now further that $\mu=(\mu_1,\ldots,\mu_d)$ can take only two  possible (arbitrary) values $\bar\mu=(\bar\mu_1,\ldots ,\bar\mu_d)$ and $\underline{\mu}=(\underline{\mu}_1,\ldots,\underline{\mu}_d)$.  We assume that
\begin{equation}\label{eq:apriori_distributionmulti}
     \PP(\mu=\bar\mu)=p \text{ and } \PP(\mu=\underline\mu)=1-p, \quad p\in(0,1).
\end{equation}The optimization problem \eqref{eq:Bayesproblem} boils down to
\begin{equation}\label{eq:Bayesproblem_examplem}
V(x_0)= \sup_{\pi\in\mathcal{A}} \; p \EE_{\bar\mu} [u(X_T^\pi)] + (1-p) \EE_{\underline\mu} [u(X_T^\pi)].
\end{equation}
where $u(x)=\frac1\alpha x^{\alpha}.$ In this case the function $F$ appearing in \eqref{eq:FL} is given by  $ F(t,z)  = p L_t(\bar\vartheta,z)+(1-p) L_t(\underline\vartheta,z)$. Let us further denote
\begin{equation*}
    \hat F(t,T,Y(t)) :=\frac{ \int_{\R}
L_T(\underline\vartheta,z+Y(t))
 (F(T,z+Y(t)))^{\gamma-1}\varphi_{T-t}(z) dz}
 {\int_{\R} (F(T,z+Y(t))^{\gamma}\varphi_{T-t}(z) dz}.
\end{equation*}
Then we can express the solution of \eqref{eq:Bayesproblem_examplem} more explicitly.

\begin{lemma}\label{lem_opt_invmulti}
In the case of  a priori distribution of $\mu$  given by \eqref{eq:apriori_distributionmulti}, the optimal investment fractions  $ \kappa(t,T,Y(t)):=\frac{\pi^*(t)}{X^*(t)}$  of the Bayesian problem \eqref{eq:Bayesproblem_examplem} can be written as follows.

    \begin{align*}
\kappa(t,T,Y(t))
&=   \gamma (\sigma^\top)^{-1} \bar\vartheta\; \alpha(t,T,Y(t)) +   \gamma (\sigma^\top)^{-1} \underline\vartheta\; (1- \alpha(t,T,Y(t)))\\
&= \kappa^{\text{Mer}}(\gamma,\bar \vartheta)\; \alpha(t,T,Y(t)) +  \kappa^{\text{Mer}}(\gamma,\underline\vartheta)\; (1- \alpha(t,T,Y(t)))
\end{align*}
with $\alpha(t,T,Y(t))=1-(1-p) \hat F(t,T,Y(t))\in[0,1].$
\end{lemma}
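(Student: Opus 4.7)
The plan is to start from the general formula \eqref{eq:Bayesfraction} of Theorem \ref{theo:Bayes} and exploit the fact that the prior places mass only on two atoms so that $F$ and $\nabla F$ are very explicit.

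First, note that from $L_t(\vartheta,z)=\exp\bigl(z\cdot\vartheta-\tfrac12\|\vartheta\|^2 t\bigr)$ we have $\nabla_z L_t(\vartheta,z)=\vartheta\, L_t(\vartheta,z)$. Since $F(T,z)=p L_T(\bar\vartheta,z)+(1-p)L_T(\underline\vartheta,z)$, this yields
\begin{equation*}
\nabla F(T,z)=p\,\bar\vartheta\, L_T(\bar\vartheta,z)+(1-p)\,\underline\vartheta\, L_T(\underline\vartheta,z).
\end{equation*}
The crucial algebraic step is to split this gradient in the ``basis'' $\{\bar\vartheta,\underline\vartheta\}$ by writing $p L_T(\bar\vartheta,z)=F(T,z)-(1-p)L_T(\underline\vartheta,z)$, which gives
\begin{equation*}
\nabla F(T,z)=\bar\vartheta\bigl[F(T,z)-(1-p)L_T(\underline\vartheta,z)\bigr]+\underline\vartheta\,(1-p)L_T(\underline\vartheta,z).
\end{equation*}

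Next, I would substitute this decomposition into the numerator of \eqref{eq:Bayesfraction}, factor out $\bar\vartheta$ and $\underline\vartheta$, and divide through by $\int F(T,z+Y(t))^{\gamma}\varphi_{T-t}(z)\,dz$. The coefficient of $\underline\vartheta$ is immediately $(1-p)\hat F(t,T,Y(t))$ by the definition of $\hat F$, while the coefficient of $\bar\vartheta$ becomes
\begin{equation*}
\frac{\int\bigl[F(T,z+Y(t))-(1-p)L_T(\underline\vartheta,z+Y(t))\bigr]F(T,z+Y(t))^{\gamma-1}\varphi_{T-t}(z)\,dz}{\int F(T,z+Y(t))^{\gamma}\varphi_{T-t}(z)\,dz}=1-(1-p)\hat F(t,T,Y(t)).
\end{equation*}
Multiplying by $\gamma(\sigma^\top)^{-1}$ and recognising $\gamma(\sigma^\top)^{-1}\bar\vartheta=\kappa^{\text{Mer}}(\gamma,\bar\vartheta)$ and analogously for $\underline\vartheta$ yields the claimed convex-combination representation with weight $\alpha(t,T,Y(t))=1-(1-p)\hat F(t,T,Y(t))$.

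The only remaining point is the bound $\alpha(t,T,Y(t))\in[0,1]$, i.e.\ $0\le(1-p)\hat F(t,T,Y(t))\le 1$. Non-negativity is clear since $L_T$, $F^{\gamma-1}$ and $\varphi_{T-t}$ are all non-negative. For the upper bound I would use the identity
\begin{equation*}
F(T,z)^{\gamma}=p\,L_T(\bar\vartheta,z)F(T,z)^{\gamma-1}+(1-p)L_T(\underline\vartheta,z)F(T,z)^{\gamma-1},
\end{equation*}
which, after integration against $\varphi_{T-t}$ and division by $\int F^{\gamma}\varphi_{T-t}\,dz$, gives $1=\hat G+(1-p)\hat F$ with an analogously defined non-negative quantity $\hat G$; hence $(1-p)\hat F\le 1$. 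I do not anticipate a genuine obstacle here: the entire argument is a carefully arranged algebraic rewriting of \eqref{eq:Bayesfraction}, and the only thing one has to be careful about is the bookkeeping in the decomposition of $\nabla F$ and the verification that $\alpha$ really does lie in $[0,1]$.
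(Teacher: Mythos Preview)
Your proposal is correct and follows essentially the same route as the paper: both start from \eqref{eq:Bayesfraction}, compute $\nabla F$ for the two-point prior, rewrite it so that the coefficient of $\bar\vartheta$ involves $F$ and the coefficient of $\underline\vartheta$ involves $(1-p)L_T(\underline\vartheta,\cdot)$, and then read off the convex-combination weights. Your upper-bound argument via $1=\hat G+(1-p)\hat F$ with $\hat G\ge 0$ is equivalent to the paper's use of the pointwise inequality $(1-p)L_T(\underline\vartheta,\cdot)\le F(T,\cdot)$.
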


\begin{proof}
From Theorem \ref{theo:Bayes} we know that the optimal fraction is given by:
\begin{equation}
\kappa(t,T,Y(t))
=\gamma (\sigma^\top)^{-1}  \frac{ \int_{\R^d} \nabla F(T,z+Y(t))(F(T,z+Y(t))^{\gamma-1}\varphi_{T-t}(z) dz}{\int_{\R^d} (F(T,z+Y(t))^{\gamma}\varphi_{T-t}(z) dz}
\end{equation}
We obtain that
\begin{align*}
  \nabla F(t,z) & =  p \nabla L_t(\bar\vartheta,z)+(1-p) \nabla L_t(\underline\vartheta,z)\\
  & = p L_t(\bar\vartheta,z) \bar\vartheta+(1-p) L_t(\underline\vartheta,z) \underline\vartheta\\
    & = \bar\vartheta F(t,z) -(1-p) L_t(\underline\vartheta,z)(\bar\vartheta- \underline\vartheta)  
\end{align*}
which implies
    \begin{align*}
\kappa(t,T,Y(t))
& =  \gamma (\sigma^\top)^{-1} \bar\vartheta-
(1-p)\gamma (\sigma^\top)^{-1}  (\bar\vartheta-\underline\vartheta)\; \hat F(t,T,Y(t)).
\end{align*}
The stated representation is obtained from
    \begin{align*}
\kappa(t,T,Y(t))
& =  \gamma (\sigma^\top)^{-1} \bar\vartheta-
(1-p)\gamma (\sigma^\top)^{-1}  (\bar\vartheta-\underline\vartheta)\; \hat F(t,T,Y(t))\\
&=   \gamma (\sigma^\top)^{-1} \bar\vartheta \Big(1-(1-p) \hat F(t,T,Y(t))  \Big)+
\gamma (\sigma^\top)^{-1} \underline\vartheta (1-p) \hat F(t,T,Y(t)).
\end{align*}
What is left to prove is that $1-\alpha(t,T,Y(t)) = (1-p) \hat F(t,T,Y(t))\in[0,1].$ Non-negativity is clear. For the upper bound note that
$(1-p) L_T(\underline\vartheta,z+Y(t))\le F(T,z+Y(t)).$ 
\end{proof}

Thus, we can see  that the optimal fraction is always a convex combination between the two possible Merton fractions $ \gamma (\sigma^\top)^{-1} \bar\vartheta$ and $\gamma (\sigma^\top)^{-1} \underline\vartheta.$

\subsection{Pre-commitment}
We will also compare the optimal investment fraction $\kappa(t,T,Y(t))$ under learning with the optimal pre-commitment strategy without learning.
The optimal pre-commitment strategy is a constant, $\mathcal{F}_0^Y-$ measurable investment fraction $\kappa^{\text{pre}}_T$ which is defined  by the optimal constant investment fraction $\kappa$ which solves the optimization problem (\cite{bmbo2022})
\begin{equation}\label{eq:pre_example}
V^{\text{pre}}(x_0)= \sup_\pi \; p\; \EE_{\bar\mu} [u(X_T^\pi)] + (1-p) \EE_{\underline\mu} [u(X_T^\pi)] \text{ s.t. } \pi(t)=\kappa X(t).
\end{equation}
The expectation appearing in \eqref{eq:pre_example} can be computed explicitly and the first-order condition for the optimal $\kappa^{\text{pre}}_T$ can implicitly be written as
\begin{align*}
    \kappa^{\text{pre}}_T &= \gamma (\sigma^\top)^{-1} \bar\vartheta \Big( 
\frac{p e^{ \alpha \kappa^{\text{pre}}_T \cdot \bar\mu T} }{p e^{ \alpha \kappa^{\text{pre}}_T \cdot  \bar\mu T}+ (1-p)  e^{ \alpha \kappa^{\text{pre}}_T\cdot  \underline\mu T}} \Big) + \gamma (\sigma^\top)^{-1} \underline\vartheta \Big( 
\frac{(1-p) e^{ \alpha \kappa^{\text{pre}}_T \cdot \underline\mu T} }{p e^{ \alpha \kappa^{\text{pre}}_T \cdot  \bar\mu T}+ (1-p)  e^{ \alpha \kappa^{\text{pre}}_T\cdot  \underline\mu T}} \Big)\\
&=\kappa^{\text{Mer}}(\gamma,\bar \vartheta)\;  \Big( 
\frac{p e^{ \alpha \kappa^{\text{pre}}_T \cdot \bar\mu T} }{p e^{ \alpha \kappa^{\text{pre}}_T \cdot  \bar\mu T}+ (1-p)  e^{ \alpha \kappa^{\text{pre}}_T\cdot  \underline\mu T}} \Big) + \kappa^{\text{Mer}}(\gamma,\underline \vartheta)\;  \Big( 
\frac{(1-p) e^{ \alpha \kappa^{\text{pre}}_T \cdot \underline\mu T} }{p e^{ \alpha \kappa^{\text{pre}}_T \cdot  \bar\mu T}+ (1-p)  e^{ \alpha \kappa^{\text{pre}}_T\cdot  \underline\mu T}} \Big).
\end{align*} 
Thus, we can again write the optimal pre-commitment fraction as a convex combination of the two Merton fractions $ \kappa^{\text{Mer}}(\gamma,\bar \vartheta)$ and $\kappa^{\text{Mer}}(\gamma,\underline \vartheta).$
For the special case $d=1$ see  \cite{bmbo2022}, Prop. 2 and 3.
Obviously it holds $$\lim_{T\rightarrow 0}\kappa^{\text{pre}}_T= p \kappa^{\text{Mer}}(\gamma,\bar \vartheta) +(1-p) \kappa^{\text{Mer}}(\gamma,\underline \vartheta). $$
In case $d=1$ when $\bar\mu> \underline\mu$ and $\alpha >0$ we further obtain 
$$\lim_{T\rightarrow \infty}\kappa^{\text{pre}}_T=  \kappa^{\text{Mer}}(\gamma,\bar \vartheta).$$ 
These limiting results coincide with the limits in Theorem \ref{theo:sensitivity} where we allow for learning. 

\subsection{Impact of model ambiguity preferences}
Now we consider an investor who is concerned about model ambiguity, i.e. instead of problem \eqref{eq:Bayesproblem} we consider for a second utility function $v(x)=1/\lambda x^\lambda, 0<\lambda <1$ problem \eqref{eq:Aproblem}.
We treat the multi-asset case but with only two possible values $\bar\mu=(\bar\mu_1,\ldots ,\bar\mu_d)$ and $\underline{\mu}=(\underline{\mu}_1,\ldots,\underline{\mu}_d)$ for $\mu=(\mu_1,\ldots,\mu_d)$. The optimization problem is then
\begin{align}\label{prob:441}
&  \sup_{\pi\in\mathcal{A}} \left(\; p
\left(\EE_{\bar\mu} [(X_T^\pi)^{\alpha}]\right)^{\frac{\lambda}{\alpha}} + (1-p)
\left(\EE_{\underline\mu} [(X_T^\pi)^{\alpha}]\right)^{\frac{\lambda}{\alpha}}
\right)^{\frac{1}{\lambda}}.
\end{align}

\subsubsection{Probability adjustment  in the case of less ambiguity concerns}
Let us first consider the case of $0<\alpha<\lambda$ where $\mathbf{p} := \lambda/\alpha>1$. The optimization problem \eqref{prob:441} is then according to \eqref{eq:Aproblem4} and Theorem \ref{theo:Bayes} equivalent to

\begin{eqnarray}\nonumber
&&\sup_{\Q\in\mathfrak{Q}}\left\{ (q_1+q_2) x_0^\alpha \left( \int_{\R} \big(\frac{q_1}{q_1+q_2} L_T(\bar\vartheta,z)+\frac{q_2}{q_1+q_2} L_T(\underline{\vartheta},z)  \big) ^\gamma \varphi_T(z) dz\right)^{1/\gamma}\right\}\\
&=&  x_0^\alpha \sup_{\Q\in\mathfrak{Q}}\left\{ \left( \int_{\R^d} \big(q_1 L_T(\bar\vartheta,z)+q_2 L_T(\underline{\vartheta},z)  \big) ^\gamma \varphi_T(z) dz\right)^{1/\gamma}\right\}.
\end{eqnarray}
Since $\gamma >0$ it is enough to solve
\begin{equation}\label{prob:Qonly1}
  \sup_{\Q\in\mathfrak{Q}}\int_{\R^d} \big(q_1 L_T(\bar\vartheta,z)+q_2 L_T(\underline{\vartheta},z)  \big) ^\gamma \varphi_T(z) dz.  
\end{equation}
The solution of this problem can be summarized  as follows (the same solution is obtained for $\lambda<\alpha<0$):

\begin{lemma}\label{lem:optimaldistorsion1}
Let  $0<\alpha<\lambda$ or $\lambda<\alpha<0$,  thus $\mathbf{p} := \lambda/\alpha>1$. An optimal solution $(q_1^*,h(q_1^*))$ of \eqref{prob:Qonly1} is obtained by 
\begin{eqnarray*}
   &&  \sup_{0\leq q_1\leq q_1^{\text{b}}}
     \left\{\int_{\R^d} \big(q_1 L_T(\bar\vartheta,z)+h(q_1) L_T(\underline{\vartheta},z)  \big) ^\gamma \varphi_T(z) dz  \right\}\\
     && h(x):=  \left(
\frac{1- x^\mathbf{q} \; p^{1-\mathbf{q}}}{(1-p)^{1-\mathbf{q}}}
\right)^{\frac{1}{\mathbf{q}}}
\end{eqnarray*}
with $q_1^{\text{b}}:= p^{1/\mathbf{p}}.$
\end{lemma}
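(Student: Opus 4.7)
The plan is to apply the dual characterization from Lemma \ref{lem:duality_1} directly to the two-point prior and reduce the two-dimensional constrained optimization over $\mathfrak{Q}$ to a one-dimensional optimization by eliminating one of the weights using the binding constraint. Since $\PP$ is supported on $\{\bar\vartheta,\underline\vartheta\}$ with masses $p$ and $1-p$, any $\Q\in\mathfrak{Q}$ is parametrized by the nonnegative reals $q_1 := \Q(\{\bar\vartheta\})$, $q_2 := \Q(\{\underline\vartheta\})$. The Radon–Nikodym derivative $d\Q/d\PP$ takes values $q_1/p$ and $q_2/(1-p)$ with $\PP$-probabilities $p$ and $1-p$, so the constraint $\|d\Q/d\PP\|_{\mathbf{q}}\le 1$ of Lemma \ref{lem:duality_1} becomes
$$q_1^{\mathbf{q}} p^{1-\mathbf{q}}+q_2^{\mathbf{q}}(1-p)^{1-\mathbf{q}}\le 1,$$
using the finite-sum instance \eqref{eq:dual:discrete}. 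Note that in both regimes $0<\alpha<\lambda$ and $\lambda<\alpha<0$ we have $\mathbf{p}=\lambda/\alpha>1$, hence also $\mathbf{q}>1$, so Lemma \ref{lem:duality_1} applies uniformly.

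Next I would show that the constraint is active at any optimum. Since $\gamma=1/(1-\alpha)>0$ (both for $\alpha\in(0,1)$ and $\alpha<0$) and the integrand $(q_1 L_T(\bar\vartheta,z)+q_2 L_T(\underline\vartheta,z))^\gamma$ is componentwise nondecreasing in $(q_1,q_2)\ge 0$, strict interior feasibility implies suboptimality: for any strictly feasible $(q_1,q_2)$ the rescaling $(tq_1,tq_2)$ satisfies the constraint for all $t\le t_{\max}$ where $t_{\max}^{\mathbf{q}}(q_1^{\mathbf{q}}p^{1-\mathbf{q}}+q_2^{\mathbf{q}}(1-p)^{1-\mathbf{q}})=1$, while multiplying the objective by $t^\gamma>1$. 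Hence the optimum must lie on the boundary where equality holds.

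On the binding manifold, solving for $q_2\ge 0$ gives exactly $q_2=h(q_1)$ with $h$ as stated. The non-negativity requirement $1-q_1^{\mathbf{q}} p^{1-\mathbf{q}}\ge 0$ rearranges to $q_1\le p^{(\mathbf{q}-1)/\mathbf{q}}=p^{1/\mathbf{p}}=q_1^{\text{b}}$, which is the claimed upper bound; one easily checks $h(0)=(1-p)^{1/\mathbf{p}}$ and $h(q_1^{\text{b}})=0$. Substituting $q_2=h(q_1)$ into \eqref{prob:Qonly1} yields the reduced one-variable program over the compact interval $[0,q_1^{\text{b}}]$, and existence of an optimizer follows from continuity of the integrand in $q_1$ (with dominated convergence justifying the exchange with the $z$-integral). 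I expect the only real bookkeeping point to be the sign-tracking for the regime $\lambda<\alpha<0$: although the primal formulation \eqref{eq:Aproblem2} is an infimum there, one still has $\mathbf{p}>1$ and the same dual identity reduces the $\Q$-side to \eqref{prob:Qonly1}, so the identical reduction applies without modification.
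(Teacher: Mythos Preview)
Your proposal is correct and follows essentially the same route as the paper: parametrize $\Q$ by $(q_1,q_2)\ge 0$, observe that positivity of the integrand forces the $\ell^{\mathbf{q}}$-constraint to bind at the optimum, then solve the equality for $q_2=h(q_1)$ and read off the range $q_1\in[0,p^{1/\mathbf{p}}]$. Your scaling argument for the binding constraint and your explicit derivation of $q_1^{\text{b}}=p^{(\mathbf{q}-1)/\mathbf{q}}=p^{1/\mathbf{p}}$ are in fact cleaner than the paper's version (which contains a typo, writing $p^{1/\mathbf{q}}$ in the boundedness remark), and your brief remark on the $\lambda<\alpha<0$ regime is an appropriate addition.
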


\begin{proof}
Note first that the set $\mathfrak{Q}$ is bounded. Indeed, when we set $q_2=0$, the maximal value of $q_1$ is given by
$p^{1/\mathbf{q}}.$ Moreover, since both $L_T(\bar\vartheta,z)$ and $L_T(\underline\vartheta,z)$ are positive, the optimal pair $(q_1,q_2)$ will satisfy the constraint with equality, i.e.
\begin{align*}
  \left(\frac{q_1}{p}\right)^\mathbf{q} p+ \left(\frac{q_2}{1-p}\right)^\mathbf{q} (1-p) & = 1.
\end{align*}
Solving this equation for $q_2$ yields the function $h$.
\end{proof}

The optimal solution $(q_1^*,q_2^*)$ in Lemma \ref{lem:optimaldistorsion1} then determines the prior distribution $(q_1^*/(q_1^*+q_2^*),q_2^*/(q_1^*+q_2^*))$ which has to be used for the Bayesian problem in order to solve problem \eqref{prob:441}.

\begin{example}
To make things more explicit consider the case that the initial prior distribution is uniform on the two outcomes, i.e.\ $p=\frac12$ and that $0<\alpha<\lambda$ with $\alpha=0.5.$ This implies that $\gamma =1/(1-\alpha)=2.$
In this special case we obtain:
\begin{eqnarray*}
  &&\sup_{\Q\in\mathfrak{Q}}\int_{\R^d} \big(q_1 L_T(\bar\vartheta,z)+q_2 L_T(\underline{\vartheta},z)  \big) ^\gamma \varphi_T(z) dz\\
  &=& 2^{1/\mathbf{q}-1} \sup_{0\le \tilde y\le 1}\int_{\R^d} \big(\tilde{y}^{1/\mathbf{q}} L_T(\bar\vartheta,z)+ (1-\tilde{y})^{1/\mathbf{q}} L_T(\underline{\vartheta},z)  \big)^2 \varphi_T(z) dz\\
  &=& 2^{1/\mathbf{q}-1} \sup_{0\le \tilde y\le 1}   \big(\tilde{y}^{2/\mathbf{q}} \exp{(T\|\bar\vartheta\|^2)}+ (1-\tilde{y})^{2/\mathbf{q}} \exp{(T\|\underline{\vartheta}\|^2)} + 2\tilde{y}^{1/\mathbf{q}}(1-\tilde{y})^{1/\mathbf{q}} \exp{(T\underline{\vartheta}^\top\bar\vartheta)}    \big).
\end{eqnarray*}
The corresponding optimal prior distribution will depend on $T,\bar\vartheta, \underline{\vartheta}. $ It is easy to see that for small time horizon $T$, the optimal prior distribution is again close to the uniform distribution.  For very large $T$, the optimal prior distribution will have large mass on the larger of the two outcomes $\|\bar\vartheta\|^2, \|\underline\vartheta\|^2.$ However, the optimal prior distribution will always be in $(0,1).$ This can easily be seen by inspecting the derivatives of the function to maximize at $\tilde y=0$ and $\tilde y=1.$ 
\end{example}

\subsubsection{Probability adjustment in the case of more ambiguity concerns}
Let us next consider the case of $0<\lambda<\alpha$ where $\mathbf{p} := \lambda/\alpha<1$. Recall that in this case $\mathbf{q}<0.$ The optimization problem \eqref{prob:441} is then according to \eqref{eq:Aproblem4} and Theorem \ref{theo:Bayes} equivalent to

\begin{eqnarray}\nonumber
&&  x_0^\alpha \inf_{\Q\in\mathfrak{Q}}\left\{ \left( \int_{\R^d} \big(q_1 L_T(\bar\vartheta,z)+q_2 L_T(\underline{\vartheta},z)  \big) ^\gamma \varphi_T(z) dz\right)^{1/\gamma}\right\}.
\end{eqnarray}
Since $\gamma >0$ it is enough to solve
\begin{equation}\label{prob:Qonly2}
  \inf_{\Q\in\mathfrak{Q}}\int_{\R^d} \big(q_1 L_T(\bar\vartheta,z)+q_2 L_T(\underline{\vartheta},z)  \big) ^\gamma \varphi_T(z) dz.  
\end{equation}
The solution of this problem can be summarized  as follows. The proof is similar to the proof of Lemma \ref{lem:optimaldistorsion1} and we skip it here (the same solution is obtained for $\alpha<\lambda<0$).

\begin{lemma}\label{lem:optimaldistorsion2}
Let  $0<\lambda<\alpha$ or $\alpha<\lambda<0$, thus $\mathbf{p} := \lambda/\alpha<1$. An optimal solution $(q_1^*,h(q_1^*))$ of \eqref{prob:Qonly2} is obtained by 
\begin{eqnarray*}
   &&  \inf_{ q_1\geq q_1^{\text{b}}}
     \left\{\int_{\R^d} \big(q_1 L_T(\bar\vartheta,z)+h(q_1) L_T(\underline{\vartheta},z)  \big) ^\gamma \varphi_T(z) dz  \right\}
\end{eqnarray*}
with $q_1^{\text{b}}$ and $h$ as in Lemma \ref{lem:optimaldistorsion1}.
\end{lemma}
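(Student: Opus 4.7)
The plan is to mirror the proof of Lemma \ref{lem:optimaldistorsion1}, with extra care for signs since here $\mathbf{q}<0$. First, I would parametrize any $\Q\ll\PP$ by $q_1:=\Q(\{\bar\vartheta\})\ge 0$ and $q_2:=\Q(\{\underline\vartheta\})\ge 0$. The defining constraint of $\mathfrak{Q}'$ from Lemma \ref{lem:duality2}, namely $\big(\int(d\Q/d\PP)^\mathbf{q} d\PP\big)^{1/\mathbf{q}}\ge 1$, involves raising to the negative power $1/\mathbf{q}$, so it is equivalent (since $x\mapsto x^{1/\mathbf{q}}$ is strictly decreasing on $(0,\infty)$) to
\begin{equation*}
p^{1-\mathbf{q}}\, q_1^\mathbf{q} + (1-p)^{1-\mathbf{q}}\, q_2^\mathbf{q} \le 1.
\end{equation*}

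Next, I would argue that this constraint must bind at any optimizer. Let $I(q_1,q_2):=\int_{\R^d}(q_1 L_T(\bar\vartheta,z)+q_2 L_T(\underline\vartheta,z))^\gamma \varphi_T(z)\,dz$. Since $L_T>0$ and $\gamma>0$, the objective $I$ is strictly increasing in each $q_i$, so minimization would prefer small values. Conversely, the left-hand side of the constraint is strictly \emph{decreasing} in each $q_i$ (as $\mathbf{q}<0$), so the feasibility constraint only imposes lower bounds on $(q_1,q_2)$. If the constraint were strict at a candidate pair, one could decrease $q_1$ slightly and strictly lower $I$ without losing feasibility, contradicting optimality. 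Hence the infimum is attained on the surface $p^{1-\mathbf{q}}q_1^\mathbf{q}+(1-p)^{1-\mathbf{q}}q_2^\mathbf{q}=1$.

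Solving this equality for $q_2$ reproduces exactly the function $h$ from Lemma \ref{lem:optimaldistorsion1}:
\begin{equation*}
q_2=\left(\frac{1-q_1^\mathbf{q}\, p^{1-\mathbf{q}}}{(1-p)^{1-\mathbf{q}}}\right)^{1/\mathbf{q}}=h(q_1).
\end{equation*}
For $h(q_1)$ to be a well-defined non-negative real the bracket must be positive, i.e.\ $q_1^\mathbf{q}\le p^{\mathbf{q}-1}$. Since $x\mapsto x^\mathbf{q}$ is strictly decreasing for $\mathbf{q}<0$, this inverts to $q_1\ge p^{(\mathbf{q}-1)/\mathbf{q}}=p^{1/\mathbf{p}}=q_1^{\text{b}}$, using the conjugacy $1/\mathbf{p}+1/\mathbf{q}=1$. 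Hence the original infimum collapses to the one-dimensional problem $\inf_{q_1\ge q_1^{\text{b}}}I(q_1,h(q_1))$, as claimed.

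Existence of a minimizer is already guaranteed by Lemma \ref{lem:duality2}, but one can see it directly: $h(q_1)\to\infty$ as $q_1\downarrow q_1^{\text{b}}$ (so $I\to\infty$), while $h(q_1)\to(1-p)^{1/\mathbf{p}}$ as $q_1\to\infty$ (and then $I\to\infty$ through the first argument), so continuity of $q_1\mapsto I(q_1,h(q_1))$ on $[q_1^{\text{b}},\infty)$ forces the infimum to be attained on a compact subinterval. The main subtlety in this plan is the careful bookkeeping of the sign of $\mathbf{q}$ when manipulating both the constraint and the bound on $q_1$; the case $\alpha<\lambda<0$ is entirely analogous since one still has $\mathbf{p}<1$ and $\mathbf{q}<0$, so the same algebra goes through unchanged.
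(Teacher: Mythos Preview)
Your proposal is correct and follows essentially the same route the paper intends (the paper merely says the proof is similar to Lemma~\ref{lem:optimaldistorsion1} and skips it): parametrize $\Q$ by $(q_1,q_2)$, argue the constraint binds by monotonicity, and solve for $q_2=h(q_1)$. Your additional bookkeeping of the sign reversal from $\mathbf{q}<0$ (so that the constraint becomes $p^{1-\mathbf{q}}q_1^\mathbf{q}+(1-p)^{1-\mathbf{q}}q_2^\mathbf{q}\le 1$ and the admissible range becomes $q_1\ge q_1^{\text{b}}$ rather than $q_1\le q_1^{\text{b}}$) is exactly the care the paper alludes to, and your coercivity remark for existence is a useful complement to the appeal to Lemma~\ref{lem:duality2}.
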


\begin{example}
We can again consider the case that the initial prior distribution is uniform on the two outcomes, i.e.\ $p=\frac12$ and that $0<\lambda<\alpha$ with $\alpha=0.5.$ This implies that $\gamma =1/(1-\alpha)=2.$
In this special case we obtain:
\begin{eqnarray*}
  && \inf_{ q_1\geq q_1^{\text{b}}} \int_{\R^d} \big(q_1 L_T(\bar\vartheta,z)+q_2 L_T(\underline{\vartheta},z)  \big) ^\gamma \varphi_T(z) dz\\
  &=& 2^{1/\mathbf{q}-1} \inf_{0\le \tilde y\le 1}   \big(\tilde{y}^{2/\mathbf{q}} \exp{(T\|\bar\vartheta\|^2)}+ (1-\tilde{y})^{2/\mathbf{q}} \exp{(T\|\underline{\vartheta}\|^2)} + 2\tilde{y}^{1/\mathbf{q}}(1-\tilde{y})^{1/\mathbf{q}} \exp{(T\underline{\vartheta}^\top\bar\vartheta)}    \big).
\end{eqnarray*}
Again the optimal prior distribution will always be in $(0,1)$ since the function tends to $+\infty$ at the boundary. According to Theorem \ref{theo:sensitivity} b) this implies that for a very large time horizon the optimal investment fraction is hardly influenced by model ambiguity.
\end{example}

\subsubsection{Probability adjustment in the case of much less ambiguity concerns}
Let us finally consider the case of $0<\lambda$ and $\alpha<0$ where $\mathbf{p} := \lambda/\alpha<0$. Recall that in this case $0<\mathbf{q}<1.$ The optimization problem \eqref{prob:441} is then according to \eqref{eq:Aproblem4} and Theorem \ref{theo:Bayes} equivalent to

\begin{eqnarray}\nonumber
&&  x_0^\alpha \inf_{\Q\in\mathfrak{Q}'}\left\{ \left( \int_{\R^d} \big(q_1 L_T(\bar\vartheta,z)+q_2 L_T(\underline{\vartheta},z)  \big) ^\gamma \varphi_T(z) dz\right)^{1/\gamma}\right\}.
\end{eqnarray}
Since $\gamma >0$ it is enough to solve
\begin{equation}\label{prob:Qonly3}
  \inf_{\Q\in\mathfrak{Q}'}\int_{\R^d} \big(q_1 L_T(\bar\vartheta,z)+q_2 L_T(\underline{\vartheta},z)  \big) ^\gamma \varphi_T(z) dz.  
\end{equation}
The solution of this problem can be summarized  as follows. The proof is similar to the proof of Lemma \ref{lem:optimaldistorsion1} and we skip it here (the same solution is obtained for $\alpha>0,\lambda<0$).

\begin{lemma}\label{lem:optimaldistorsion3}
Let  $\alpha<0<\lambda$ or $\lambda<0<\alpha$, thus $\mathbf{p} := \lambda/\alpha<0$. An optimal solution $(q_1^*,h(q_1^*))$ of \eqref{prob:Qonly3} is obtained by 
\begin{eqnarray*}
   &&  \inf_{ 0\le q_1\leq q_1^{\text{b}}}
     \left\{\int_{\R^d} \big(q_1 L_T(\bar\vartheta,z)+h(q_1) L_T(\underline{\vartheta},z)  \big) ^\gamma \varphi_T(z) dz  \right\}
\end{eqnarray*}
with $q_1^{\text{b}}$ and $h$ as in Lemma \ref{lem:optimaldistorsion1}.
\end{lemma}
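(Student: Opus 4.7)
The plan is to mirror the proof of Lemma \ref{lem:optimaldistorsion1}, adjusting for the fact that we now minimize rather than maximize and that $\mathfrak{Q}'$ is defined by the reversed inequality from Lemma \ref{lem:duality2}. Since $\mathbf{p}<0$, the dual exponent $\mathbf{q}$ defined by $1/\mathbf{p}+1/\mathbf{q}=1$ satisfies $0<\mathbf{q}<1$. For a two-point prior, the membership $\Q\in\mathfrak{Q}'$ becomes $(q_1/p)^{\mathbf{q}} p + (q_2/(1-p))^{\mathbf{q}} (1-p)\ge 1$ with $q_1,q_2\ge 0$.

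First I will argue that the constraint is active at any optimum. Because $\alpha<1$ we have $\gamma=1/(1-\alpha)>0$, and $L_T(\bar\vartheta,z),L_T(\underline\vartheta,z)>0$, so the integrand $\big(q_1 L_T(\bar\vartheta,z)+q_2 L_T(\underline\vartheta,z)\big)^\gamma$ is strictly increasing in each of $q_1,q_2$, and hence so is the objective in \eqref{prob:Qonly3}. Consequently any interior-feasible pair can be reduced coordinatewise while staying feasible, which strictly lowers the objective. At an optimum therefore
\[
(q_1/p)^{\mathbf{q}} p + (q_2/(1-p))^{\mathbf{q}} (1-p) = 1.
\]

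Next I will solve this binding equation for $q_2$. The algebra is identical to Lemma \ref{lem:optimaldistorsion1} and produces exactly $q_2=h(q_1)$ with the same $h$. The requirement $h(q_1)\ge 0$ is equivalent to $q_1^{\mathbf{q}} p^{1-\mathbf{q}}\le 1$, i.e.\ $q_1\le p^{(\mathbf{q}-1)/\mathbf{q}}=p^{1/\mathbf{p}}=q_1^{\text{b}}$, where I am using $0<\mathbf{q}<1$. To see that no improvement is missed outside this range, note that if $q_1>q_1^{\text{b}}$ then $(q_1/p)^{\mathbf{q}} p\ge 1$ already, so any $q_2\ge 0$ is feasible; monotonicity selects $q_2=0$, and the objective reduces to $q_1^\gamma \int L_T(\bar\vartheta,z)^\gamma \varphi_T(z)\,dz$, a strictly increasing function of $q_1$ on $[q_1^{\text{b}},\infty)$, minimized at the endpoint $q_1^{\text{b}}$, which itself belongs to the parametrized curve. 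Hence the infimum over $\mathfrak{Q}'$ coincides with the infimum along $\{(q_1,h(q_1)):0\le q_1\le q_1^{\text{b}}\}$; existence of $q_1^*$ then follows from continuity of $h$ and of the integral in $q_1$ over the compact interval $[0,q_1^{\text{b}}]$.

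The main obstacle is not conceptual but bookkeeping: one must check that the signs of $\mathbf{q}$ and of the exponents inside $h$ are handled correctly so that the admissible range of $q_1$ really is $[0,q_1^{\text{b}}]$, and that the excluded region $q_1>q_1^{\text{b}}$ cannot be minimizing. The twin case $\lambda<0<\alpha$ (for which $0<\mathbf{q}<1$ still holds) goes through by the same three steps without modification, since only $\gamma>0$ and positivity of $L_T$ were used.
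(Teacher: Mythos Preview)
Your proof is correct and follows precisely the approach the paper indicates: it explicitly says the argument is ``similar to the proof of Lemma \ref{lem:optimaldistorsion1}'' and skips the details, which is exactly the route you take. Your extra care in checking that the region $q_1>q_1^{\text{b}}$ cannot host a minimizer and in invoking compactness for existence goes somewhat beyond what the paper writes out even for Lemma \ref{lem:optimaldistorsion1}, but the underlying idea is the same.
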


\section{Numerical illustration and economic discussion}\label{sec_numerics}
The following discussions and numerical illustrations refer to the case of one risky asset with a   two point prior distribution where only two drift scenarios $\underline\mu$ and $\overline\mu$ ($0<\underline\mu\leq\overline\mu$) are possible. The market price of risk is thus positive in both scenarios ($0<\underline\vartheta<\overline\vartheta$).\footnote{From a technical point of view, 
 the results are reversed in the case of a negative market
price of risk (cf. \cite{longo2016learning}).}
Along the lines of Eqn. (\ref{eq:apriori_distributionmulti}), $p$ denotes the prior probability for the higher drift $\overline\mu$ (good or upper drift scenario) such that the prior probability for the lower drift scenario $\underline\mu$ is given by $1-p$. In addition, we refer to the initial date  $t=0$ where $Y(t)=0$.\footnote{The discussion can also be extended to future dates $t>0$ where $Y(t)\neq 0$. Using the log-investor as a benchmark then implies   that the  prior distribution is replaced by the 'updated' distribution.}

We consider intuitive explanations of the previous results and shed further light on    the impact of the two sources of risk on the optimal investment strategy. Along the lines of Eqn. (\ref{eq:Aproblem}), the strategy is obtained by maximizing the expected utility over a double risk situation
 where the two sources of risk are evaluated with different utility functions $u$ and $v$.
Our economic  reasoning is first based on the classical Bayesian problem which coincides with the special  case that the two utility functions  $v$ and $u$, are identical ($\lambda=\alpha$, respectively). 
Since the general case can be captured by modifying the probability $p$ towards  $q_1^*/(q_1^*+q_2^*)$ (cf. Lemma \ref{lem:optimaldistorsion1}, \ref{lem:optimaldistorsion2} and \ref{lem:optimaldistorsion3}), 
all sensitivities can be explained by the probability adjustment and the sensitivities of the classical setup in $p$.
Thus, 
we first discuss and illustrate the Bayesian case ($\lambda=\alpha$) and consider the impact of $\lambda\neq\alpha$ on the prior probability adjustment subsequently.  
In addition, we comment on the implications of pre-commitment instead of learning.
 
In what follows the classical Merton problem with a constant  market price of risk $\vartheta$ (known drift, respectively) is referred to as the {\em inner risk situation}. Here, the optimal investment fraction $\kappa^{\text{Mer}}(\gamma,\vartheta)= \gamma \frac{\vartheta}{\sigma}$ is constant (it does not depend on the investment horizon $T$). It is increasing in the market price of risk $\vartheta$ and decreasing in the level of relative risk aversion $\frac{1}{\gamma} =1-\alpha$. 
In contrast to the inner risk, we refer to the  probability distribution over the market price of risk as the {\em outer risk situation}. The impact of this outer risk on the optimal investment strategy is more demanding since, in general, it depends on the investment horizon $T$. An exception is the log-investor (an investor with a log utility function) who can be used as an intuitive benchmark for other investors (more or less risk averse investors).  

If not mentioned otherwise, we refer to the three benchmark parameter constellation summarized in Table \ref{tab_bench_NA}.

\begin{table}[]
    \centering
    \begin{tabular}{|c|c|c|c|}\hline
    \multicolumn{4}{|c|}{model parameter}\\\hline
 $\underline\mu$     & $\overline\mu$ & $\sigma$ & p \\
0.03 & 0.09 & 0.15 & 0.5\\ \hline
\end{tabular}
 \begin{tabular}{|c|c|c|}\hline
    \multicolumn{3}{|c|}{level of relative risk aversion $\frac{1}{\gamma}=1-\alpha$}\\\hline
 less than log-investor    & log-investor & more than log-investor \\
$1/\gamma=0.5$ & $1/\gamma=1$ & $1/\gamma=2$\\ \hline
\end{tabular}
    \caption{Benchmark parameter setup (classical Bayesian case). If not otherwise mentioned, the investment horizon is $T=10$ years.}
    \label{tab_bench_NA}
\end{table}

\subsection{Sensitivities in the classical Bayesian case}\label{subsec_disc_a}
\begin{figure}[tb]
	\begin{center}
			{\bf{Weight $g(\alpha,p, T,\underline\vartheta,\overline\vartheta)$ on lower Merton investment fraction}}
		\end{center}
	\begin{center}
			\includegraphics[width=0.45\textwidth]{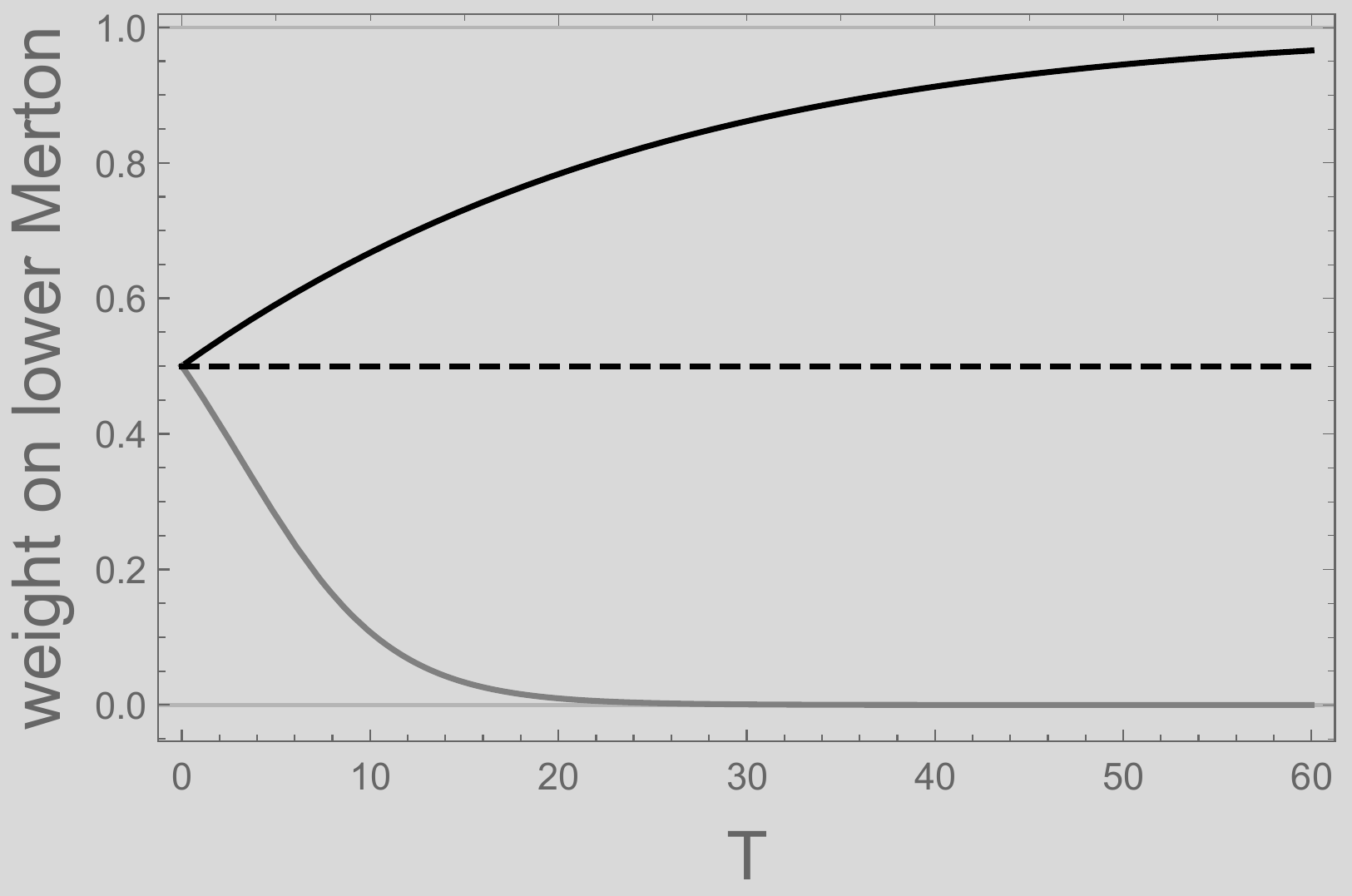}	\includegraphics[width=0.45\textwidth]{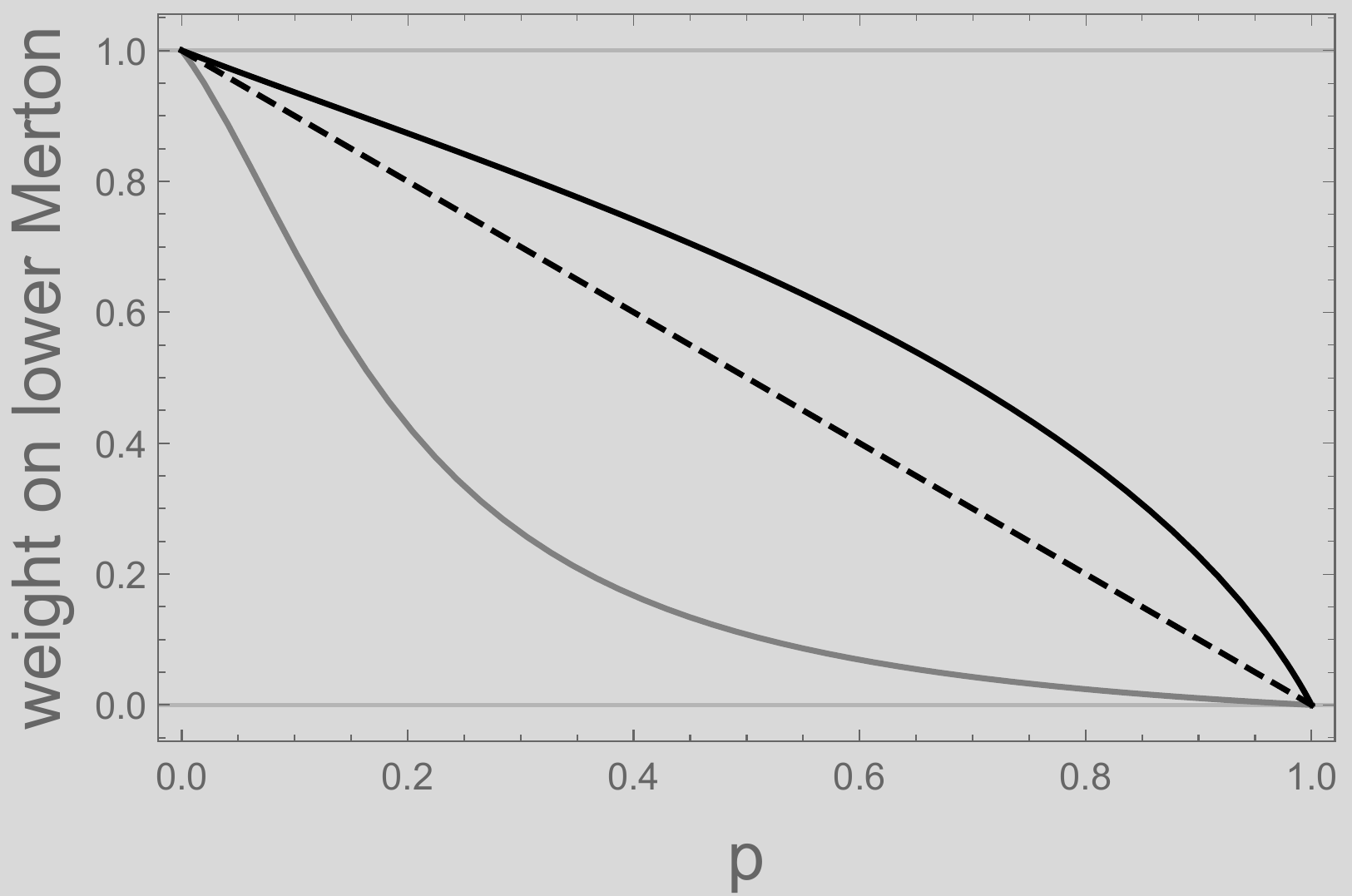}
			\includegraphics[width=0.45\textwidth]{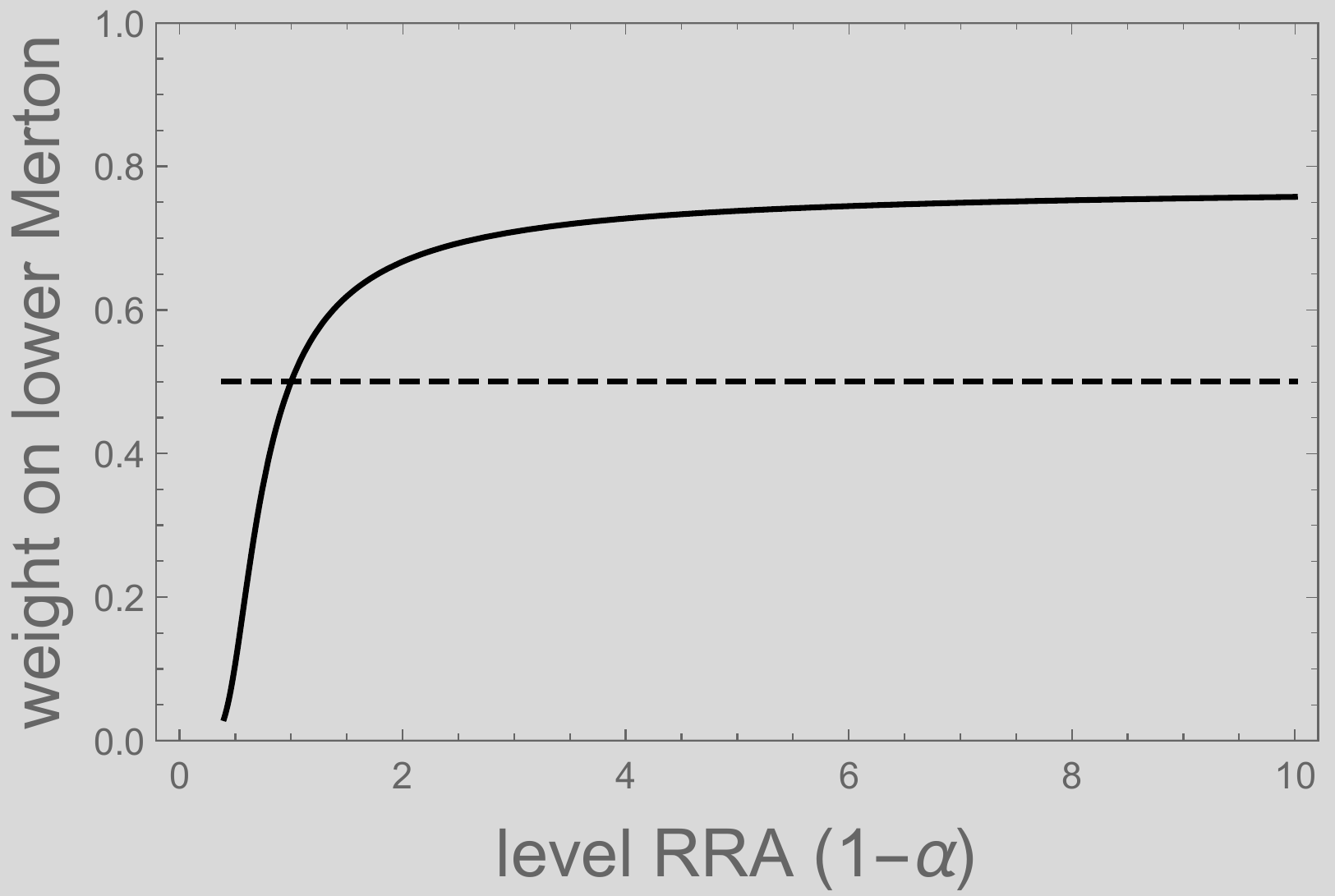}
			\includegraphics[width=0.45\textwidth]{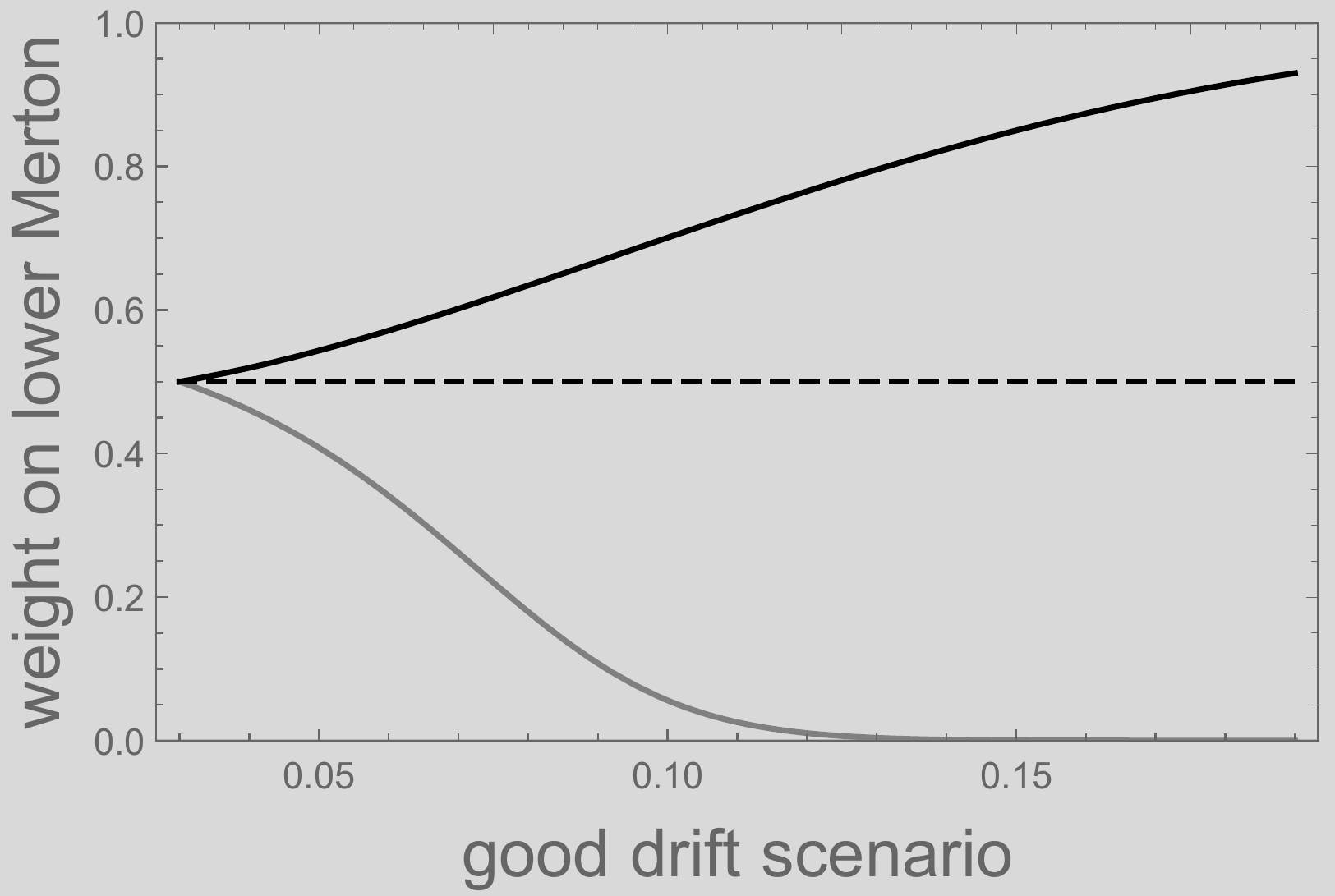}
\end{center}		
	\caption{The figure is based on the benchmark parameter setup of Table \ref{tab_bench_NA} but considers the dependence on the investment horizon (upper left plot), the probability for the good drift scenario $p$ (upper right plot), the level of relative risk aversion $1-\alpha$ (lower left plot), and the parameter for the good drift scenario $\overline\mu$ (lower right plot). 
	The plots depict the weight on the lower Merton fraction in the optimal investment strategy. In addition, each figure contains the  log-investor (dashed line) and the more (black line) and less (gray line) risk averse investor than the log-investor.}\vspace*{0.15cm}
	\label{Fig_new_a}
\end{figure}

If $u$ and $v$ coincide, Lemma \ref{lem_opt_invmulti} immediately separates the impact of the inner and outer risk on the optimal investment fraction.
The optimal myopic investment fraction  of all investors (all levels of relative risk aversion $1-\alpha$), is given by the convex combination of the Merton fractions of the inner risk situations (good and bad scenario). Thus, the impact of the inner risk is simply as in the classical Merton problem while the impact of the outer risk can be stated by the weight on the good (or bad scenario).
Along the lines of Lemma \ref{lem_opt_invmulti}, the weight on the lower Merton solution (the bad scenario)
is given by 
\begin{align*}
1-\alpha(t,T,Y(t))=(1-p) \hat F(t,T,Y(t))
\end{align*}
 where 
\begin{align*}
    \hat F(t,T,Y(t)) & :=\frac{ \int_{\R}
L_T(\underline\vartheta,z+Y(t))
 (F(T,z+Y(t)))^{\gamma-1}\varphi_{T-t}(z) dz}
 {\int_{\R} (F(T,z+Y(t))^{\gamma}\varphi_{T-t}(z) dz}.
\end{align*}
Notice that 
\begin{align*}
L_T\left(\vartheta,z+Y(t)\right)
 & = L_t\left(\vartheta,Y(t)\right)L_{T-t}\left(\vartheta,z\right).
\end{align*}
This hints at two (competing) effects: one is due to {\it{learning}} (i.e. the observation of $Y(t)$ at time $t$), the other effect is  caused by the remaining investment horizon $T-t$. The second effect is called time-inconsistency and vanishes for the log-investor ($\alpha\rightarrow 0$, $\gamma\rightarrow 1$, respectively), i.e.
$ \hat F(t,T,Y(t))= \frac{L_t(\underline\vartheta,Y_t)}{F(t,Y_t)}$.\footnote{It is well understood that the problem of time inconsistency naturally arises if one aggregates utilities (cf. e.g. \cite{desmettre2021optimal}).} We first discuss the second effect and consider the special case $t=0$ ($Y(t)=0$) where

\begin{align*}
g(\alpha,p, T,\underline\vartheta,\overline\vartheta) :=   
1-\alpha(0,T,0)=(1-p) \hat F(0,T,0).
\end{align*}
\subsubsection{Benchmark log-investor}
Formally, the justification that the log-investor $\alpha\rightarrow 0$ (with a level of relative risk aversion $1-\alpha$ equal to 1) defines a convenient benchmark is given by 
Theorem \ref{theo:sensitivity} e), i.e.
\begin{align*}
\lim_{\alpha\rightarrow 0}  g(\alpha,p, T,\underline\vartheta,\overline\vartheta)=\lim_{T\rightarrow 0}  g(\alpha,p, T,\underline\vartheta,\overline\vartheta)
=1-p.
\end{align*}
Observe that the log-investor  behaves in a myopic way.  
In the short run ($T\rightarrow 0$), the impact of the outer risk (weight on the bad scenario/lower Merton solution)   
is given by the prior probability $1-p$.
While the within regime Merton fractions depend on the individual levels of risk aversion, the myopic solution of the outer risk is {\it{risk neutral}} in the sense that it is given by the expected value (under the prior distribution) of the within regime Merton fractions. 
Obviously, for $T\rightarrow 0$ there is no hedging demand.
For the log-investor, this is also true for investment horizons $T>0$ (cf. Figure \ref{Fig_new_a} upper left plot).
In addition, the log-investor gives an important distinction: 
an investor who is less (more) risk averse uses a lower (higher) weight on the lower Merton fraction (bad regime), cf. Figure \ref{Fig_new_a} (lower left plot).\footnote{  \cite{rieder2005portfolio} Theorem 9 d).}
Economically, this is explained by the trade-off between speculating on the better regime (and following the optimal strategy for the good regime) and hedging against the worse regime (and following the optimal strategy for the bad regime). While the log-investor acts neutral, the hedging (speculation) motive dominates for the  more (less) risk averse investor. 

Intuitively, the larger the difference between the two regimes, the stronger is the hedging (speculating) motive, and the more the optimal strategy moves towards the worst(best)-case strategy. An illustration is given in Figure \ref{Fig_new_a} (lower right plot) which depicts the weight on the lower Merton solution for varying $\overline\mu$. The higher $\overline\mu$ (the higher the difference between the regimes), the larger is the impact of the hedging (speculation) motive for the more (less) risk averse investor. 
\subsubsection{Long run investor}
Similar reasoning applies to the investment horizon $T$. The outer risk is 'higher' for longer investment horizons.
For large $T$, the investor who is more risk averse than the log-investor only considers the Merton solution of the bad regime, i.e. for $\alpha<0$ 
\begin{align*}
 \lim_{T\rightarrow \infty}  g(\alpha,p, T,\underline\vartheta,\overline\vartheta)
=1.
\end{align*}
Thus, w.r.t.\ the outer risk, the long term investor who is more risk averse than the log-investor acts extremely risk averse. 
In contrast, a long term investor who is less risk averse than the log-investor only considers the good regime.\footnote{  \cite{bauerle2017extremal} Theorem 3.1, Theorem 3.3}
In consequence,   for long term investment horizons ($T\rightarrow\infty$), the prior distribution has no impact on the weight $\alpha(0,T,y)$, i.e. 
 $\alpha(0,T,y)=0$ ($\alpha(0,T,y)=1$)
for $\alpha<0$ (for $\alpha>0$).
The long term investor only considers the worst (best) case drift, i.e. $\underline\mu$ ($\overline\mu$) and behaves along the lines of  a maximin (maximax) decision rule. 
This is further emphasized by means of the sensitivities of the optimal weight on the lower Merton investment fraction.   An illustration is given by Figure \ref{Fig_new_a} (upper left plot) which depicts  the weight on the lower Merton  fraction for varying investment horizons $T$. See also subsections \ref{subseq:learning}, \ref{subsec:value_of_learning}.

\subsubsection{Impact of prior distribution}\label{subsubsec_prior}
\begin{figure}[tb]
	\begin{center}
			{\bf{Difference to log-investor ($g(\alpha,p, T,\underline\vartheta,\overline\vartheta)-(1-p)$)}}
		\end{center}
	\begin{center}
			\includegraphics[width=0.45\textwidth]{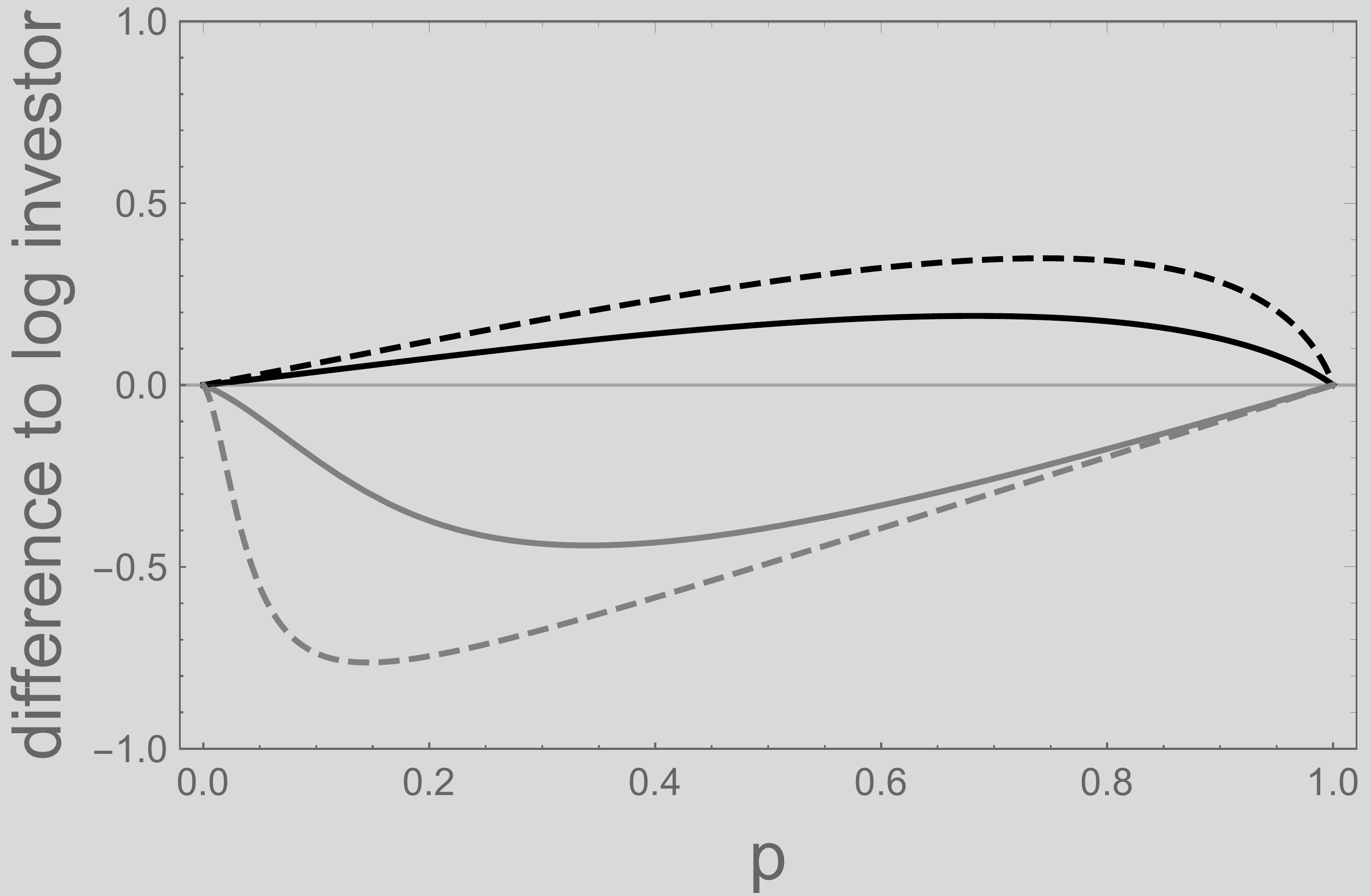}
				\includegraphics[width=0.45\textwidth]{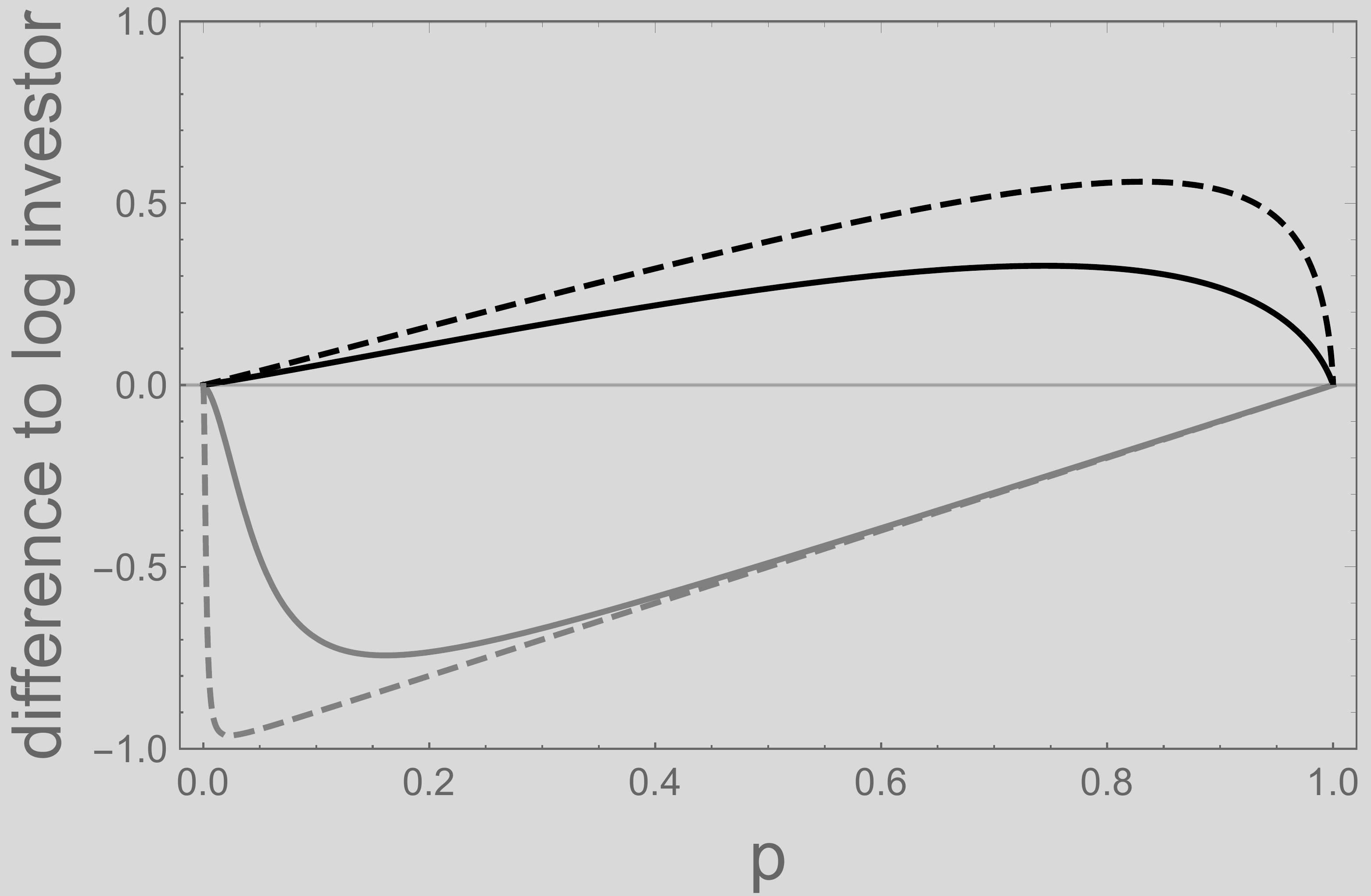}
					\includegraphics[width=0.45\textwidth]{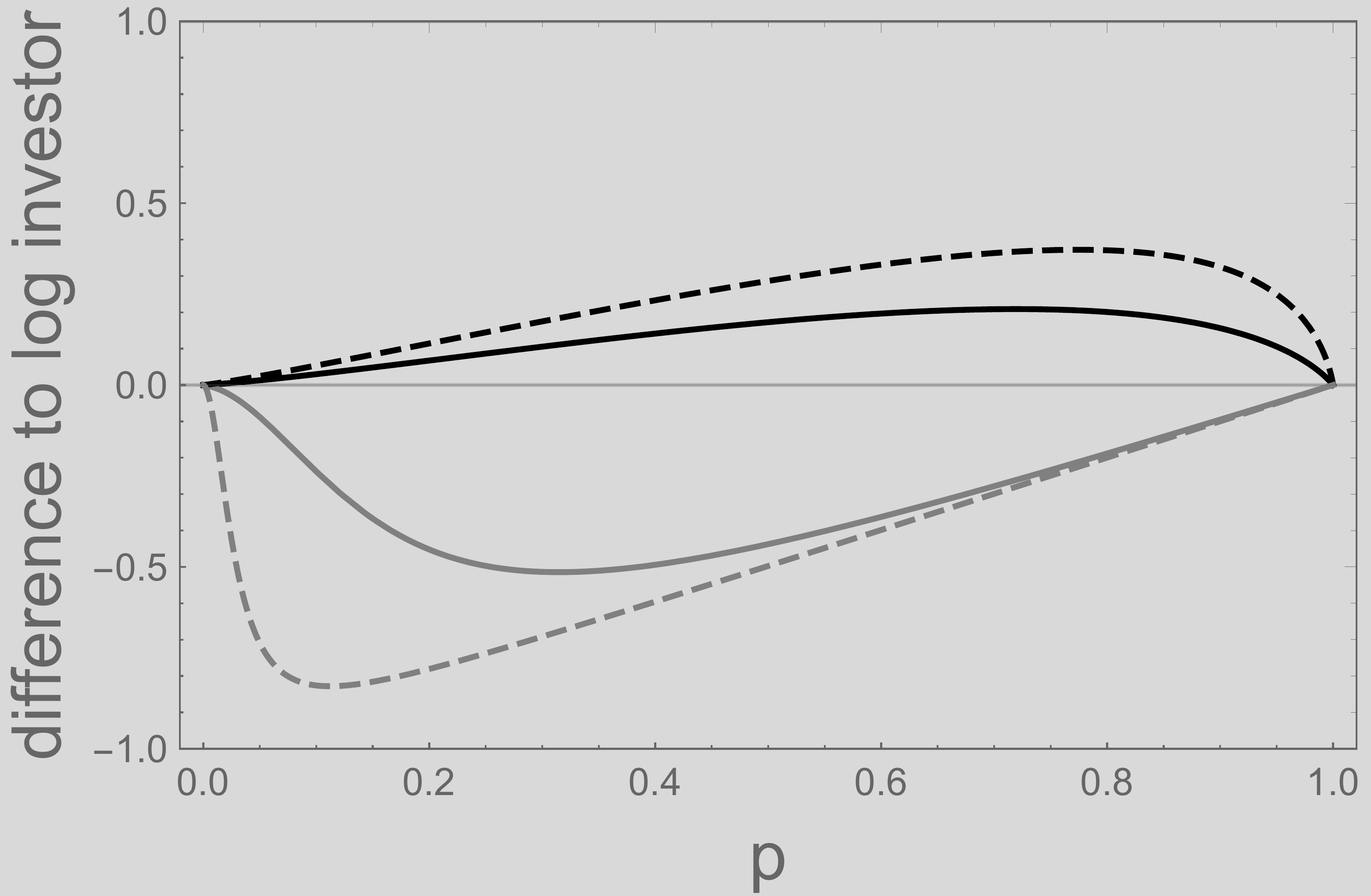}
					\includegraphics[width=0.45\textwidth]{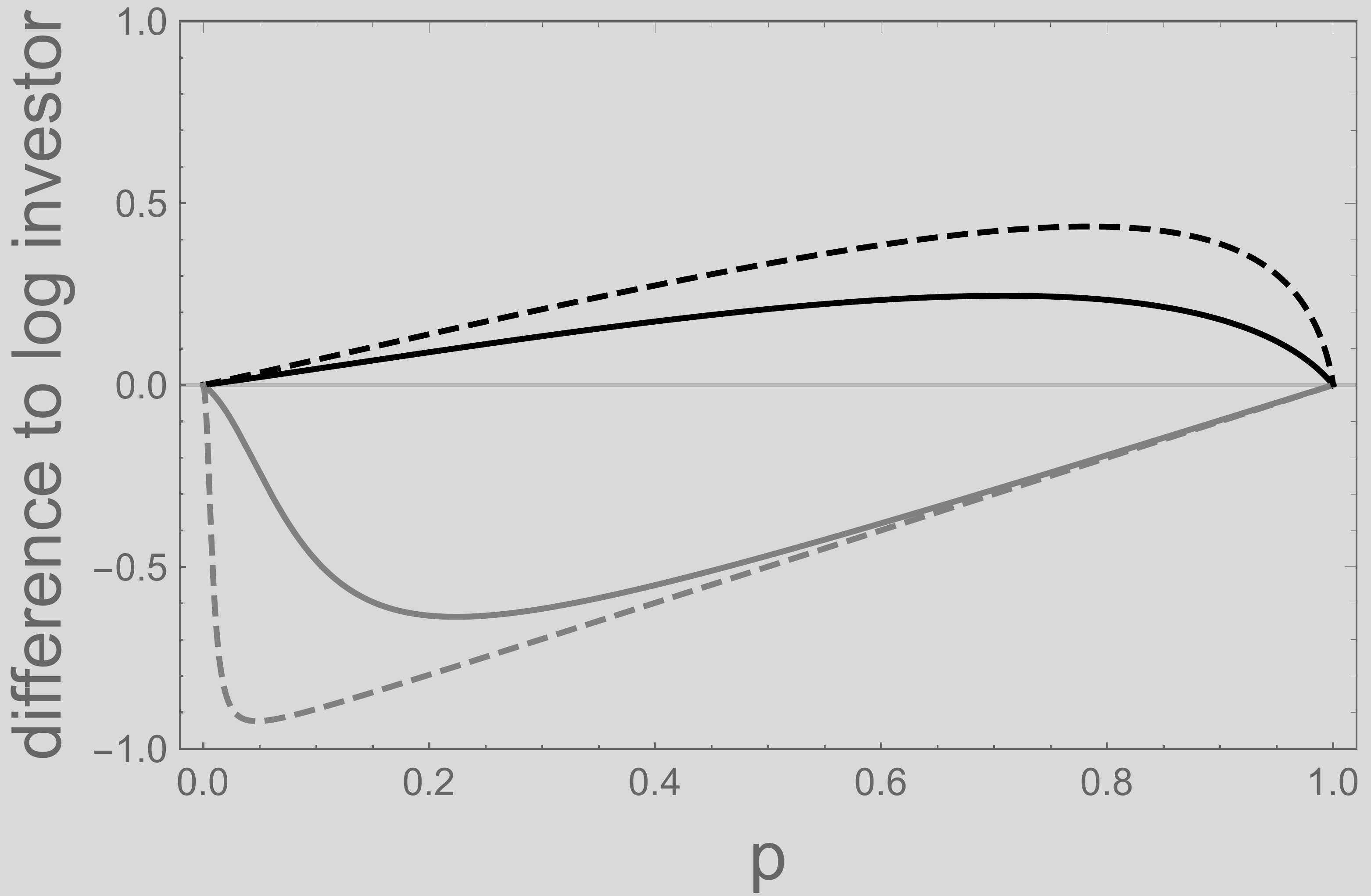}	
\end{center}		
	\caption{If not otherwise mentioned below, the figure refers to the benchmark parameter scenario of Table \ref{tab_bench_NA}.
	The figure compares, for varying priors $p$, the weight on the lower Merton fraction to the one of  the log-investor, i.e. it depicts the difference $g(\alpha,p, T,\underline\vartheta,\overline\vartheta)-(1-p)$. The black (gray) lines refer to the investor who is more (less) risk averse than the log-investor. The investment horizon is $T=10$ (thick lines) and $T=20$ (dashed lines). The upper right figure is then based on  $\overline\mu= 0.12$ (instead of the benchmark parameter $\overline\mu= 0.09$). The lower left plot then considers the benchmark scenario with the exception that  $\underline\mu= 0.01$ (instead of $\underline\mu= 0.03$). The last figure (lower right plot) considers a  higher risk aversion ($\alpha=-2$ instead of $\alpha=-1$) for the black lines and a lower risk aversion ($\alpha=0.6$ instead of $\alpha=0.5$) for the gray lines.
}
	\vspace*{0.15cm}
	\label{Fig_new_log}
\end{figure}
Intuitively, the weight on the lower Merton investment fraction is decreasing (increasing) in the prior probability $p$ ($1-p$) (upper right plot of Figure \ref{Fig_new_a}). This is obvious for the log-investor who simply relies on the prior probability. However, notice again the important distinction based on the benchmark log-investor. While the more risk averse investor (black line) uses a higher weight on the lower Merton solution, the opposite is true for the less risk averse investor (gray line). Intuitively, the discrepancy is the highest for some intermediate $p$. While there is no outer risk implied in the extreme cases that $p\rightarrow 0$ or   $p\rightarrow 1$, the outer risk is 'maximal' for some intermediate $p$ (which depends on the investment horizon $T$, the level of relative risk aversion $1-\alpha$, and   the relation between good and bad scenario).  The outer risk situation also increases in the distance between the good and bad regime ($\overline\mu$ and $\underline\mu$). Thus, the deviation of a more (less) risk averse investor from the log-investor is the higher, the higher the distance is, e.g. it is increasing in $\overline\mu$ (cf lower right figure of Figure \ref{Fig_new_a}). An additional illustration is given by Figure  \ref{Fig_new_log} which depicts the difference to the log-investor by means of 
$$g(\alpha,p, T,\underline\vartheta,\overline\vartheta)-(1-p),$$
i.e. the difference of the weight on the lower Merton solution and $1-p$ which is the weight of the log-investor. All figures include two investment horizons ($T=10$ and $T=20$).
Observe that the deviation from the log-investor is the higher, the higher the investment horizon is. For the investor who is more (less) risk averse than the log-investor, the highest  deviation is obtained for $p>0.5$ ($p<0.5$). Recall that the more (less) risk averse investor tends to hedge (speculate) the bad (good) drift scenario. In consequence, the prior probability $p$ with the highest deviation increases (decreases) in the investment horizon.   
Analogous reasoning is true w.r.t. the sensitivity to $\overline \mu$ ($\underline \mu$) and the level of risk aversion $1-\alpha$ (cf. Figure  \ref{Fig_new_log}).

\subsubsection{Learning vs.\ pre-commitment}\label{subseq:learning}
\begin{figure}[tb]
	\begin{center}
			{\bf{Comparison of optimal initial learning and pre-commitment strategy}}
		\end{center}
	\begin{center}
			\includegraphics[width=0.45\textwidth]{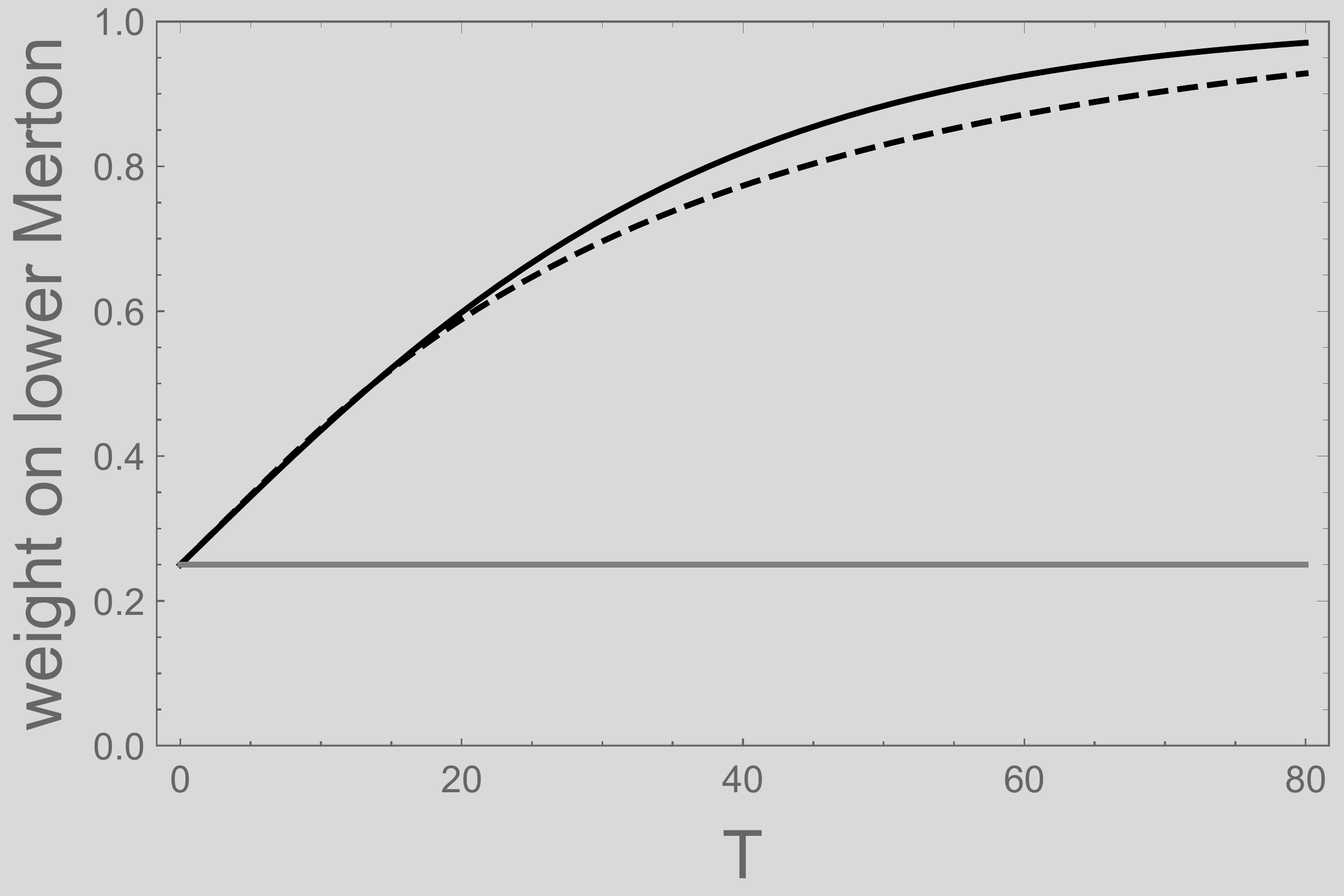}	\includegraphics[width=0.45\textwidth]{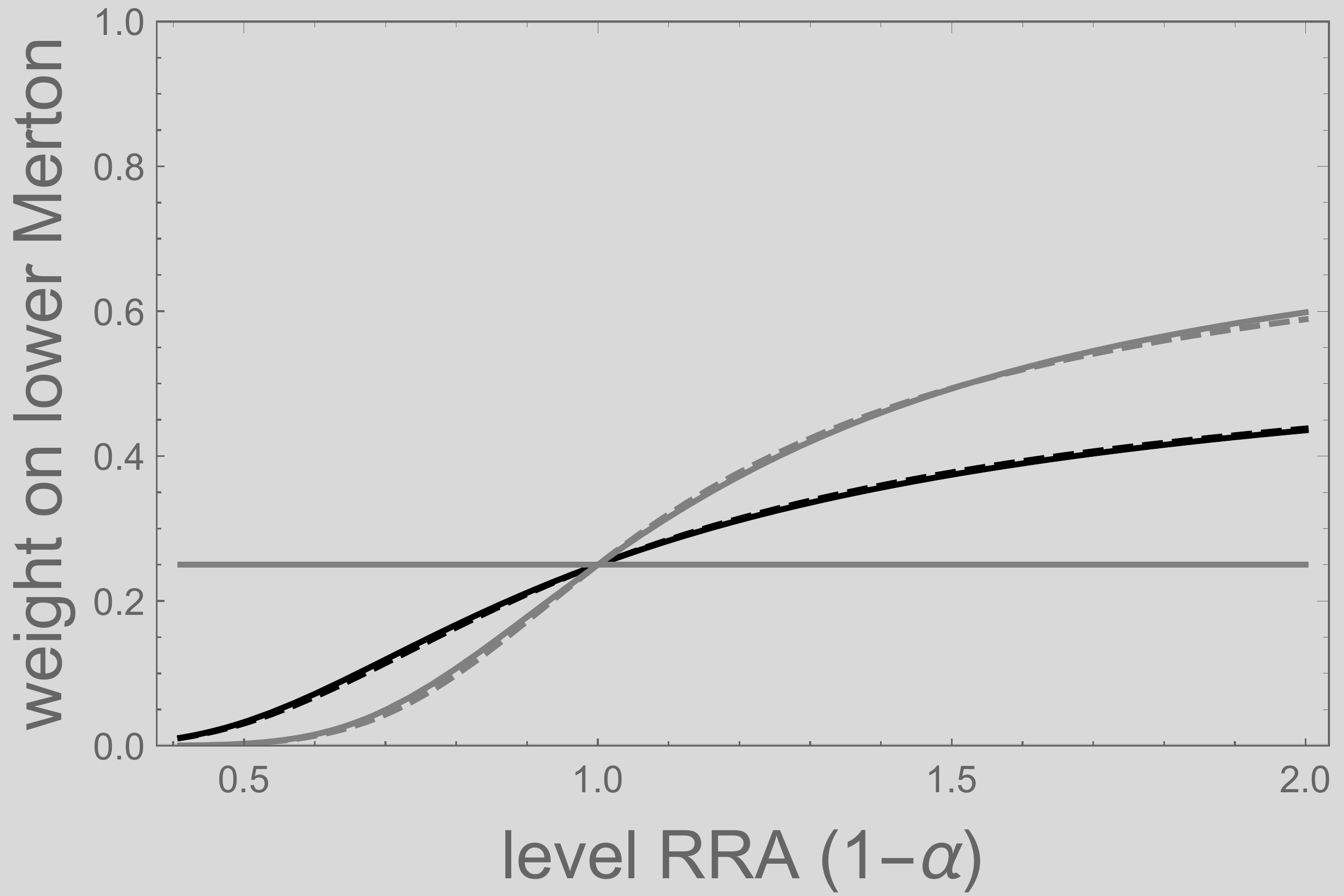}	
			\includegraphics[width=0.45\textwidth]{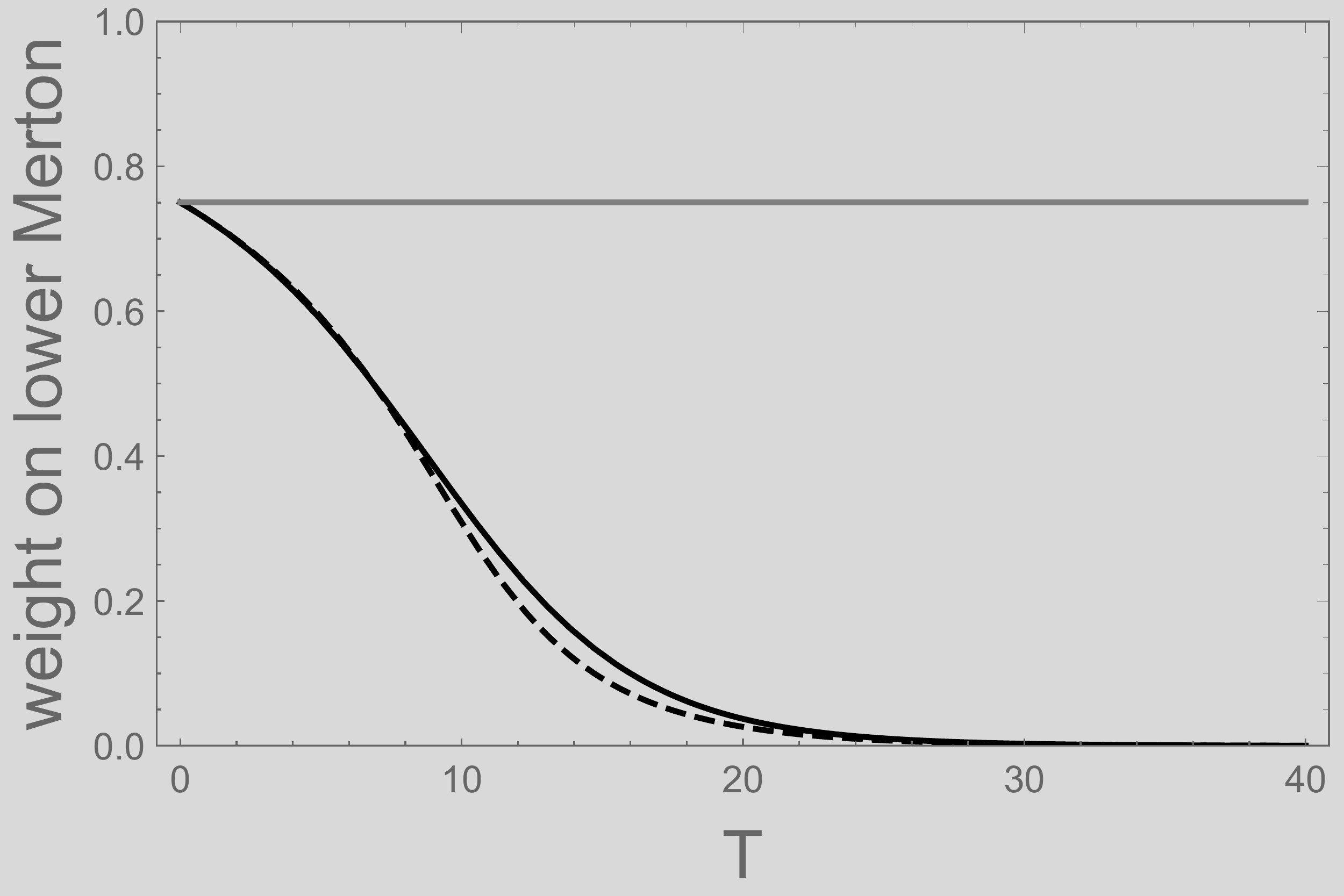}	\includegraphics[width=0.45\textwidth]{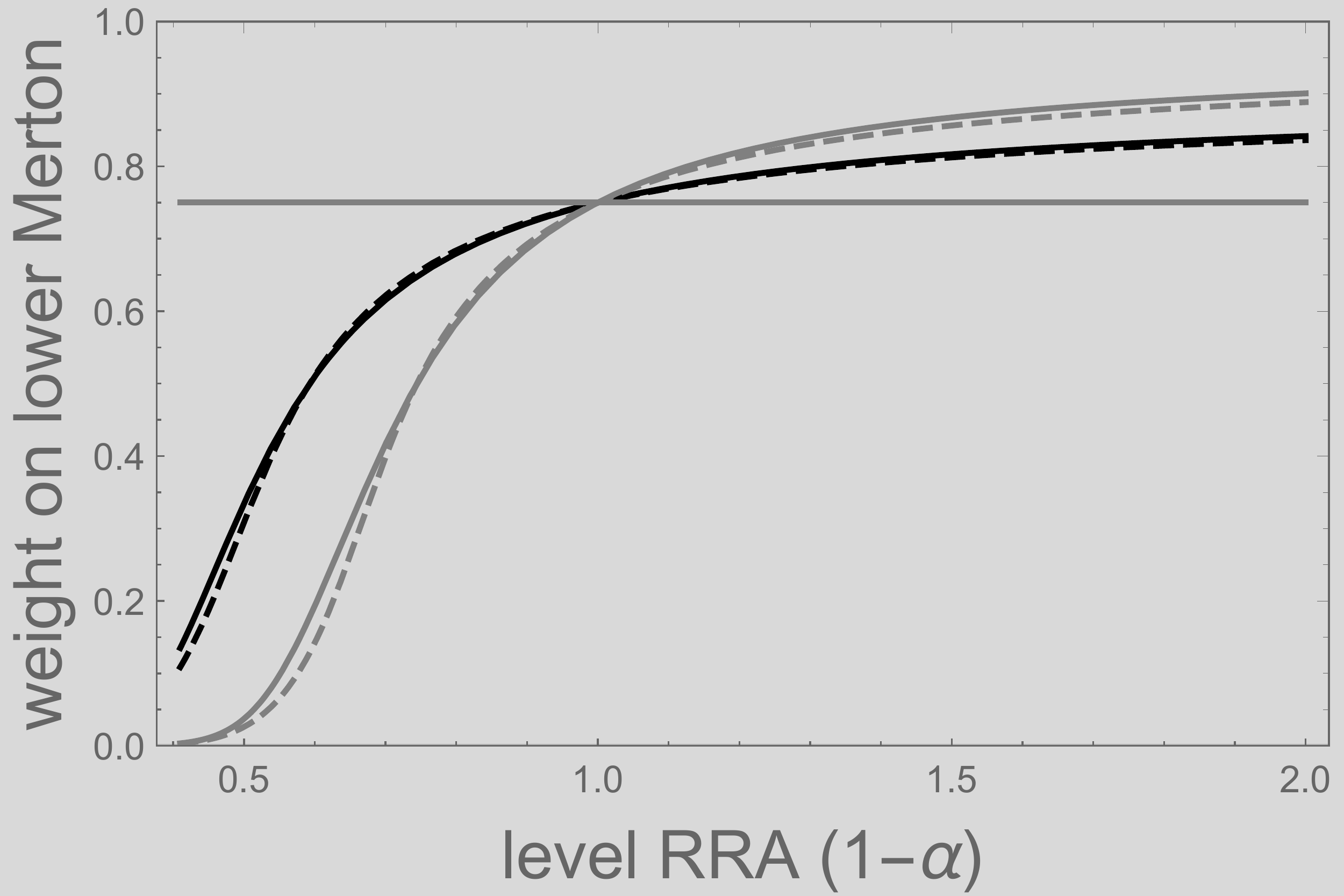}
\end{center}		
	\caption{If not otherwise mentioned below, the figure refers to the benchmark parameter scenario of Table \ref{tab_bench_NA}. The upper plots  are based on $p=0.75$ (i.e. $1-p=0.25$) while the lower plots refer to 
	 $p=0.25$ (i.e. $1-p=0.75$).
	The left hand figures compare  the optimal weight on the lower Merton solution of the learning (black) and pre-commitment setup (dashed) for varying investment horizons $T$. The upper (lower) figure concerns the investor who is more (less) risk averse than the log-investor.  The right hand figures depict, for $T=10$ (black) and  $T=20$ (gray), the weight on the lower Merton solution for varying levels of relative risk aversion. Again, the black lines refer to strategies under learning while the dashed ones are pre-commitment strategies.}\vspace*{0.15cm}
	\label{Fig_comp}
\end{figure}
It is worth to mention that similar reasoning as above is valid in a pre-commitment setup (cf. optimization problem \eqref{eq:pre_example}). 
First notice that, in the classical Bayesian setup,  the optimal investment fraction   of the log-investor does not depend on $T$.
Thus, the optimal pre-commitment and learning strategies coincide at $t=0$.
In addition, this is true for the limiting cases of a myopic investor ($T\rightarrow 0$) and the long term investor ($T\rightarrow \infty$).
For an investor who is more risk averse than the log-investor, recall now the hedging motive. 
W.r.t.\ the trade-off between speculating on the better regime (and following the optimal strategy for the good regime) and hedging against the worse regime (and following the optimal strategy for the bad regime), the hedging motive dominates. In addition, observe that the hedging motive is increasing in the investment horizon $T$. Compared to the setup under learning, pre-commitment implies that the investor must pre-commit to a constant investment fraction at $t=0$. In the optimum, she anticipates that gradually over time, the remaining investment horizon decreases.
Therefore, at $t=0$, the hedging demand is, for rather high investment horizons $T$, lower in the pre-commitment setup than under learning. 
The above reasoning is illustrated in Figure \ref{Fig_comp}. The left figure compares (for an investor who is more risk averse than the log-investor) the optimal weight on the lower Merton solution of the learning (black) and pre-commitment setup (dashed) for varying investment horizons $T$. 
Observe that the strategies coincide in the limiting cases. In addition, for intermediate investment horizons $T$, the pre-commitment strategy has a lower weight on the bad scenario (lower Merton solution). This implies that (at $t=0$) the pre-commitment strategy is more aggressive than the learning strategy. As mentioned above, this is due to the fact that the investor who must pre-commit already has to anticipate that, in the future,  the remaining investment horizon is lower which is associated with a lower hedging demand (against the bad scenario).   
The right plot of Figure \ref{Fig_comp} depicts the weight on the lower Merton solution for varying levels of relative risk aversion. Again, the black line refers to learning while the dashed one is pre-commitment. Observe that the strategies (the weights on the lower Merton solution) coincide in the case of a log-investor  ($\alpha\rightarrow 0$) with level of relative risk aversion equal to one. 
Overall, however, our numerical results do not show a very big difference between the pre-commitment and the learning strategy.

\subsubsection{Value of learning and impact of investment horizon}\label{subsec:value_of_learning}
An intuitive explanation for the similarity of learning and pre-commitment at $t=0$ is given by means of the conditional distribution of $\vartheta$. Along the lines of Remark 2.2, it holds
\begin{align*}
P\left(\left.\Theta=\overline\vartheta\right|\mathcal F^Y(t)\right)
& =  p\frac{L(\overline\vartheta,Y(t))}{F(t,Y(t))}=:\hat p_t \text{ and }
P\left(\left.\Theta=\underline\vartheta\right|\mathcal F^Y(t)\right)=1-\hat p_t
\end{align*}
such that $\hat\Theta(t)=E\left[\left.\Theta \right|\mathcal F^Y(t) \right]=\hat p_t\overline\vartheta+(1-\hat p_t)\underline\vartheta$.
Recall that, at $t=0$, the investor maximizes her expected utility under the prior distribution. Thus, the optimal strategy depends on the expected value of   $\hat\Theta(t)$ which is
\begin{align*}
\EE\left[\hat\Theta(t)\right] & =  p \overline\vartheta
+(1-p)\underline\vartheta
\end{align*}
by the martingale property of $\hat\Theta$. This is the same for a pre-commitment and a learning investor.

In summary, at $t=0$, both (pre-commitment and learning) have the same information and the investor only expects to learn the prior distribution such that for the log-investor  there is no difference between the pre-commitment and learning strategy (cf. Remark 2.2), and a negligible difference for an investor with $\alpha\neq 0$.

However, the {\em evolutions} of the strategies under learning and  pre-commitment are rather different. While the pre-commitment investment fraction is (per definition) constant over time, the time $t$ investment fraction under learning is a random variable defined by $Y(t)$.   Notice that in our stylized illustration the true model is  either given in terms of $\vartheta=\overline\vartheta$ (Model 1 where $Y(t)\sim N\left(\overline\vartheta t, t\right)$) or $\vartheta=\underline\vartheta$ (Model 2 where $Y(t)\sim N\left(\underline\vartheta t, t\right)$) such that we can compare the conditional distributions of the investment fraction, i.e. given Model 1 and Model 2. In order to focus on the impact of learning, consider the log-investor whose strategy  is exclusively specified in terms of $\hat p_t$.
Straightforward calculations imply that with $Z^*\sim N(0,1)$ it holds
\begin{align*}
  \left.\hat p_t\right|_{\text{Model 1}} & \sim  \left(1+q \;L_t(\underline\vartheta-\overline\vartheta,\sqrt{t}Z^*)\right)^{-1}\\
  \left.\hat p_t\right|_{\text{Model 2}} & \sim  \left(1+q \exp\{(\underline\vartheta-\overline\vartheta)^2t\} \;L_t(\underline\vartheta-\overline\vartheta,\sqrt{t}Z^*)\right)^{-1}
\end{align*}

\begin{figure}[tb]
	\begin{center}
			{\bf{Conditional distribution of $\hat p_t$ }}
		\end{center}
	\begin{center}
\includegraphics[width=0.45\textwidth]{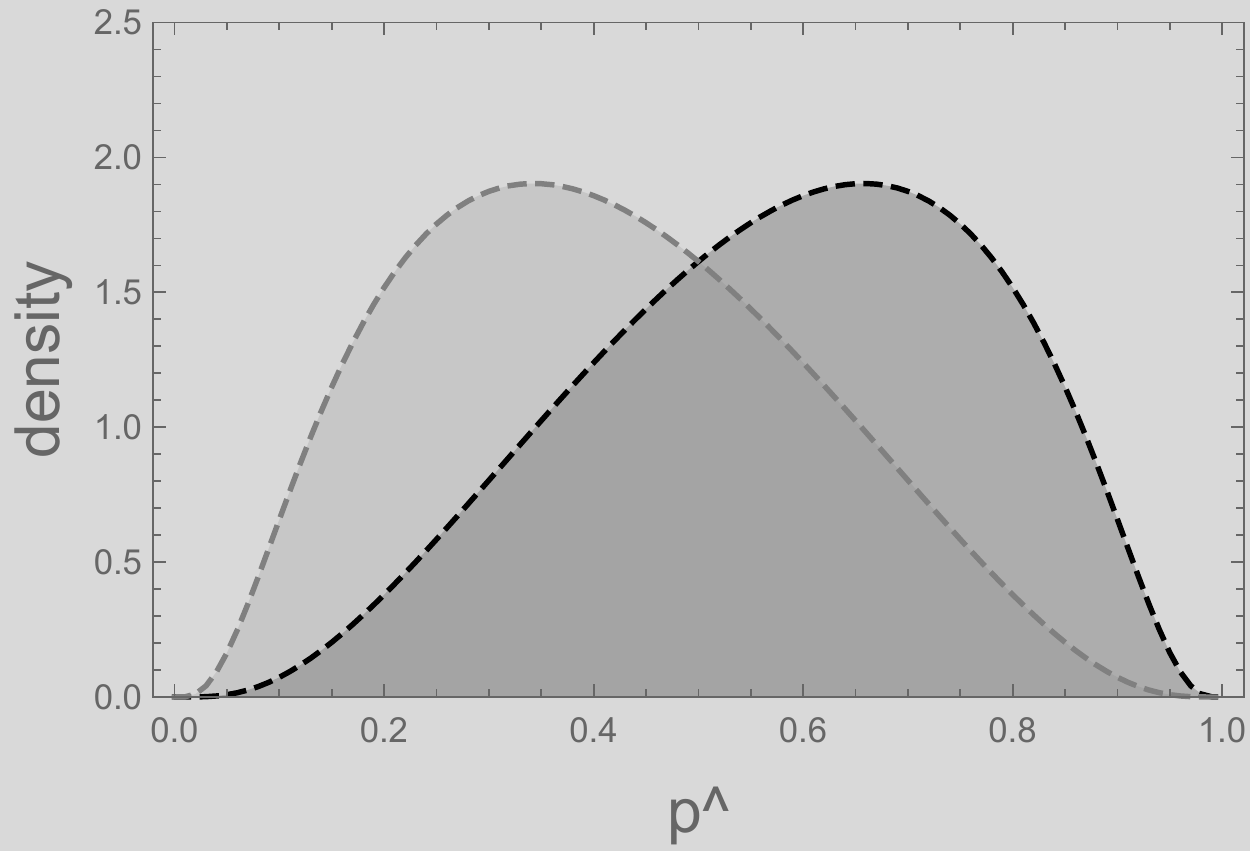}	
			\includegraphics[width=0.45\textwidth]{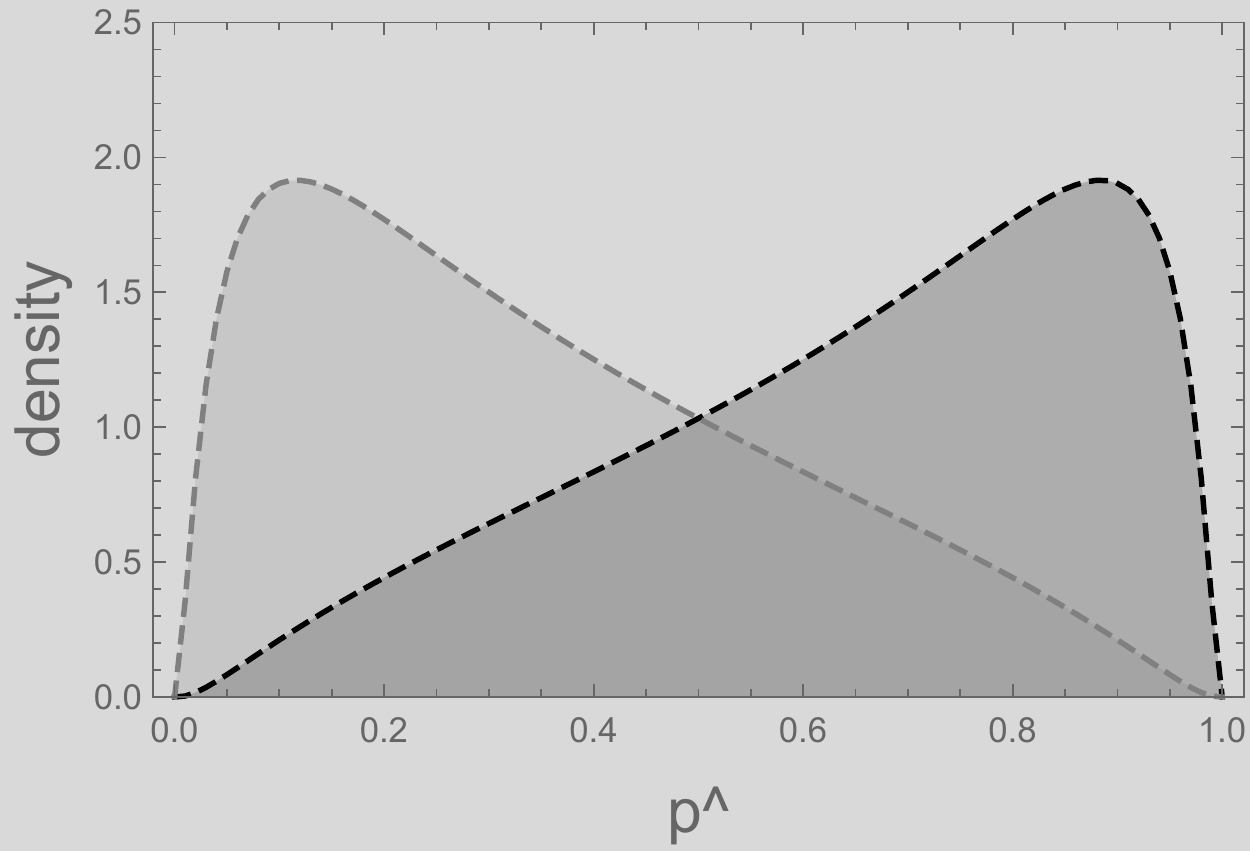}
\includegraphics[width=0.45\textwidth]{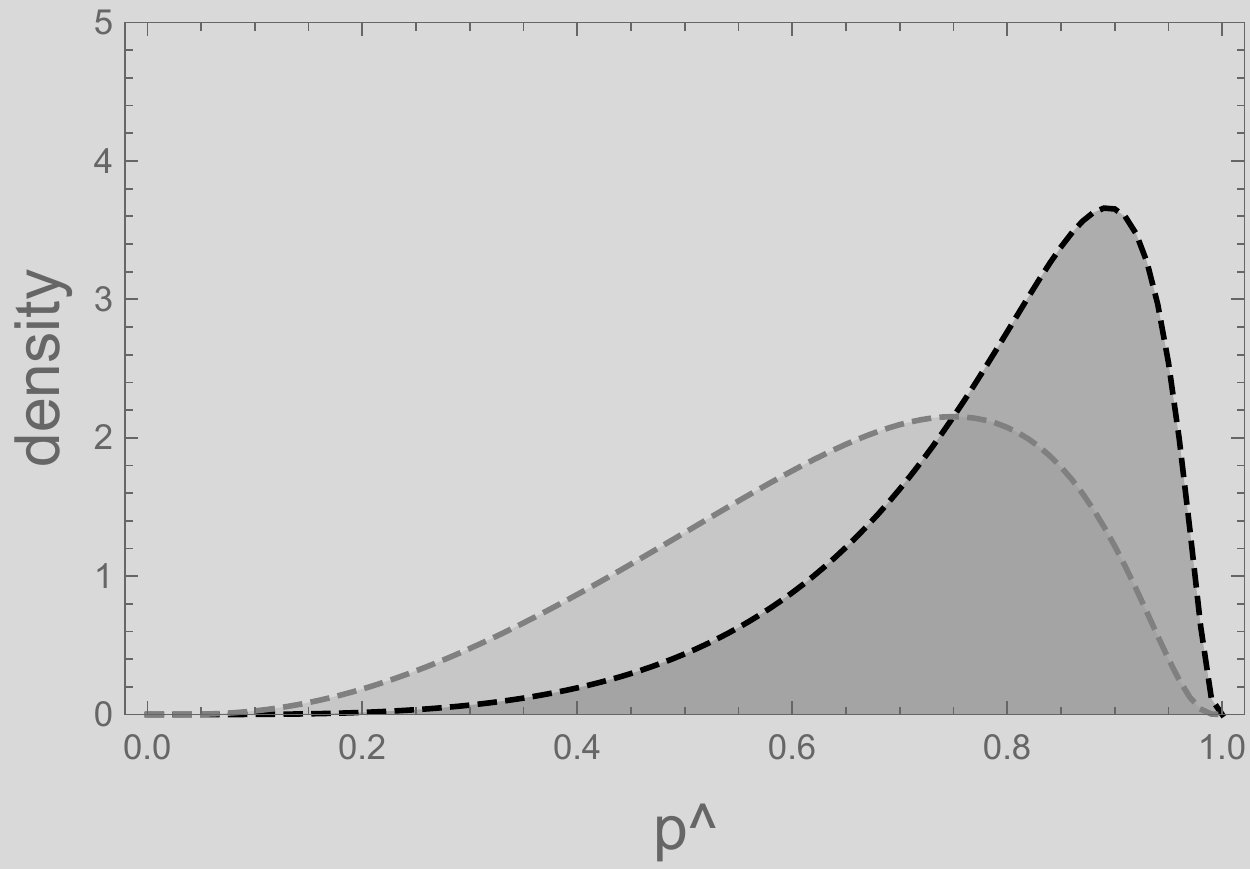}	
			\includegraphics[width=0.45\textwidth]{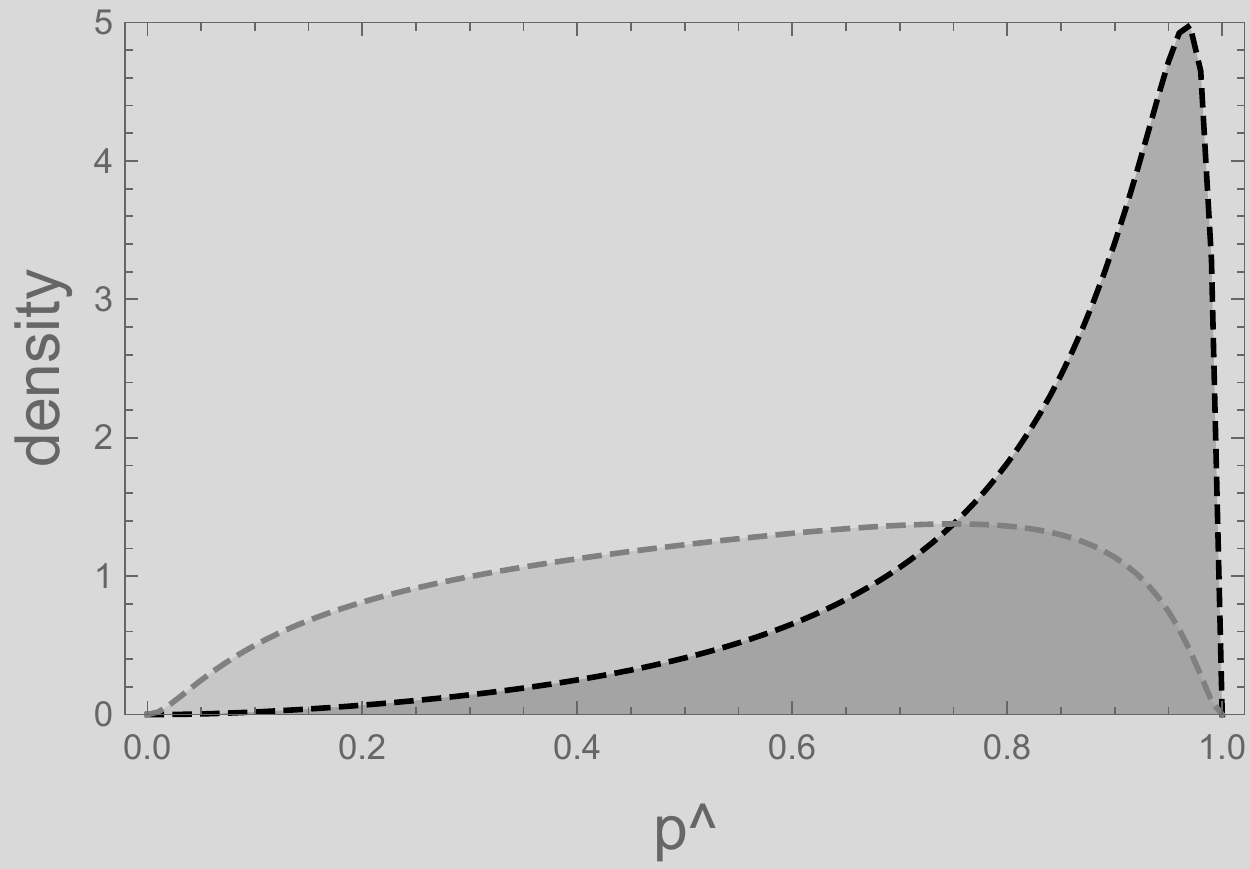}
\end{center}		
	\caption{The figure illustrates the conditional distribution of $\hat p_t$ given Model 1 (black) and Model 2 (gray). It is based on the benchmark model parameter setup of Table \ref{tab_bench_NA}. The left (right) hand plots refer to $t=5$ ($t=10$). For the upper (lower) plots it holds $p=0.5$ ($p=0.75$.)}\vspace*{0.15cm}
	\label{Fig_cond}
\end{figure}
An illustration of the conditional distributions is given in Figure \ref{Fig_cond}. Notice that comparing the densities in the first row at time points $t=5$ (left) and $t=10$ (right) that the true model is learned by the density. Indeed, at time point $t=10$ depending on what has been observed, the investment strategies will be quite different. However, the second row of pictures shows that learning may be slow when the initial prior is mainly wrong. Then the probability mass shifts rather slowly to the correct model.

However, a {\it{pure}} learning effect is only immanent in the strategy of the log-investor. Any deviation from the log-investor gives, in addition, rise to a competing effect caused by the dependence of the remaining investment horizon $T-t$. This is illustrated in Figure \ref{Fig_Y} which depicts the weight on the lower Merton solution for varying observations of $Y(t)$. Obviously, the weight on the lower Merton solution is decreasing in $Y(t)$, i.e. a higher observation is in favour of Model 1 (where $\vartheta=\overline\vartheta$). While the weights of the log investor (dashed lines) do not depend on the remaining investment horizon $T-t$ (upper plots versus lower plots), the horizon effect for an investor who deviates from the log-investor implies that the dependence of the weight on $Y(t)$ vanishes for increasing $T-t$, i.e. the more (black line) and less (gray line) risk averse investor moves to the lower (upper) Merton solution.   

\begin{figure}[tb]
	\begin{center}
			{\bf{Weight on lower Merton solution for varying $Y_t$}}
		\end{center}
	\begin{center}
	\includegraphics[width=0.45\textwidth]{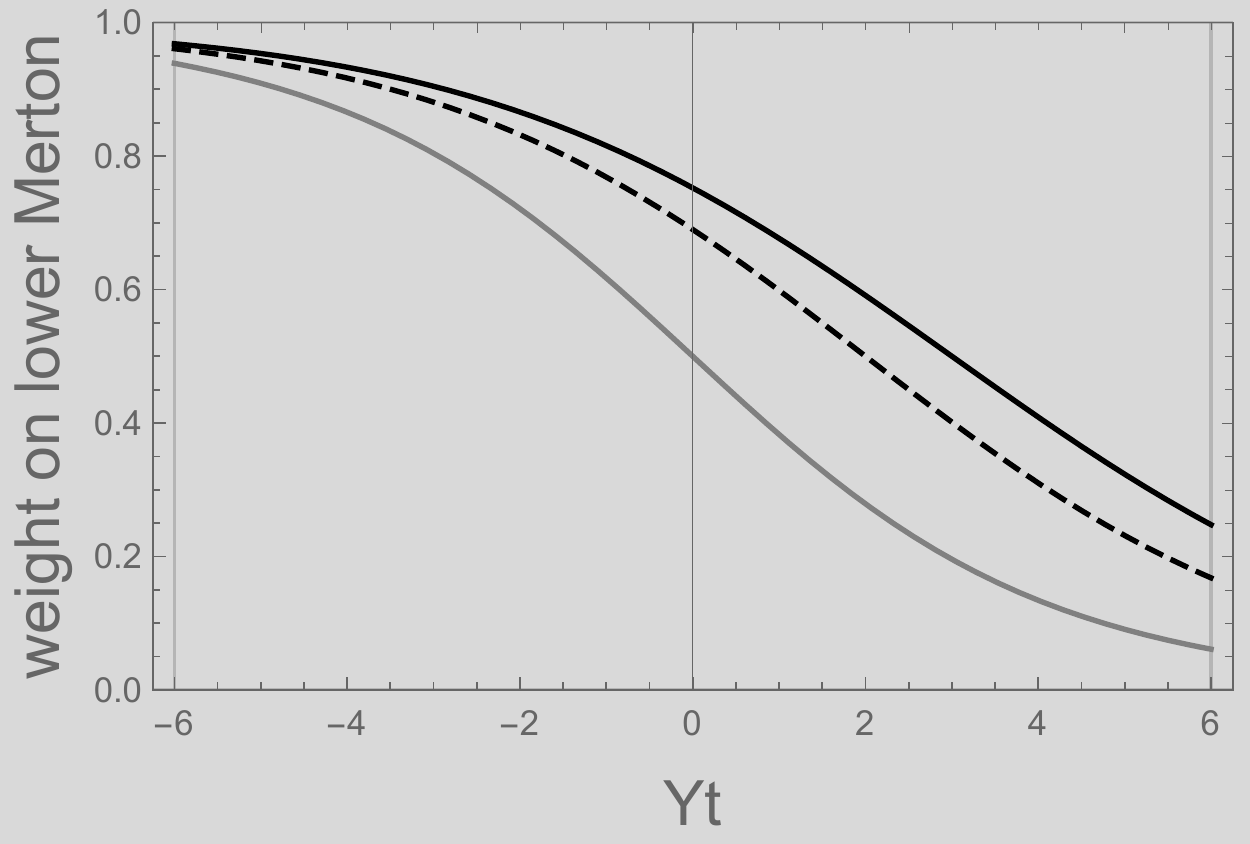}	
			\includegraphics[width=0.45\textwidth]{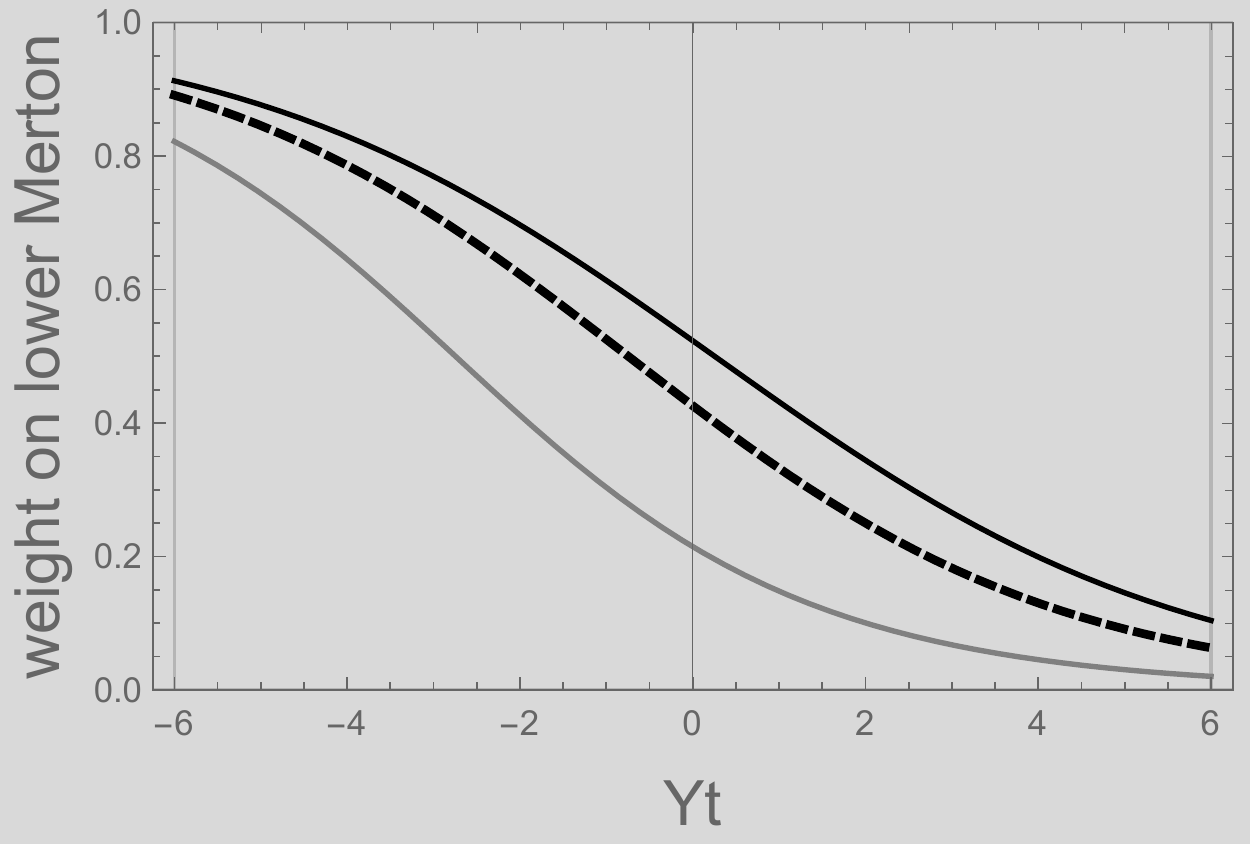}
             \includegraphics[width=0.45\textwidth]{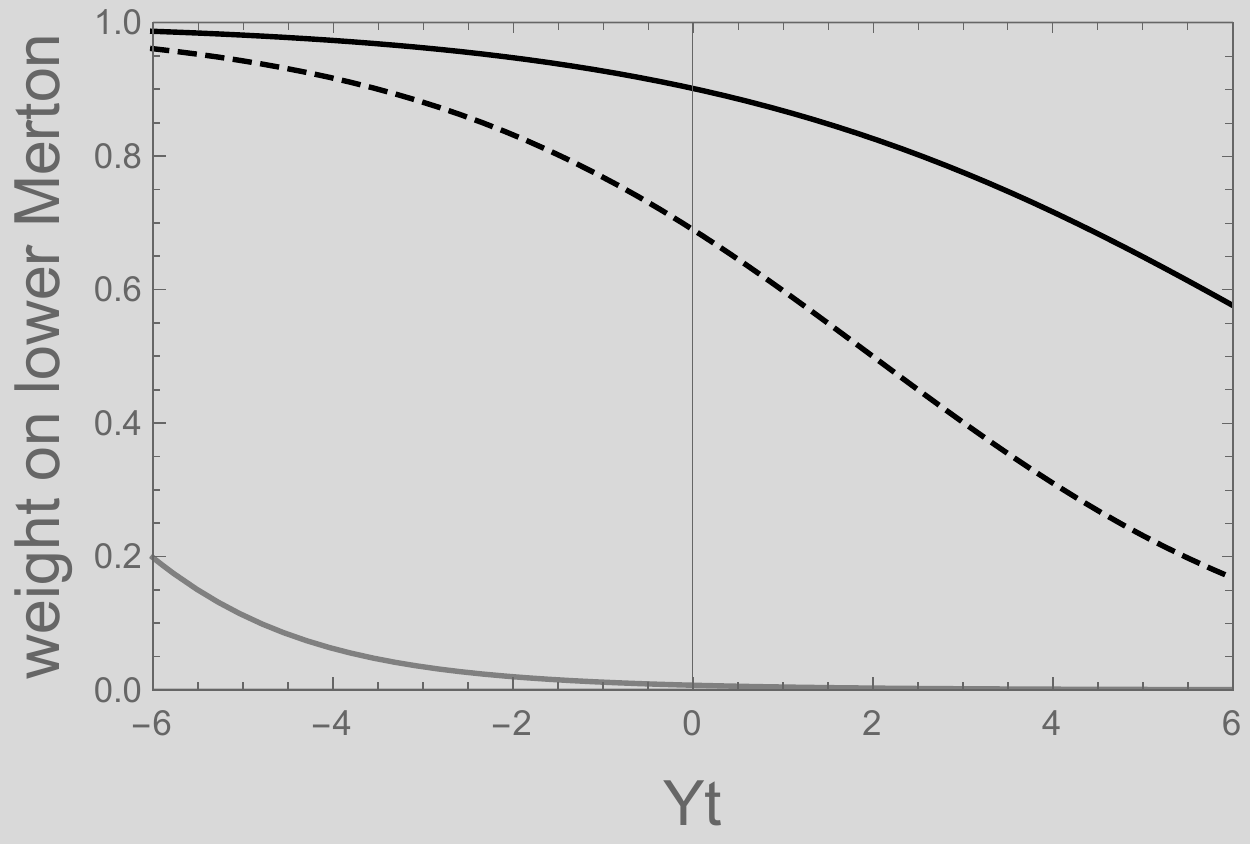}	
			\includegraphics[width=0.45\textwidth]{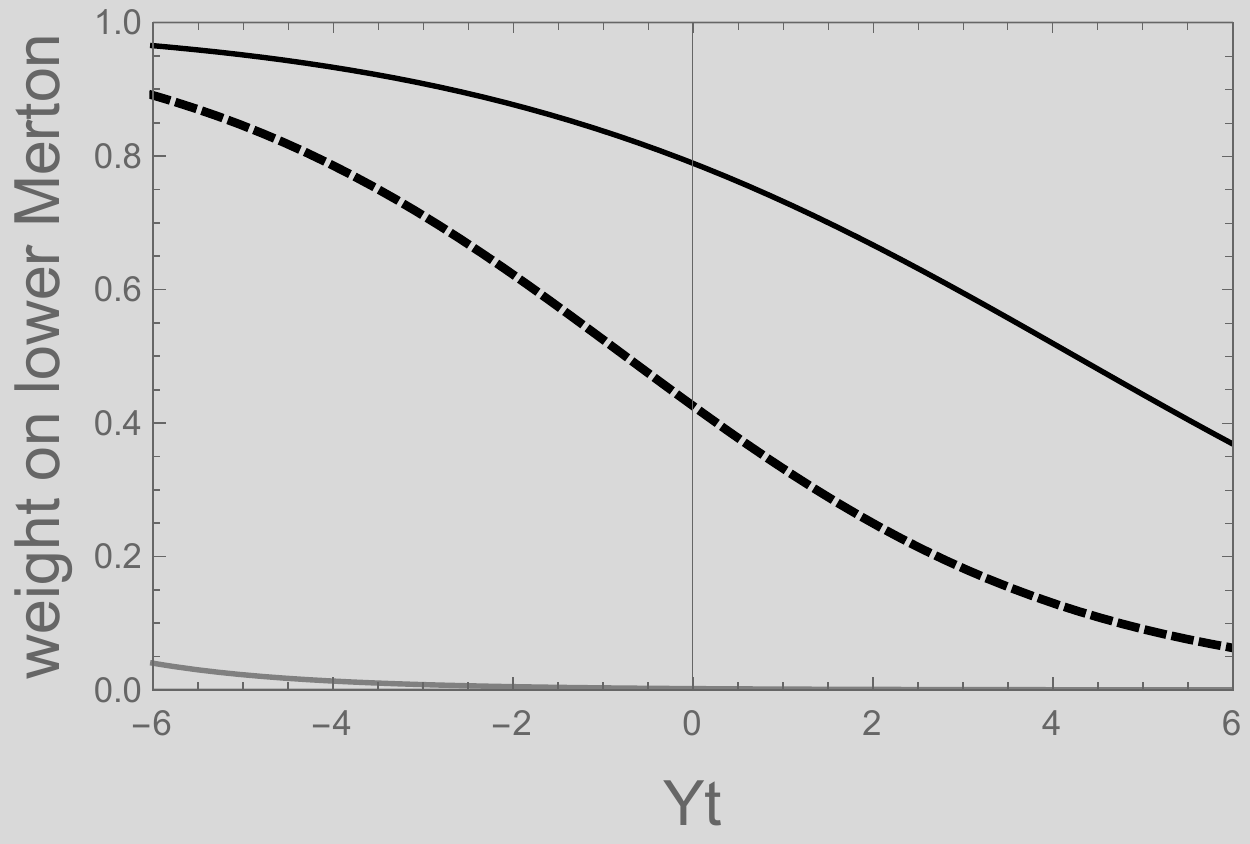}		
\end{center}		
	\caption{The figures illustrate the weight on the lower Merton solution at time $t$ and an investment horizon $T$ for varying $Y_t$. Each figure contains the log-investor
(dashed line) and the more (black line) and less (gray line) risk averse investor
than the log-investor. It is based on the benchmark model parameter setup of Table \ref{tab_bench_NA}. The upper (lower) plots refer to $t=5$ and $T=10$ ($T=30$). For the left (right) plots it holds $p=0.5$ ($p=0.75$.)}\vspace*{0.15cm}
	\label{Fig_Y}
\end{figure}

To assess the value of learning, we have to compare the expected utilities (certainty equivalents) of the optimal strategy under learning versus pre-commitment. For a {\it{pure}} value of learning, we consider  the log-investor.
Notice that the expected utility $v$ 
of the optimal strategy is (for $x_0=1$) given by
\begin{align*}
  v(x_0) & = 
  \int_{\R} F(T,z)\ln F(T,z)\varphi_{T}(z) dz.
\end{align*}
In contrast, the value of the optimal pre-commitment strategy is 
\begin{align*}
  v^{\text{pre}}(x_0) & =
\frac{( p \overline\mu + (1-p)\underline\mu)^2}{2\sigma^2}T.
\end{align*}
 For the log-investor, the value of learning in terms of the savings rate difference which is here given by
 \begin{align*}
\frac{1}{T} (v(x_0)- v^{\text{pre}}(x_0))
 \end{align*}
 is illustrated by means of Figure \ref{Fig_value_l}.
Here, the value of learning is increasing in $T$ (cf. left hand figure). Obviously, the value of learning is zero in the degenerate cases that $p=0$ ($p=1$, respectively) and obtains its maximum value approximately for $p=0.5$ (cf. right hand figure).

\begin{figure}[tb]
	\begin{center}
			{\bf{Value of learning}}
		\end{center}
	\begin{center}
	\includegraphics[width=0.45\textwidth]{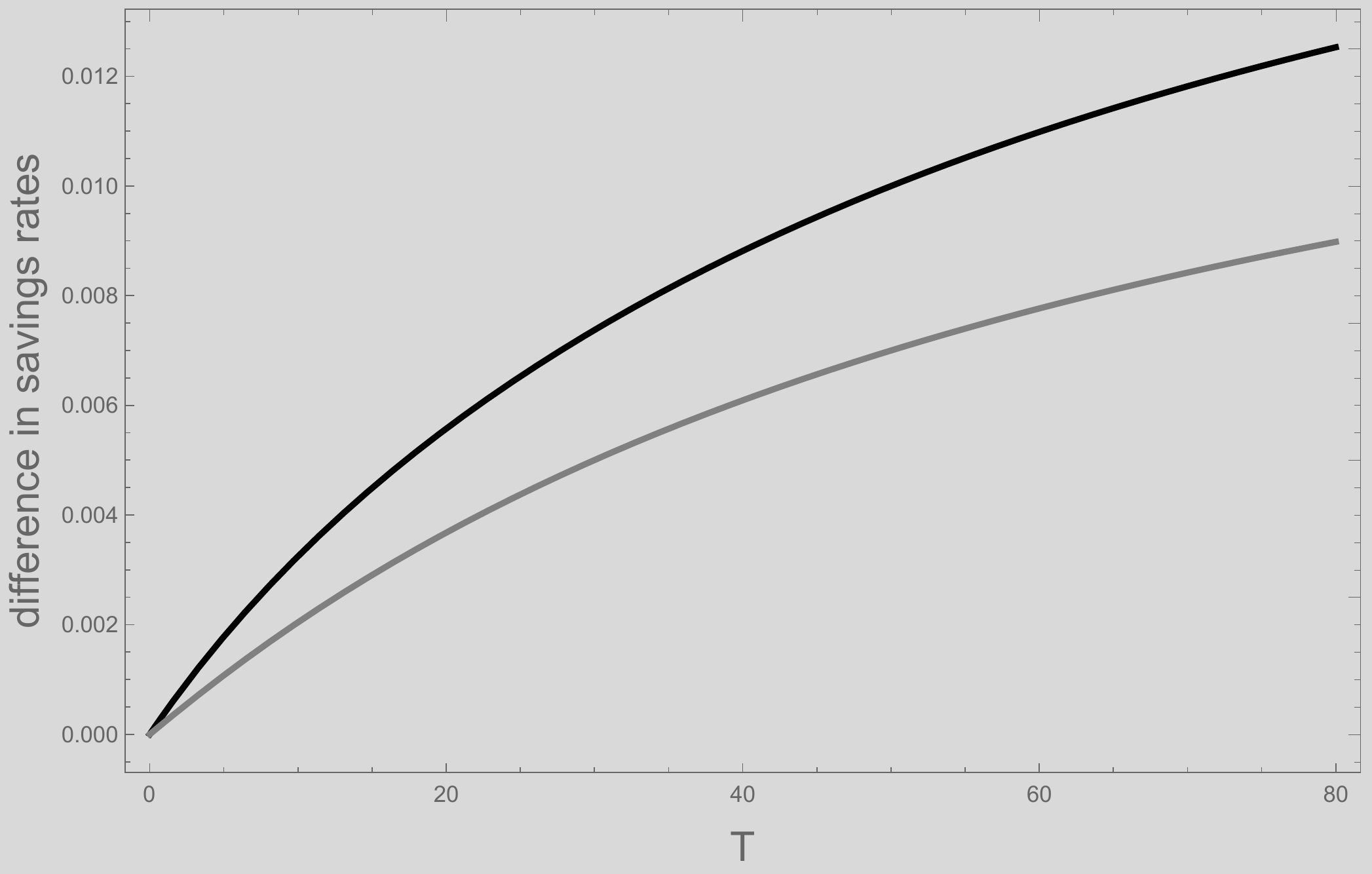}	
	\includegraphics[width=0.45\textwidth]{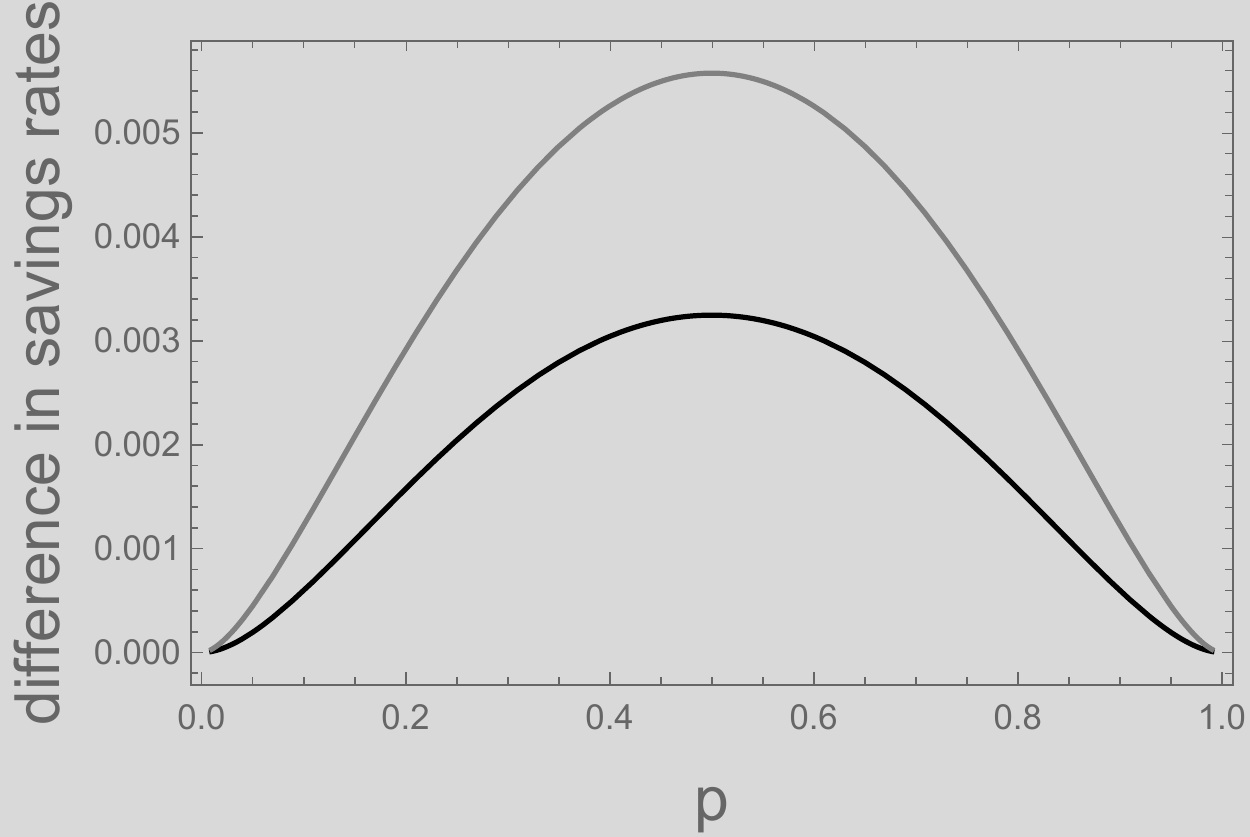}			
\end{center}		
	\caption{The figures illustrate the value of learning in terms of the savings rate difference (learning versus pre-commitment).  It is based on the benchmark model parameter setup of Table \ref{tab_bench_NA} and $p=0.5$ (black), $p=0.75$ (gray) in the left plot, $T=10$ (black), $T=20$ (gray) in the right plot. }\vspace*{0.15cm}
	\label{Fig_value_l}
\end{figure}

\subsubsection{Impact of good drift scenario and volatility on investment fraction.}
The previous sensitivities mainly referred to the  {\it{outer risk}} parameters $p$ and $T$. We now consider the sensitivities of the {\it{inner}} risk parameters  $\overline\mu$ and $\sigma$. While the {\it{outer}} risk parameters have no impact on the {\it{inner}} risk situation, the model parameters $\overline\mu$ and $\sigma$ give rise to an {\it{inner}} and  {\it{outer}} effect. Thus, we have to discuss the combined effects on the optimal investment fraction $\kappa(t,T,Y(t))$. To simplify the exposition, we only consider the case of a more risk averse investor than the log-investor along with our benchmark parameters and $t=0, Y(t)=0$.

First consider the drift parameter of the good scenario $\overline\mu$. The {\it{inner}} risk effect concerns the Merton fraction in the good scenario which is linearly increasing in  $\overline\mu$ while there is no impact on the lower Merton fraction. The directional effect on the investment fraction  $\kappa$ is positive.
However, the outer risk effect is captured by the hedging motive. Recall (cf. Figure \ref{Fig_new_a} and its explanation) that the higher $\overline\mu$ (the higher the difference between the
regimes), the larger is the impact of the hedging motive for the more (than log) risk 
averse investor. The higher $\overline\mu$, the higher is the weight on the lower Merton solution which decreases the investment fraction $\kappa$. In summary, there are opposing effects.
 
Now, consider the volatility $\sigma$. Notice that the volatility is a scaling factor to the Merton fractions. The higher the volatility is, the lower are both, the upper and lower Merton fraction, i.e. the {\it{inner}} risk effect to the investment fraction $\kappa$ is negative. In contrast, the {\it{outer}} risk effect is positive, i.e. a higher volatility decreases the difference between good and bad scenario (in terms of the market price per unit of risk or, alternatively, the Merton fractions of the inner risk situation) which reduces the hedging motive. Intuitively, the inner effect is the dominating one since $\kappa$ is given in terms of a weighted average of upper and lower Merton fraction.

\begin{figure}[tb]
\begin{center}{\bf{Impact of $\overline\mu$ and $\sigma$}}
\end{center}
	\begin{center}
\includegraphics[width=0.45\textwidth]{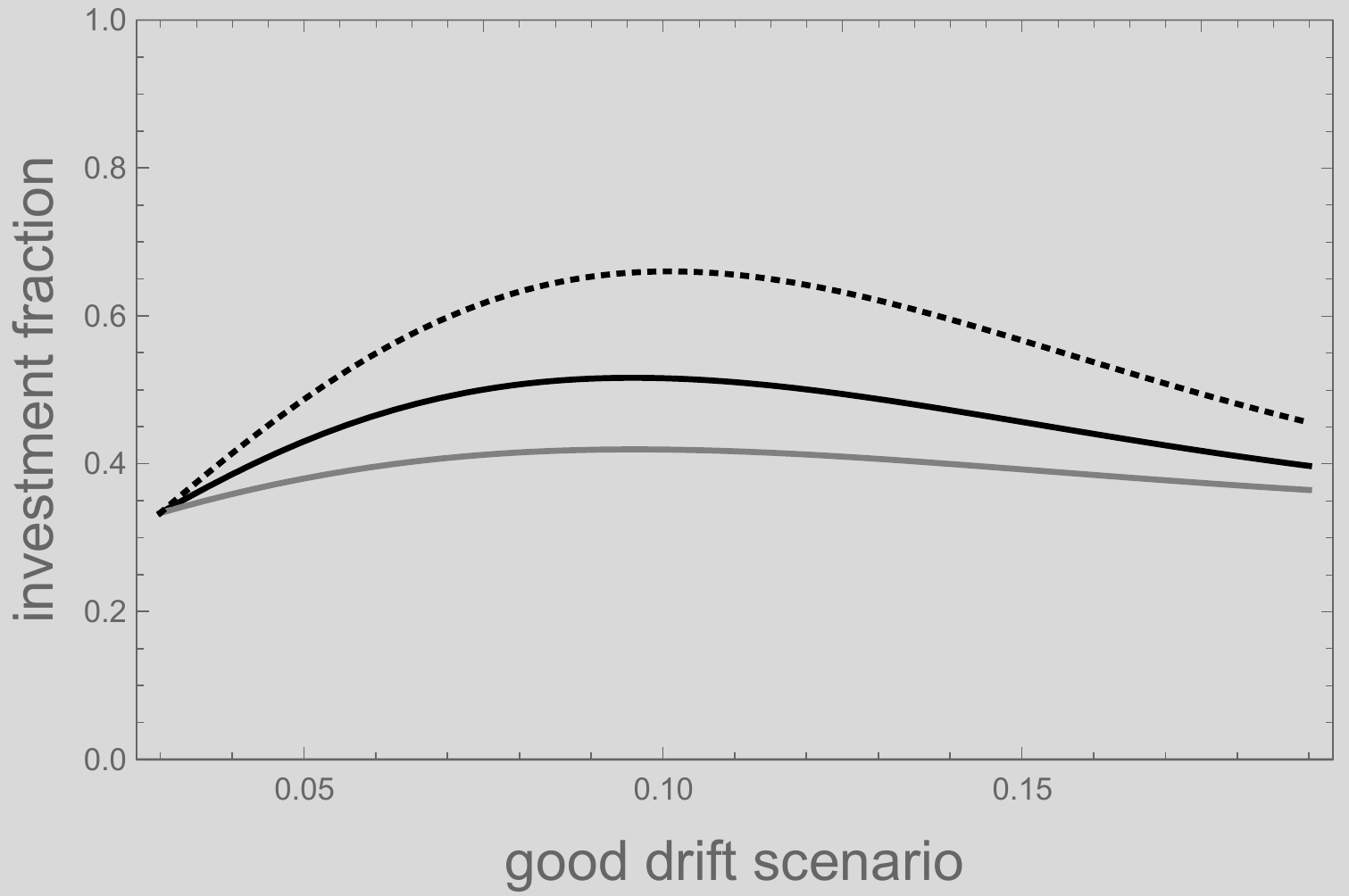}	
\includegraphics[width=0.45\textwidth]{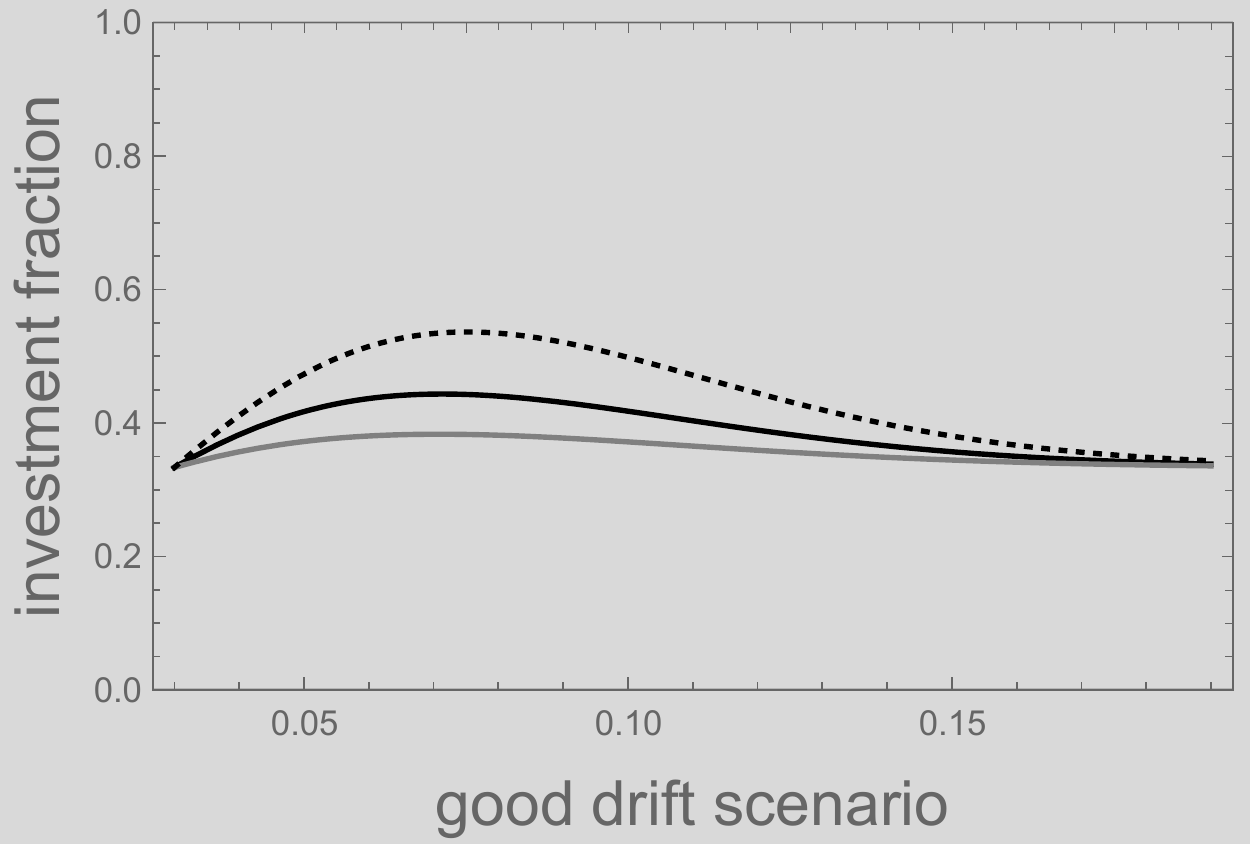}
\includegraphics[width=0.45\textwidth]{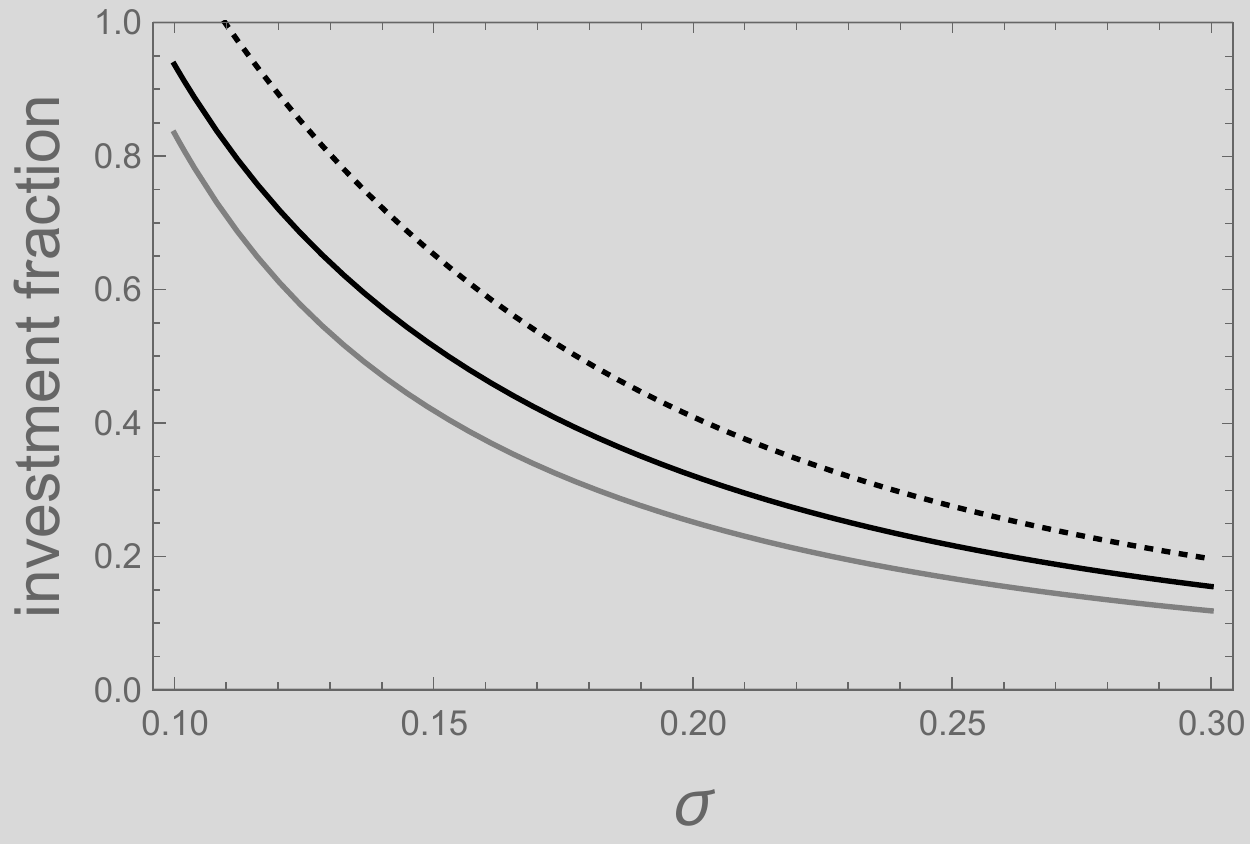}	
\includegraphics[width=0.45\textwidth]{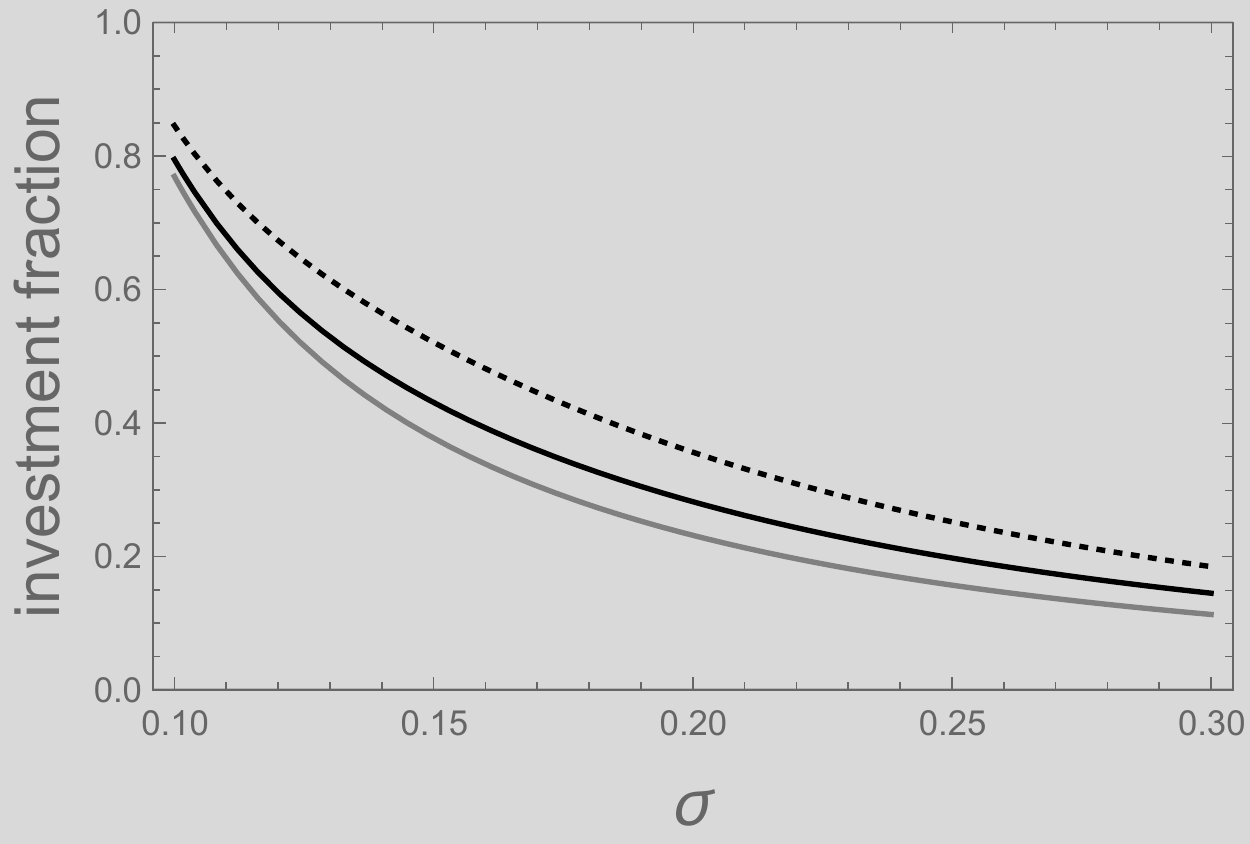}
\end{center}		
\caption{The figure depicts the impact of varying $\overline\mu$ (upper plots) and varying $\sigma$ (lower plots) on the investment fraction $\kappa^*$. It is based on the benchmark model parameter setup of Table \ref{tab_bench_NA} with the exception that the level of relative risk aversion is equal to 4 ($\alpha=-3$). While the black lines refer to $p=0.5$ (benchmark), the gray (dotted) lines are based on $p=0.25$ ($p=0.75$). In addition, the left (right) plots refer to $T=10$ ($T=20$).
}\vspace*{0.15cm}
	\label{Fig_inner_risk}
\end{figure}

An illustration is given by means of Figure \ref{Fig_inner_risk}. It depicts the impact of varying $\overline\mu$ (upper plots) and varying $\sigma$ (lower plots) on the investment fraction $\kappa$. Observe that, in spite of the opposing effects, the optimal investment fraction $\kappa$ is decreasing in $\sigma$. However, varying only the upper drift $\overline\mu$ emphasizes the opposing effects. The investment fraction is first increasing (i.e. the inner risk impact is dominating) while, after a critical drift level, it is decreasing (i.e. the hedging motive is dominating). In addition, observe that the critical level is decreasing in $T$.
Finally, it is also worth to emphasize that the inner risk parameters have a crucial impact on learning. Notice that $Y_t$ depends on both, $\overline\mu$ and $\sigma$.  

\subsection{Impact of ambiguity}
 Throughout the following, we assume that the investor is more risk averse than the log-investor, i.e. $\alpha<0$.
 Notice that the investor is risk averse if and only if $u$ is concave, while she is ambiguity averse if and only if $v$ is a  concave  transformation  of $u$.
 Assuming that $u$ and $v$ are CRRA functions with relative risk (ambiguity) aversion $1-\alpha$ ($1-\lambda$)  implies that an ambiguity neutral investor is characterized by $\lambda=\alpha$ while she is ambiguity averse (loving)   for $\lambda<\alpha<0$ ( $0>\lambda>\alpha$). 
 
Again, we consider the impact of the outer risk (evaluated by the ambiguity function $v$) by means of the weight on the lower (upper) Merton solution (implied by the utility function $u$), i.e. we are interested in $g(\alpha, p^{\text{mod}} ,\dots)$ where 
$$p^{\text{mod}}:=q_1^*/(q_1^*+q_2^*)$$
(cf. Lemma \ref{lem:optimaldistorsion1}, \ref{lem:optimaldistorsion2} and \ref{lem:optimaldistorsion3}). 
We call $p^{\text{mod}}$ the adjusted or modified probability.
The ambiguity neutral investor serves as a benchmark since  $p^{\text{mod}}=p$, i.e. the modified probability coincides with the prior probability.

Recall (cf. Subsection \ref{subsubsec_prior})  that the weight on the lower Merton solution is decreasing in the prior probability $p$.
Thus, introducing ambiguity aversion (loving) is equivalent to decreasing (increasing) the prior probability $p$ for the good drift scenario. 
Assume now that both, the level of risk aversion and ambiguity aversion are above the one of the benchmark log-investor, i.e. $\alpha<0$ and $\lambda<0$. Recall that we already discussed the special case $\lambda=\alpha$ in Subsection \ref{subsec_disc_a}. Intuitively, the investor chooses less risk  (a higher weight on the lower Merton solution) under ambiguity aversion, i.e. if the relative ambiguity aversion is higher than the level of risk aversion, i.e. if $\lambda<\alpha$ ($1-\lambda>1-\alpha$). In terms of the probability adjustment this implies that 
the modified  probability  $p^{\text{mod}}$ is smaller than $p$, i.e. the investor is characterized by a lower weight on the upper Merton solution than the ambiguity neutral investor. In contrast,  if $\lambda>\alpha$ ($1-\lambda<1-\alpha$) implies that $p^{\text{mod}}$ is higher than $p$.  
A numerical illustration of the modified probability is given in Figure \ref{Fig_ill_23_a} (left plot).
The level of relative risk aversion is $1-\alpha=4$. Notice that the modified prior probability $p^{\text{mod}}$  is decreasing in the level of relative ambiguity aversion $1-\lambda$  and tends to zero for $1-\lambda\to\infty$. Mathematically this is clear since the smaller of the two values in \eqref{prob:441} (note $\lambda<0$) dominates the optimization problem. From an economic point of view the very ambiguity averse investor tries to maximize her worst-case utility. In particular, it is above (below) the prior $p$ in the case that   $1-\lambda<1-\alpha$ ($1-\lambda>1-\alpha$).
Notice that the effect is more pronounced for higher times to maturity $T$, i.e. the impact on the modified prior probability is increasing in $T$. However, recall that the impact of the prior probability is decreasing in $T$, i.e. for $T\rightarrow \infty$ the (more than log) risk averse investor only relies on the worst drift scenario. Hence the effect of ambiguity aversion is fading for large time horizons. This is natural, since a large time horizon allows for perfectly learning the model.  For $T\to 0$ the modified prior tends to the initial given prior. This is because for a very short time horizon there is almost no difference between the two scenarios. However, recall that $p^{mod}$ depends on both, time horizon and ambiguity aversion. Thus, in the end the combination of both effects is crucial.  The overall effect on the investment fraction (represented by the weight on the lower Merton solution) is {\it{smooth}} (cf. Figure \ref{Fig_ill_23_a} (right plot)). 

\begin{figure}[tb]
	\begin{center}
			{\bf{Impact of ambiguity aversion on modified probability $p^{\text{mod}}$ and weight on lower Merton solution}}
		\end{center}
	\begin{center}
			\includegraphics[width=0.45\textwidth]{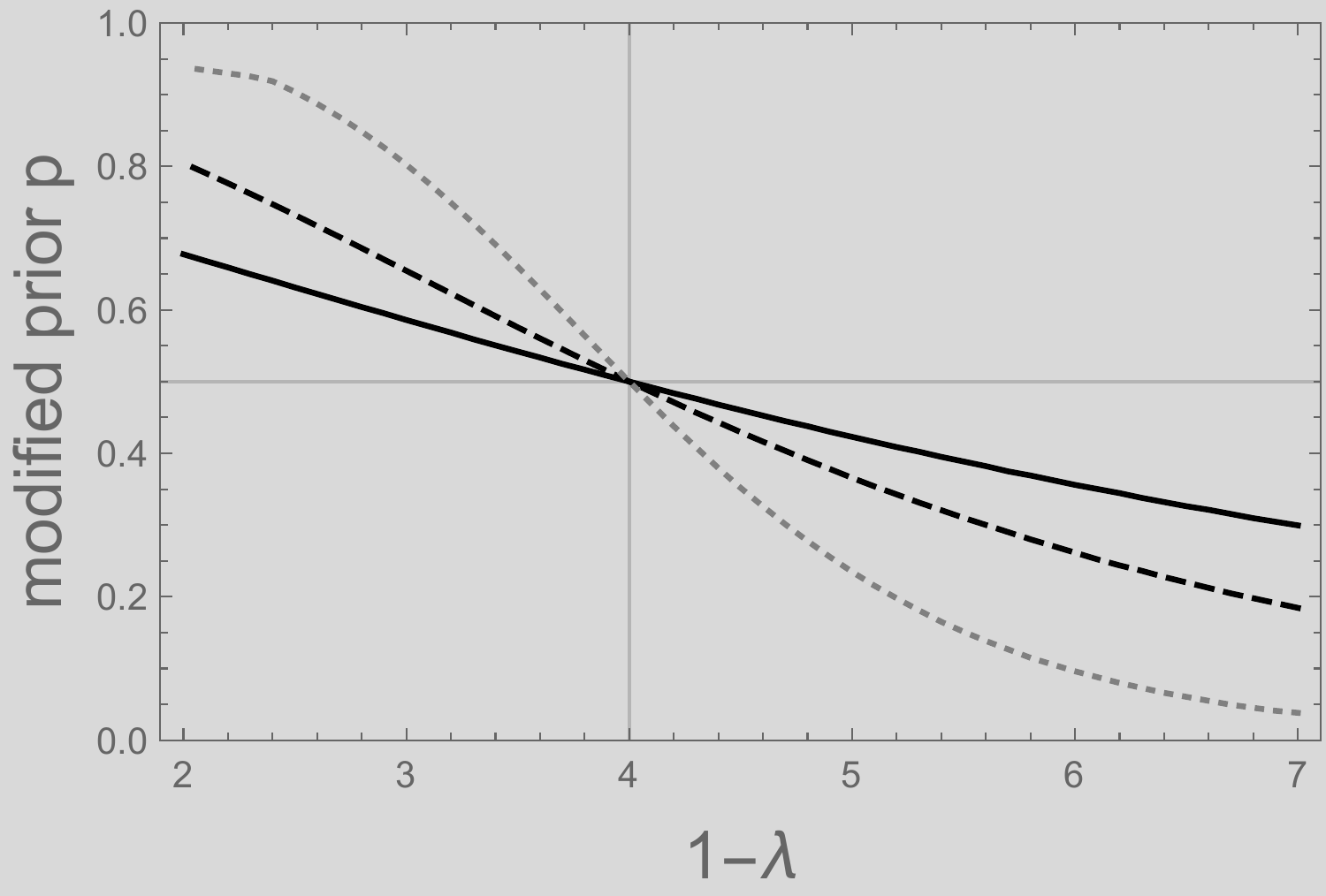}	
			\includegraphics[width=0.45\textwidth]{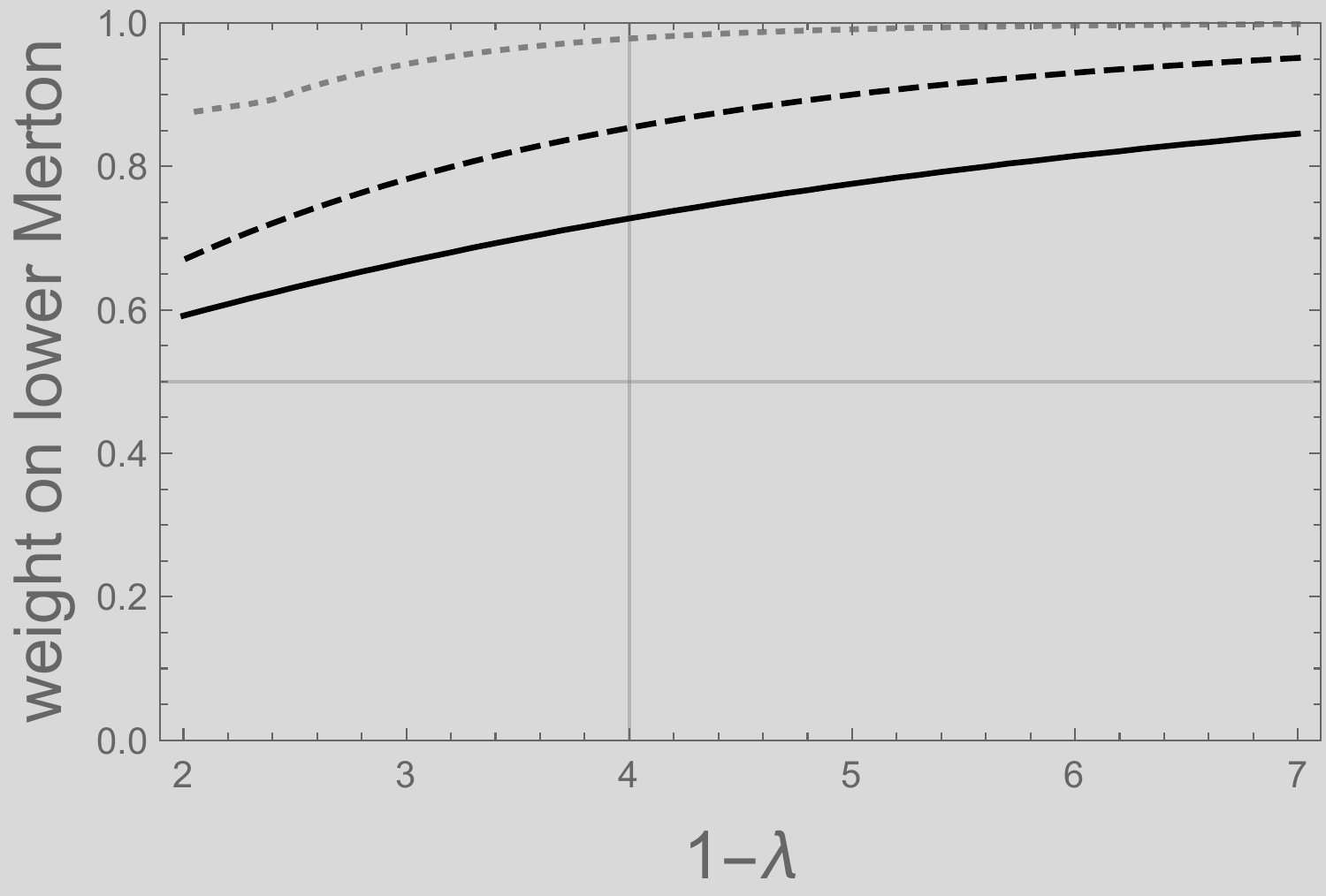}	
\end{center}
\caption{Model parameters are given as before (cf. Table \ref{tab_bench_NA}).  The prior probability for the good drift scenario is $p=0.5$, the level of relative risk aversion is equal to 4 ($1-\alpha=4$). The figure on the left hand side depicts the modified prior probability for varying levels of relative ambiguity aversion $1-\lambda$.
The figure on the right hand side gives the corresponding weights on the lower Merton solution.
The black lines refers to an investment horizon of  $T=10$ years, the dashed one to $T=20$, and the gray one to $T=50$.}\vspace*{0.15cm}
	\label{Fig_ill_23_a}	
\end{figure}

\subsection{Comments on implementation and practical relevance}
We give some brief comments on the practical relevance of the derived strategies w.r.t.\  their implementation on historical return data, the possibilities to test the out-of-sample performance and the complexity caused by these problems.
Since the objective to maximize the expected value (w.r.t.\ different drift scenarios) of the expected utility of an investor with constant relative risk aversion is motivated by estimation risk, we
also refer to other approaches which deal with estimation risk and some empirical results.

In the first instance, it is convenient to consider  the problem in the classic mean variance setup.
The discrete time version of the price dynamics of the risky assets $S_1,\dots, S_d$ has returns $r_{it}=\frac{S_{it}}{S_{i,t-1}}-1$ and log returns $\ln (1+r_{it})\approx r_{it}$.
Let $r_{ft}$ denote the return of the {\it{risk free}} asset, the excess return is defined by $R_t:=r_t-r_{ft}1_d$ where $1_d$ is an $d$-vector of ones. It is assumed that $R_t$ follows a multivariate normal distribution with mean $\mu$ and covariance matrix $\Sigma$.
For portfolio weights $\kappa$ (for the risky assets)
the objective function of a short-term expected utility maximizing  investor with constant relative risk aversion  is, in the above context, equivalent  to the mean-variance objective function
\begin{align*}
U(\kappa) = \mu^{\text{PF}}-\frac{1-\alpha}{2}\left(\sigma^{\text{PF}}\right)^2= \mu^{\text{PF}}-\frac{1}{2\gamma}\left(\sigma^{\text{PF}}\right)^2
\end{align*}
where
$1-\alpha=\frac{1}{\gamma}$ denotes the level of relative risk aversion,
 $\mu^{PF}=\kappa' \mu$ the portfolio mean, and $\left(\sigma^{\text{PF}}\right)^2=\kappa'\Sigma^{-1}\kappa$ the portfolio variance. For known $\mu$ and $\Sigma$, the optimal portfolio weights are
\begin{align*}
  \kappa^*=\frac{1}{1-\alpha} \Sigma^{-1}\mu= \gamma (\sigma^T)^{-1}\vartheta=: \kappa^{\text{Mer}}(\gamma,\vartheta)
\end{align*}
Since the true model parameters $\mu$ (and $\Sigma$) are not known, it is not possible to implement the optimal investment fraction  $\kappa^*$.  A naive way to solve the problem is the so called plug-in rule 
where the investor relies on the sample estimates of $N$ periods of observed returns (training data) and treats them as true parameters.

Along the lines of \cite{kan2007optimal}, a portfolio rule $\hat\kappa$ is defined by a function $f$ of the training data $R_1,\dots, R_N$ and the out-of-sample performance of a portfolio rule  can be measured by the expected utility $\tilde U$ conditional on the weights being chosen by $\hat \kappa$. In particular, they consider the loss by means of the difference $U(\kappa^*)-E[\tilde U(\hat\kappa)]$.  \cite{kan2007optimal} analytically derive the expected loss function for the (naive) plug-in approach and show that this approach can lead to very poor out-of-sample performance.

To implement our  (dynamic) strategy (which includes learning), we need a prior distribution for $\mu$. Thus,  it is, in the first instance necessary to determine a {\it{suitable}} estimator for the prior distribution  (cf. e.g. \cite{bauder2021bayesian}). An interesting question is whether it is possible to reduce the problem to a two point prior without causing too much sub-optimality (in particular, w.r.t the dynamic version of the strategy).
In addition, the implementation of the strategy affords an estimator for the covariance matrix $\Sigma$ (where $\sigma^T=\sigma^{-1} \Sigma$).
For example, \cite{ledoit2017nonlinear}
promote a nonlinear shrinkage estimator and show that their estimator dominates its competitors on historical stock returns data.

\begin{figure}[tb]
\begin{center}{\bf{S\&P500 index value and daily returns}}
\end{center}
	\begin{center}
\includegraphics[width=0.45\textwidth]{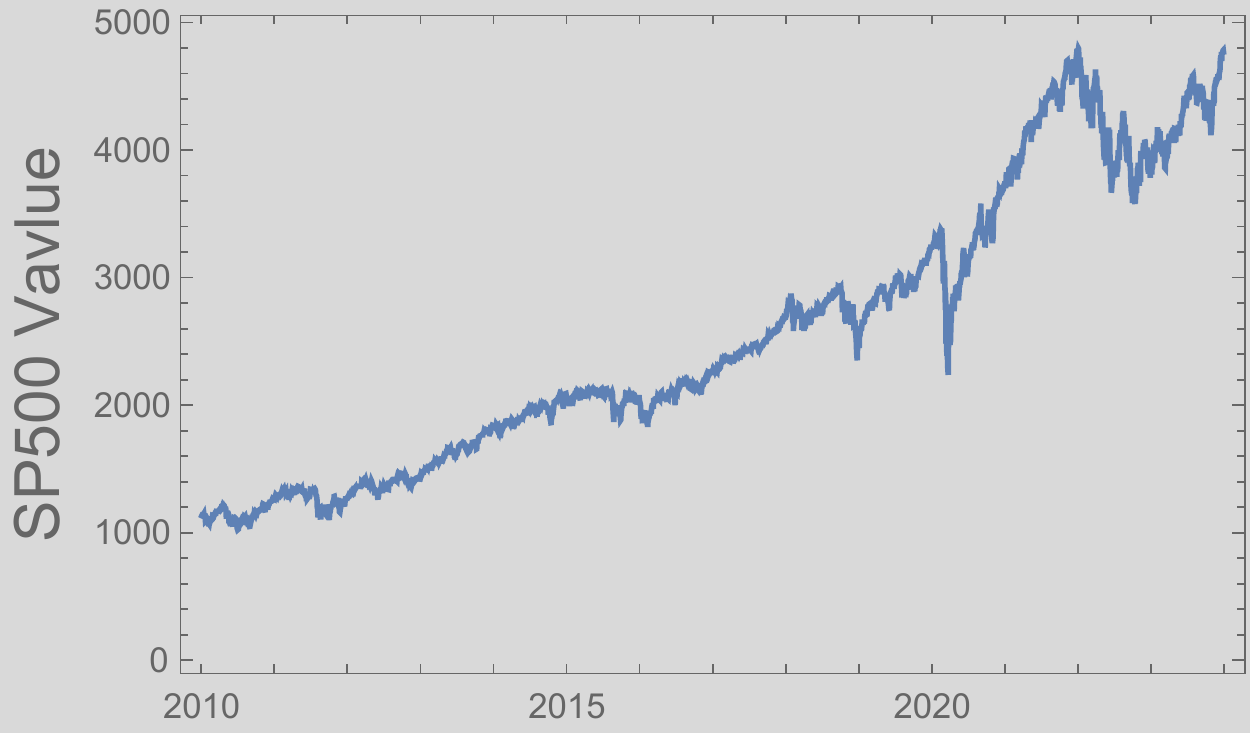}
\includegraphics[width=0.45\textwidth]{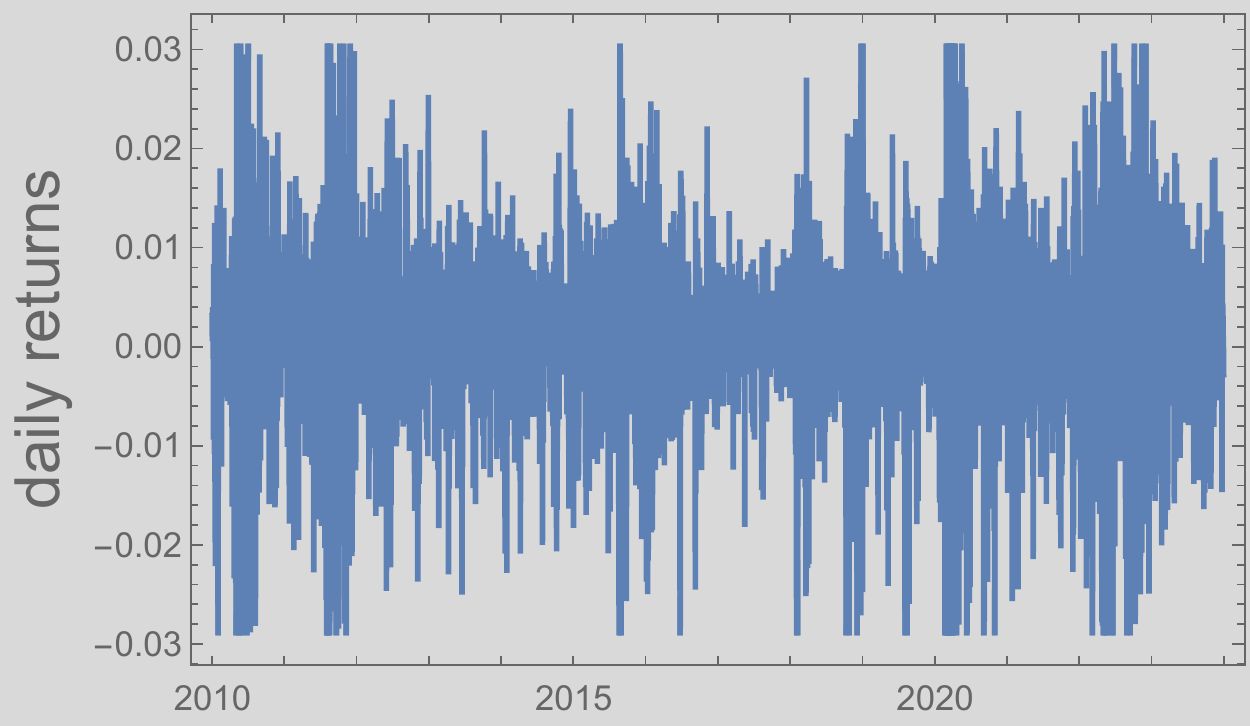}
\end{center}
\caption{S\&P500 price path and daily returns from $2010.01.01$ to $2024.01.01$.}
\label{Fig_SP500}
\end{figure}

\begin{figure}[tb]
\begin{center}{\bf{Comparison of strategies for volatility windows (250 versus 50 days)}}
\end{center}
	\begin{center}
\includegraphics[width=0.45\textwidth]{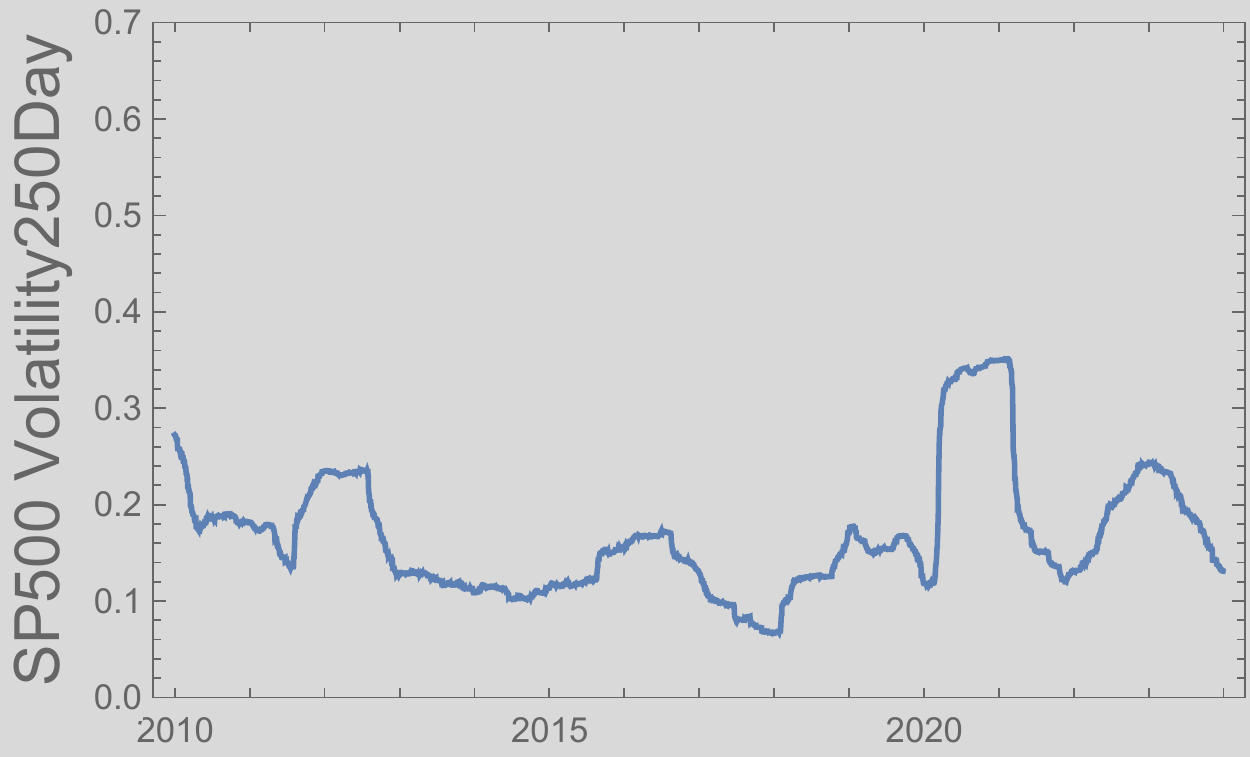}
\includegraphics[width=0.45\textwidth]{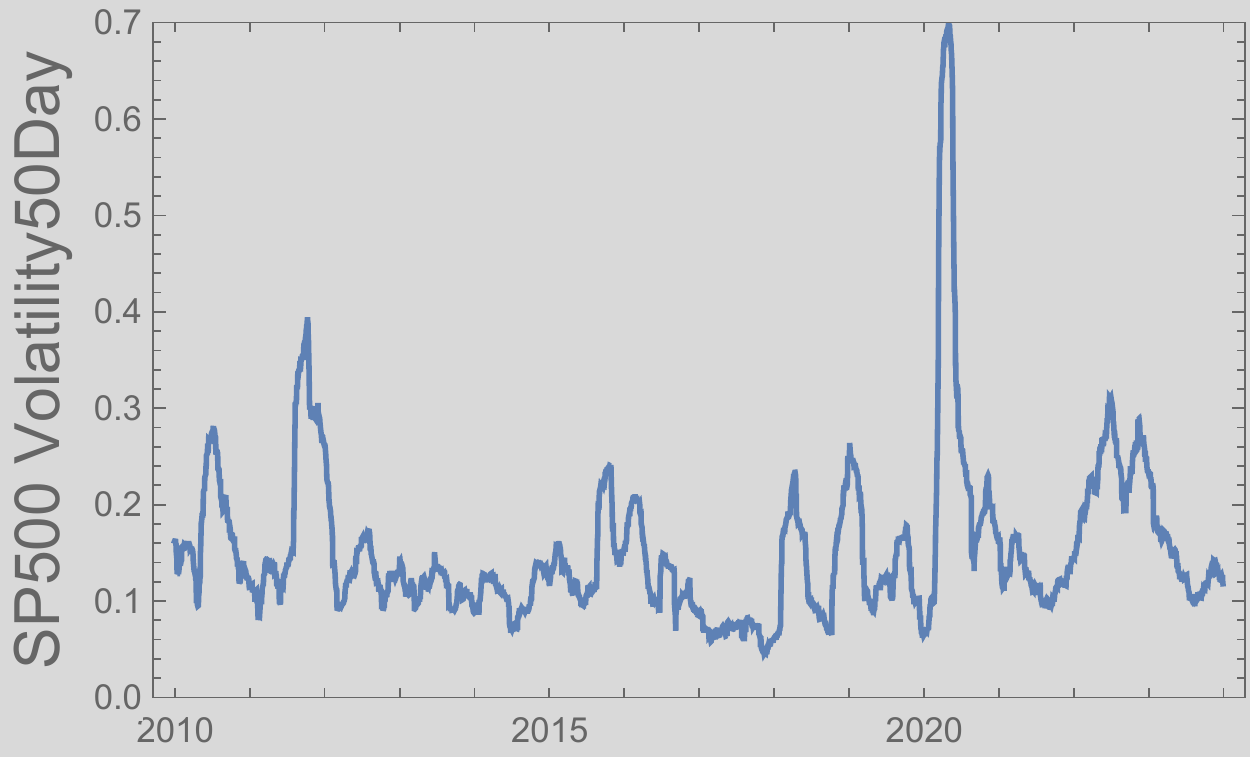}
\includegraphics[width=0.45\textwidth]{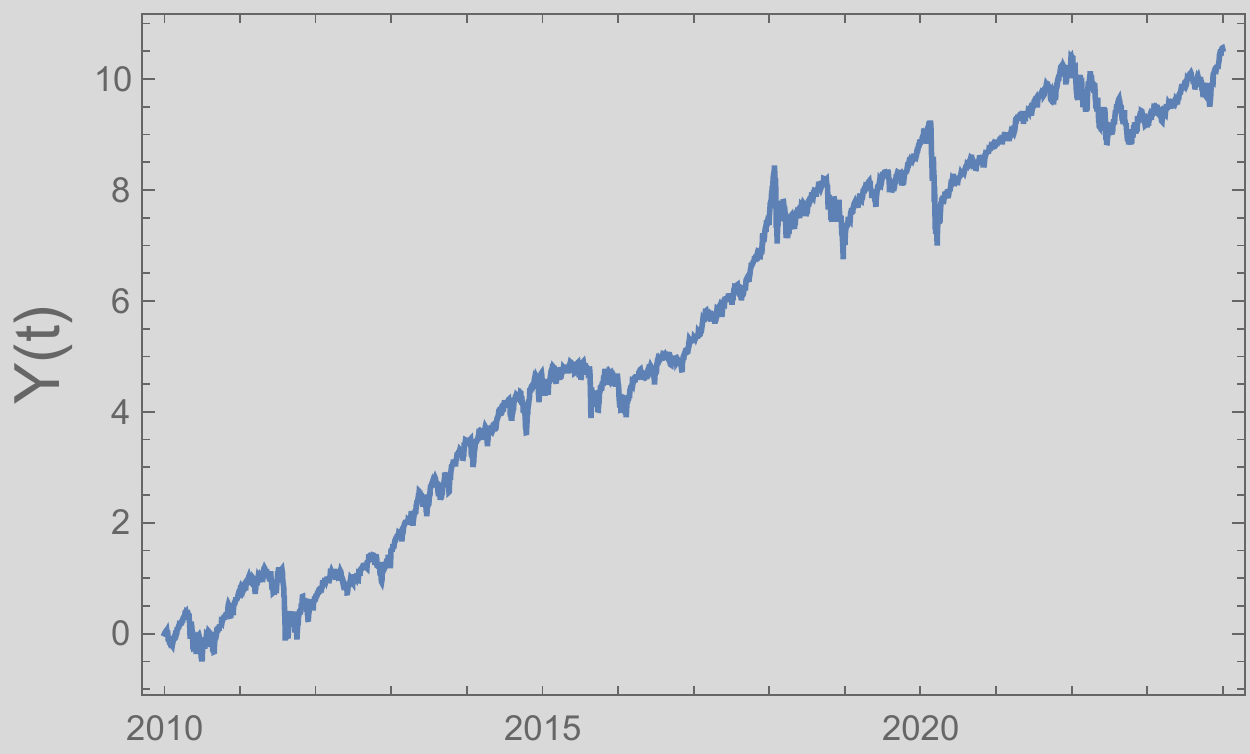}
\includegraphics[width=0.45\textwidth]{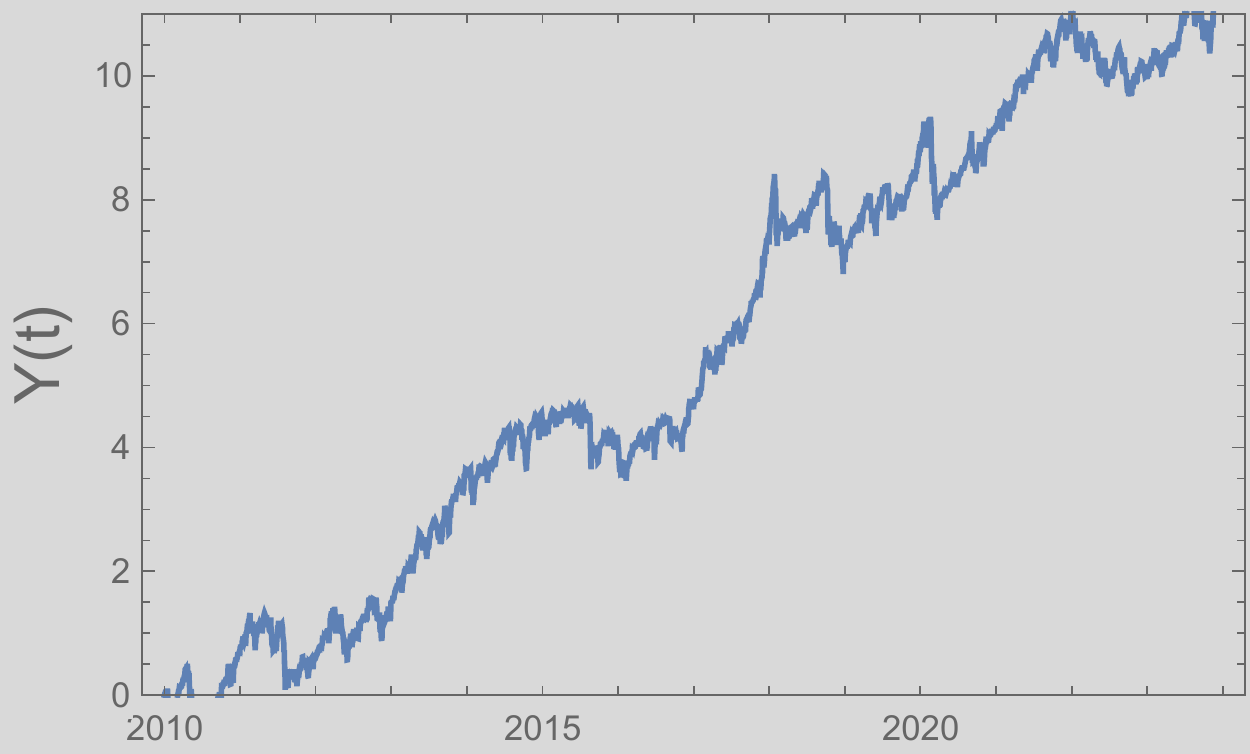}
\includegraphics[width=0.45\textwidth]{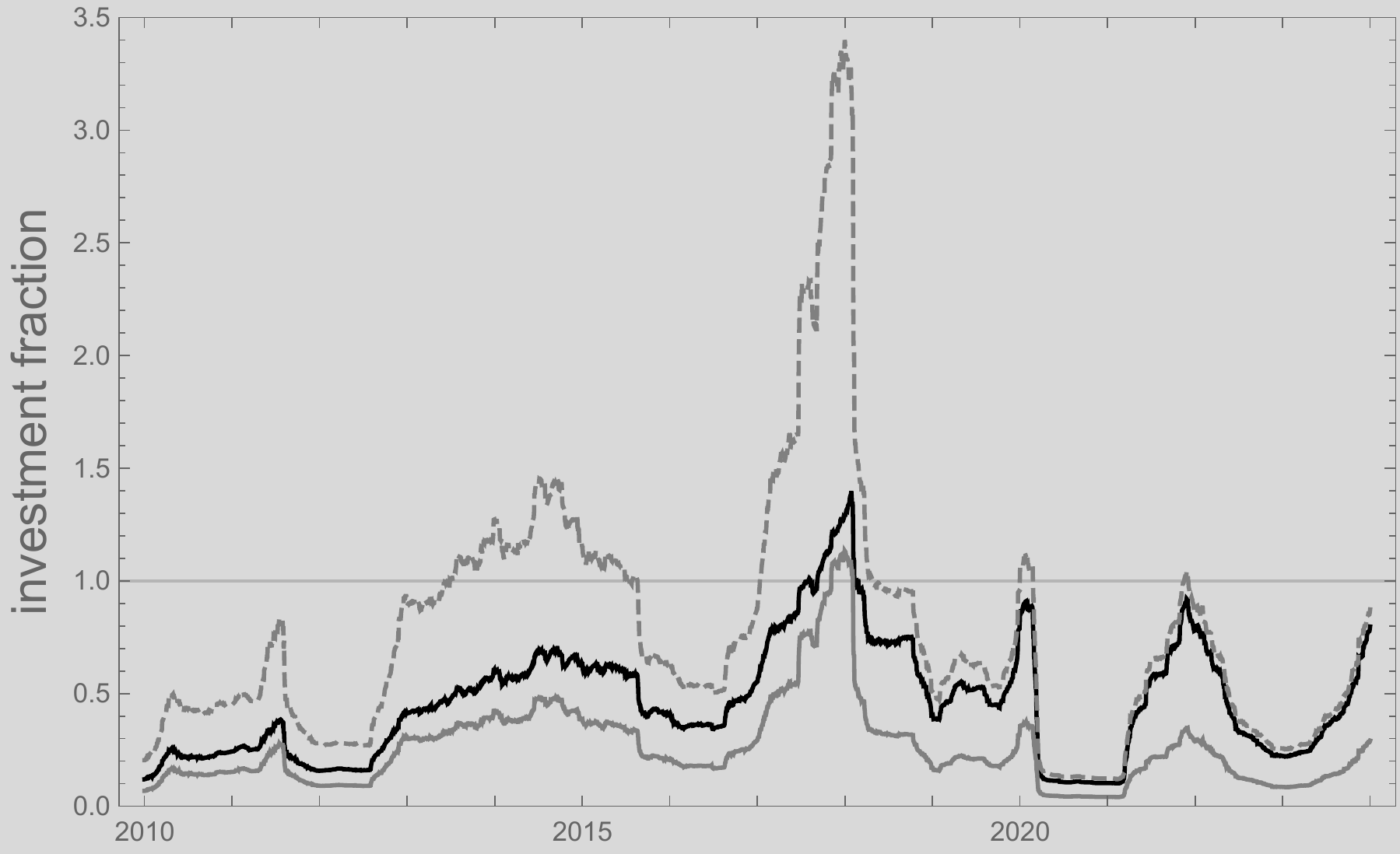}
\includegraphics[width=0.45\textwidth]{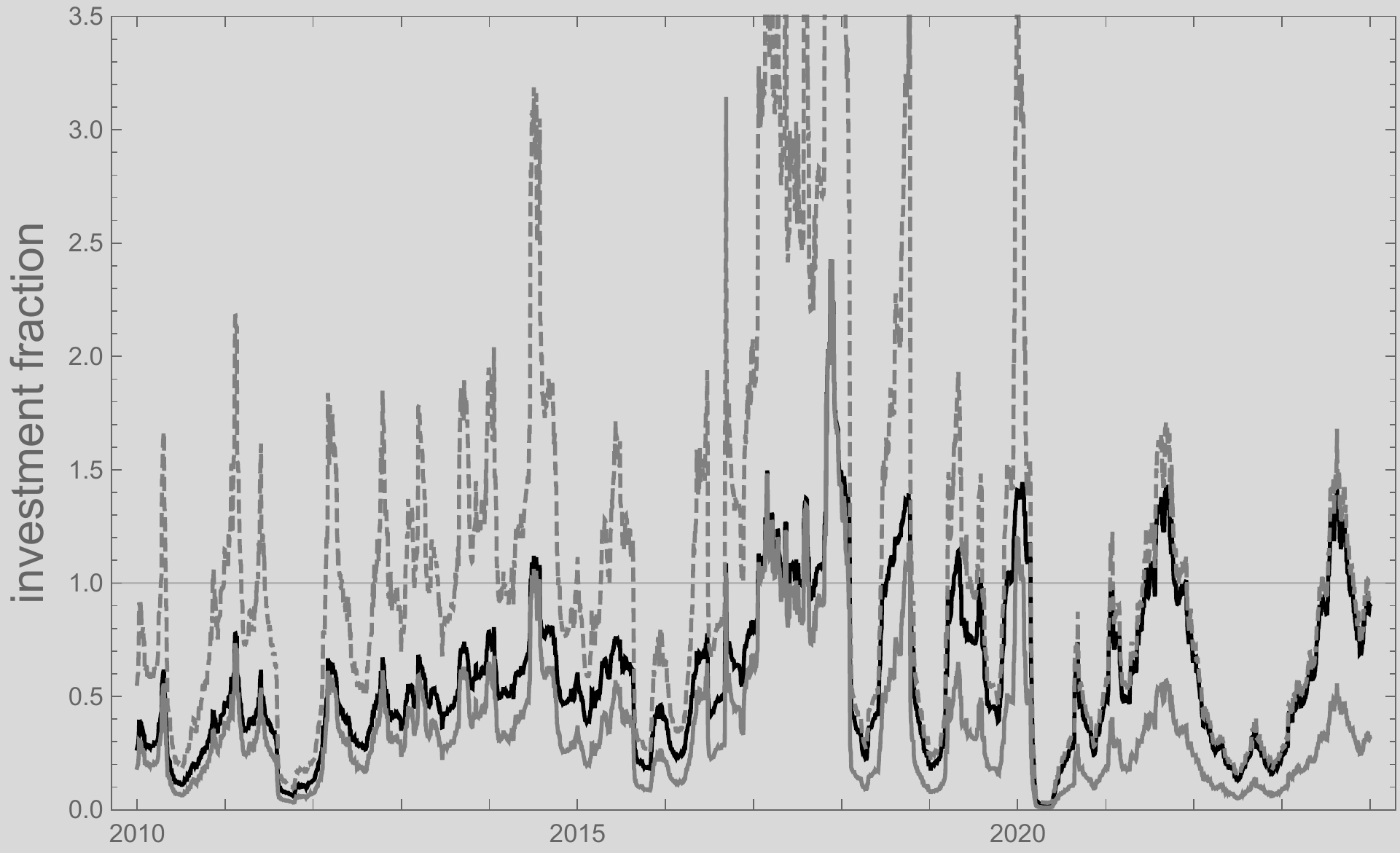}
\end{center}
\caption{The figure depicts the historical volatilities of the S\&P500 index for a rolling window of 250 and 50 trading dates (upper plots). Scaling the daily S\&P500 index returns with the historical volatilities then gives the daily increments of $Y_{t}-Y_{t-1}$ where $Y(0)=0$. The plots in the middle illustrate $Y(t)$ based on the volatility estimates resulting from a window of the prior 250 (left hand plot) and 50 (right hand plot) trading dates. The lower plots then depict the investment fractions of an investor with a level of relative risk aversion of 6 ($\alpha=-5$) (and $T=14$ years) who relies on $\overline \mu=0.09$ and $\underline\mu=0.03$ with a prior probability $p=0.5$ (black line). The learning strategy $\kappa$ is compared to a {\it{naive}} investor who uses $\overline\mu=\underline\mu=0.9$ (dashed gray) and one with $\overline\mu=\underline\mu=0.3$ (gray).}
\label{Fig_imp}
\end{figure}
\begin{figure}[tb]
\begin{center}{\bf{Comparison of Bayesian strategies (different $(\overline\mu,\underline\mu)$ and $p$)}}
\end{center}
	\begin{center}
\includegraphics[width=0.45\textwidth]{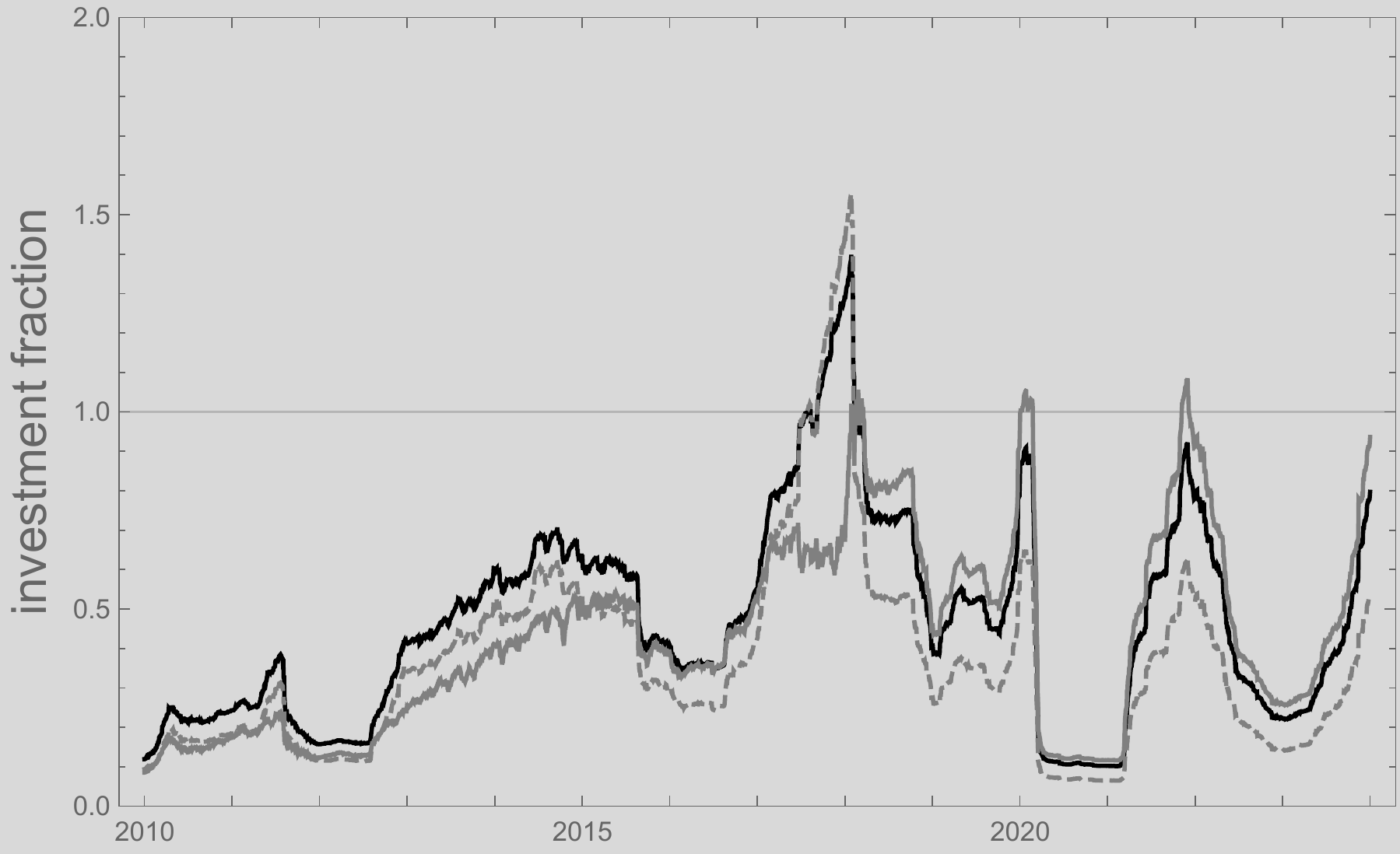}
\includegraphics[width=0.45\textwidth]{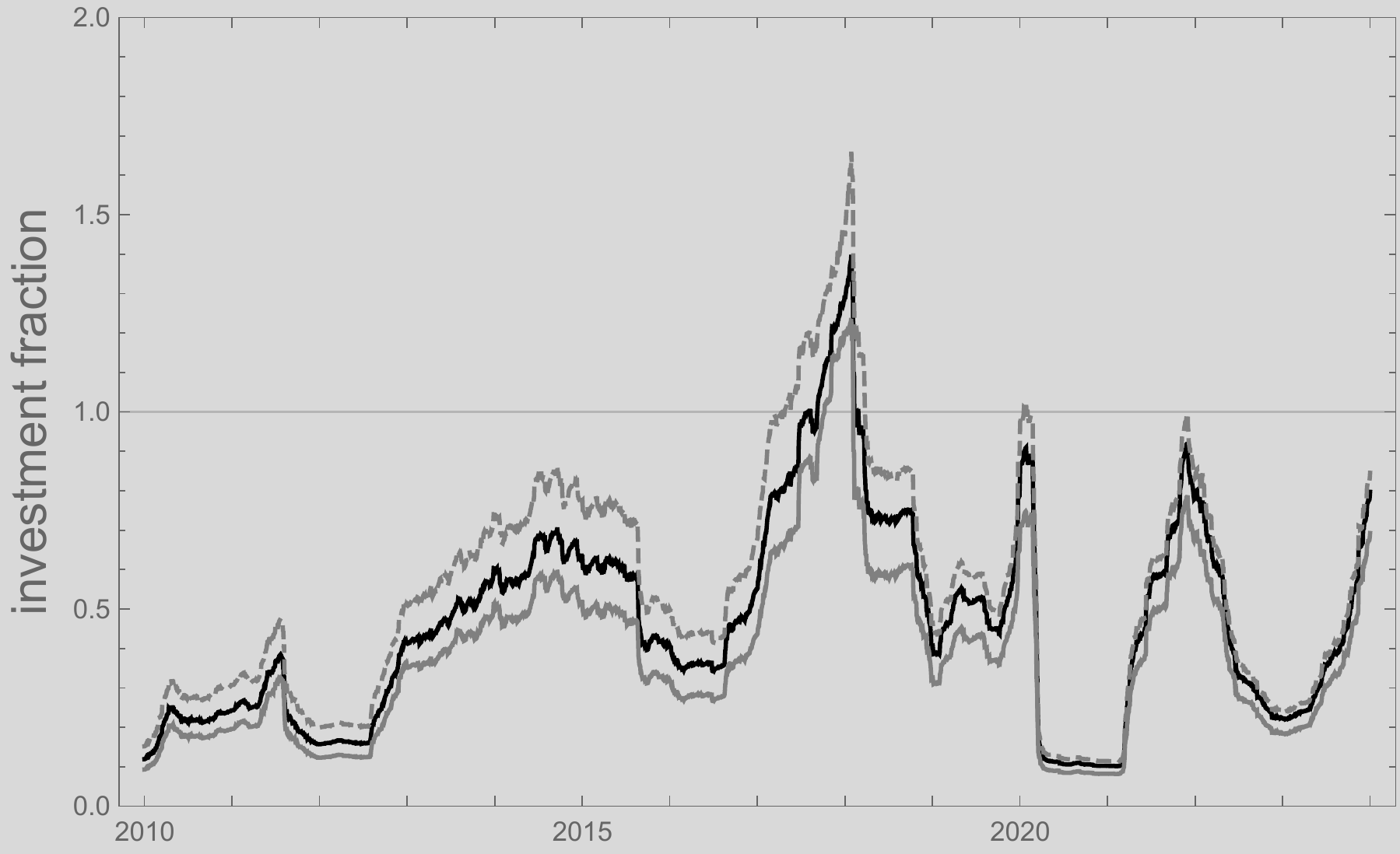}
\end{center}
\caption{The figure illustrates variations of the investment fraction of the benchmark investor ($(\overline\mu,\underline\mu)=(0.09;0.03)$ and $p=0.5$) considered in Figure \ref{Fig_imp}. The benchmark is indicated by the black line. In the left plot, the gray (dashed) line is implied by $(\overline\mu,\underline\mu)=(0.1;0.01)$ ( $(\overline\mu,\underline\mu)=(0.06;0.02)$). The right plot uses modified a priori probabilities, i.e. $p=0.25$ (gray line) and $p=0.75$ (dashed line).}
\label{Fig_imp2}
\end{figure}

Once  $\hat\Sigma$ and the prior distribution of $\mu$ are calculated by means of the training data,   the investor can initialize  the strategy with $Y(0)=0$. Afterwards (at time $t$), it is necessary to plug-in the observed cummulated increments $Y_{t-1}+\Delta \hat Y_t$.
Notice that the increments are based on the estimate $\hat\sigma$ (the estimated covariance matrix $\hat\Sigma$, respectively).
Thus, a practical application may also reconsider the estimated covariance matrix $\hat\Sigma$ and  the possible (two point) values of $\mu$.

To get some further intuition about the implementation and the problems herein, consider e.g. 
the case of one risky asset where 
$S_t=S_0\exp(\sigma W_t+\mu t-\frac12 \sigma^2 t)=S_0\exp(\sigma Y_t -\frac12 \sigma^2 t)$
such that
$$ Y_t-Y_{t-1} = \frac1\sigma\ln (S_t/S_{t-1})+\frac12 \sigma \approx  \frac1\sigma (S_t/S_{t-1}-1)+\frac12 \sigma   $$


For a given (discounted) price path of the risky asset (e.g. with daily observations),  a naive way is to plug for $\hat\sigma$ the historical volatility of the price path over a rolling window of prior trading dates. To illustrate the implementation, assume that the (discounted) price path of the risky asset is given by the S\&P500 index values from $2010.01.01$ to $2024.01.01$
(cf. Figure \ref{Fig_SP500}).
The upper plots of Figure \ref{Fig_imp} illustrate the corresponding historical volatilities based on the last 250 (and 50) trading dates.
The resulting $Y$ pathes are depicted in the middle plots. It is worth to emphasize that the asset price path (S\&P500 index path) is one realization of the {\it{true data generating process}} such that the corresponding path of $Y$ only depends on the volatility estimator.
Now consider the path of the investment fraction $\kappa$ of an Bayesian investor with constant relative risk aversion $1-\alpha=6$ and an investment horizon of $T=14$ years. In addition to $Y(t)$, her investment fraction  depends on her assumptions about  $(\underline\mu,\overline\mu)$, and her prior probability $p$.
 For both volatility windows (250 and 50 days),  the lower plots of Figure \ref{Fig_imp} illustrate the investment fraction $\kappa$ for $\overline \mu=0.09$ and $\underline\mu=0.03$ with a prior probability $p=0.5$ (black line). The strategy $\kappa$ is then compared to a {\it{naive}} investor who uses $\overline\mu=\underline\mu=0.9$ (gray dashed line) and one with $\overline\mu=\underline\mu=0.3$ (gray lines).
 Figure \ref{Fig_imp2} then illustrates variations of the investment fraction of the benchmark investor (i.e. variations of $(\overline\mu,\underline\mu)=(0.09;0.03)$ and $p=0.5$).
The question which presents itself concerns the robustness of the Bayesian strategies and their performance w.r.t. the distributional assumption posed on $\mu$ (and the volatility estimation).
A thorough empirical analysis (or Monte Carlo study based on return data) which also incorporates the impact of ambiguity aversion (and a suitable out-of-sample performance measure to compare our strategies with other ones) is beyond the scope of the paper and left for future research.


\section{Conclusion} We consider a classical multi-asset investment problem within a Black Scholes market with terminal utility of CRRA type. In contrast to established literature we include ambiguity aversion by means of the KMM (\cite{klibanoff2005smooth}) approach  where we assume that the drift of the stocks is not known (model ambiguity), but only a prior distribution is given and the investor is able to learn about the drifts by observing the stock prices. Thus, a second utility function of CRRA type for ambiguity aversion is included. We show analytically how problems of this type can be reduced to the solution of a classical Bayesian investment problem with adapted prior. Based on this result we are able to carry out an extensive numerical study where we discuss the impact of ambiguity preferences. It turns out that investors choose less risk (in terms of a higher weight on the lower Merton ratio) under ambiguity aversion. We consider in particular the short-term and the long-term investment behavior. For the long-term investment it turns out that only the risk aversion of the investor matters  and the investor behaves like one who knows the drift and this is either the most optimistic one (if she is less risk averse than the log-investor) or the most pessimistic one (if she is more risk averse than the log-investor) independent from model ambiguity. Whereas it is clear what this means in the single-asset case, in the multi-asset scenario we first have to figure out what the worst and best drift parameters indeed are. We have shown that the Euclidean norm of the drift vector is here the relevant quantity. For the short-term horizon investment,  only the prior distribution is important and the solution is given by the average of within regime Merton solutions. This is again independent from model ambiguity. We indicate that an approach like this may be generalized to situations where uncertainty and ambiguity are measured by other means.

\section{Appendix}
\subsection{Duality results}\label{app:duality}
Throughout we assume here that $X\ge 0$ and all integrals exist. The duality result for the $L^\mathbf{p}$ norm (Lemma \ref{lem:duality_1}) is well-known \cite{rudin1991functional} and we do not repeat a proof here.

\subsubsection{Dual representation for $0<\mathbf{p}<1$}
Again let $\mathbf{q}$ be  such that $1/\mathbf{p}+1/\mathbf{q}=1$ which implies $\mathbf{q}<0$ and let $\mathfrak{Q}'$ be as defined in Lemma \ref{lem:duality2}. In order to show the statement we first need the following variant of the H\"older inequality.

\begin{lemma}\label{lem:holder}
 Let $X,Y$ be two non-negative random variables  such that all integrals exist.
 Then
 $$ \int XY d\PP \ge \Big( \int X^\mathbf{p} d\PP \Big)^{1/\mathbf{p}} \Big( \int Y^\mathbf{q}d\PP \Big)^{1/\mathbf{q}}. $$
\end{lemma}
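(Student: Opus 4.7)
The plan is to derive this reverse Hölder inequality from the standard Hölder inequality by a clever factoring. The key observation is that although $\mathbf{p}\in(0,1)$ is not a valid Hölder exponent in the usual sense, the pair $(1/\mathbf{p},\,1/(1-\mathbf{p}))$ is: both are strictly greater than $1$ and they are conjugate to each other (their reciprocals sum to $1$). So I would apply ordinary Hölder with these exponents to a suitable decomposition.

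Concretely, I would write $X^{\mathbf{p}} = (XY)^{\mathbf{p}} \cdot Y^{-\mathbf{p}}$ (assuming $Y>0$ a.s.; the degenerate case where $\mathbb{P}(Y=0)>0$ forces $\int Y^{\mathbf{q}}d\mathbb{P}=+\infty$ since $\mathbf{q}<0$, and then the right-hand side is $0$ because $1/\mathbf{q}<0$, making the inequality trivial). Standard Hölder with conjugate exponents $1/\mathbf{p}$ and $1/(1-\mathbf{p})$ then yields
\begin{equation*}
\int X^{\mathbf{p}}\,d\mathbb{P} \;=\; \int (XY)^{\mathbf{p}} Y^{-\mathbf{p}}\,d\mathbb{P} \;\le\; \Bigl(\int XY\,d\mathbb{P}\Bigr)^{\mathbf{p}} \Bigl(\int Y^{-\mathbf{p}/(1-\mathbf{p})}\,d\mathbb{P}\Bigr)^{1-\mathbf{p}}.
\end{equation*}

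The next step is purely algebraic bookkeeping using the conjugacy relation $1/\mathbf{p} + 1/\mathbf{q} = 1$. From this one checks that $-\mathbf{p}/(1-\mathbf{p}) = \mathbf{q}$ and that $(1-\mathbf{p})/\mathbf{p} = -1/\mathbf{q}$. Raising both sides of the displayed inequality to the power $1/\mathbf{p}>0$ and rearranging gives
\begin{equation*}
\Bigl(\int X^{\mathbf{p}}\,d\mathbb{P}\Bigr)^{1/\mathbf{p}}\Bigl(\int Y^{\mathbf{q}}\,d\mathbb{P}\Bigr)^{1/\mathbf{q}} \;\le\; \int XY\,d\mathbb{P},
\end{equation*}
which is exactly the claim. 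The main (minor) obstacle is keeping the signs straight: because $\mathbf{q}<0$, the exponent $1/\mathbf{q}$ is negative, so one must be careful that raising a positive integral to a negative power reverses inequalities in the right way, and that in the degenerate case $Y=0$ on a set of positive measure the right-hand side vanishes rather than blowing up. Aside from this, the proof is a one-line application of ordinary Hölder followed by elementary manipulation of exponents.
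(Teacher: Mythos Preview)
Your proof is correct and follows essentially the same approach as the paper: both apply the standard H\"older inequality with conjugate exponents $1/\mathbf{p}$ and $1/(1-\mathbf{p})$ to the factorization $X^{\mathbf{p}}=(XY)^{\mathbf{p}}\cdot Y^{-\mathbf{p}}$, then unwind the exponents using $-\mathbf{p}/(1-\mathbf{p})=\mathbf{q}$. The paper first normalizes so that $\int XY\,d\PP=1$ and $(\int Y^{\mathbf{q}}d\PP)^{1/\mathbf{q}}=1$, whereas you carry the general constants through; you also handle the degenerate case $\PP(Y=0)>0$ explicitly, which the paper omits.
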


\begin{proof}
    Set $Z:= XY$ and w.l.o.g.\ we assume that $\int Z d\PP=1$ and $\Big( \int Y^\mathbf{q}d\PP \Big)^{1/\mathbf{q}}=1.$ Otherwise multiply the inequality by appropriate constants.  We then have to show that
$$ \int (Z/Y)^\mathbf{p} d\PP \le 1. $$
This can be done with the usual H\"older inequality. In what follows consider $Z^\mathbf{p}\in L^{1/\mathbf{p}}$ and $(1/Y)^\mathbf{p} \in L^{1/(1- \mathbf{p})}.$ We then obtain:
\begin{align*}
    \int Z^\mathbf{p} (1/Y)^\mathbf{p} d \PP  \le & \Big( \int (Z^\mathbf{p})^{1/\mathbf{p}} d\PP \Big)^{\mathbf{p}} \Big( \int (1/Y)^{\mathbf{p}/(1-\mathbf{p})}d\PP \Big)^{1-\mathbf{p}}\\
  =  &  \Big( \int Z d\PP \Big)^{\mathbf{p}}  \Big( (\int Y^\mathbf{q} d\PP )^{1/\mathbf{q}}\Big)^{(1-\mathbf{p})\mathbf{q}}=1.
\end{align*}
The last equation follows from our normalization.
\end{proof}

Now we prove Lemma \ref{lem:duality2}. In order to do this we show that for every $\Q\in \mathfrak{Q}'$ we have that 
$$ \int X d \Q  \ge \left(\int  X^\mathbf{p} d\PP\right)^{1/\mathbf{p}} $$
and there exists a $\Q^* \in \mathfrak{Q}'$ which attains equality. Thus, let first be $\Q\in \mathfrak{Q}'$. Then using Lemma \ref{lem:holder}
\begin{align*}
     \int X d \Q &=  \int X   \frac{d\Q}{d\PP} d \PP \ge \Big( \int X^\mathbf{p} d\PP \Big)^{1/\mathbf{p}} \Big( \int \Big(\frac{d\Q}{d\PP}\Big)^\mathbf{q}d\PP \Big)^{1/\mathbf{q}} \ge \Big( \int X^\mathbf{p} d\PP \Big)^{1/\mathbf{p}}.
\end{align*}
Next define $\Q^*$ by 
$$ \frac{d\Q^*}{d\PP} =  \Big(\int X^\mathbf{p} d\PP\Big)^{-1/\mathbf{q}}X^{1/(\mathbf{q}-1)} \ge 0.$$
Then $\Q^*\in \mathfrak{Q}'$ since
\begin{align*}
  \int   \Big(\frac{d\Q^*}{d\PP}\Big)^{\mathbf{q}} d\PP &= \Big(\int X^\mathbf{p} d\PP \Big)^{-1}\int X^{\mathbf{q}/(\mathbf{q}-1)} d\PP = 1
\end{align*}
where $\mathbf{q}/(\mathbf{q}-1)=\mathbf{p}.$ Moreover, 
\begin{align*}
 \int X d \Q^*  &=   \int X   \frac{d\Q^*}{d\PP} d \PP  =  \Big(\int X^\mathbf{p} d\PP\Big)^{-1/\mathbf{q}}  \int X X^{1/(\mathbf{q}-1)}   d \PP\\
 &= \frac{\int  X^{\mathbf{q}/(\mathbf{q}-1)}   d \PP}{\Big(\int X^\mathbf{p} d\PP\Big)^{1/\mathbf{q}} } =  \frac{\int  X^{\mathbf{p}}   d \PP}{\Big(\int X^\mathbf{p} d\PP\Big)^{1/\mathbf{q}} } = \left(\int  X^\mathbf{p} d\PP\right)^{1/\mathbf{p}}
\end{align*}
which concludes the proof.

\subsubsection{Dual representation for $\mathbf{p}<0$}
Let $0<\mathbf{q}<1$ be  such that $1/\mathbf{p}+1/\mathbf{q}=1$ and
$ \mathfrak{Q}'$ as before. In this case Lemma \ref{lem:duality2} holds too.
The proof is the same as before where we have to interchange the role of $X$ and $Y$ in the application of Lemma \ref{lem:holder}.

\subsubsection{Summary of different cases}
For the following table let $$F(\pi):=  \left(  \EE \left[\left( \EE_\Theta [(X_T^\pi)^\alpha]\right)^{\lambda/\alpha} \right]\right)^{\alpha/\lambda}, \quad G(\pi,\Q) :=\int \EE_{\vartheta} [(X_T^\pi)^\alpha] \Q(d\vartheta) . $$
Problem \eqref{eq:Aproblem} is equivalent to
$$\begin{array}{cc|c|c}
 & & \lambda >0 & \lambda < 0\\ \hline
  & \lambda/\alpha > 1 &  \sup_\pi F(\pi)  & \\
  & & = \sup_\pi \sup_{\Q \in \mathfrak{Q} }G(\pi,\Q) & \sup_\pi  F(\pi) \\
  \alpha >0 &&& \\
  & \lambda/\alpha < 1  & \sup_\pi F(\pi) & = \sup_\pi \inf_{\Q \in \mathfrak{Q}' } G(\pi,\Q)\\
  & & = \sup_\pi \inf_{\Q \in \mathfrak{Q}' } G(\pi,\Q) &\\ \hline
  & \lambda/\alpha > 1 &  & \inf_\pi  F(\pi)  \\
  &&  \inf_\pi F(\pi) & = \inf_\pi \sup_{\Q \in \mathfrak{Q} } G(\pi,\Q) \\
  \alpha < 0 && &\\
  && = \inf_\pi \inf_{\Q \in \mathfrak{Q}' } G(\pi,\Q  &  \inf_\pi  F(\pi) \\
  &\lambda/\alpha < 1 && = \inf_\pi \inf_{\Q \in \mathfrak{Q}' } G(\pi,\Q)\\ \hline
  \end{array}$$
  \vspace*{0.4cm}
  
  In all cases sup and inf can be interchanged.  

\subsection{Proof of Theorem \ref{theo:sensitivity}}\label{app:sensi}
Part a) follows directly by inspecting the expression in Theorem \ref{theo:Bayes}.  Note that for $T\to 0 $ the density $\varphi_{T-t}$ concentrates on the Dirac measure on zero and thus the integral vanishes.

Next we prove b).
The optimal fraction invested in stock $i$ at time $t$ given observation $y\in\R$ can more explicitly be written as (we denote $(\sigma^\top)^{-1}=: \tilde{\sigma}$):
\begin{align}\label{eq:optimalpi} \kappa_i(t,T,y) = \gamma  \frac{\int_{\R^d} \sum_{k=1}^m (\tilde{\sigma})_i \cdot \vartheta_k p_k L_T(\vartheta_k,y+z) F(T,y+z)^{\gamma-1}\varphi_{T-t}(z)dz}{\int_{\R^d} F(T,y+z)^\gamma\varphi_{T-t}(z)dz}\end{align}
where $(\tilde{\sigma})_i \cdot \vartheta_k = \sum_{j=1}^d (\tilde{\sigma})_{ij}  \vartheta_{kj}$ and
$\varphi_T(z) = (2\pi T)^{-d/2} e^{- \|z\|^2/2T}$ is the density of $N(0,TI)$.

We will show the statement for $t=0$ and $y=0$. The proof of the general case is similar.
First suppose that $\alpha\in (0,1)$. Define for $k=1,\ldots, m$:
$$ f_k(T) := \frac{\int_{\R^d} p_k L_T(\vartheta_k,z) F(T,z)^{\gamma-1}\varphi_{T}(z)dz}{\int_{\R^d} F(T,z)^{\gamma}\varphi_{T}(z)dz}.$$
Obviously we have by definition of $L_T$ and $F$ that $0\le f_k(T)$ and $\sum_{k=1}^m f_k(T)=1$.  Thus, it is enough to show that $\lim_{T\to\infty} f_m(T)=1$. Let us now consider the following inequality where the integral in the last equation can be computed with the formula of the moment generating function of a multivariate normal distribution.
\begin{eqnarray*}
f_m(T) &\ge& \frac{\int_{\R^d} p_m L_T(\vartheta_m,z) (p_m L_T(\vartheta_m,z))^{\gamma-1}\varphi_{T}(z)dz}{\int_{\R^d} F(T,z)^\gamma\varphi_{T}(z)dz}\\
&=& \frac{\int_{\R^d}  (p_m L_T(\vartheta_m,z))^\gamma\varphi_{T}(z)dz}{\int_{\R^d} F(T,z)^\gamma\varphi_{T}(z)dz}= \frac{p_m^\gamma \exp\big(\frac12 \|\vartheta_m\|^2 T\gamma(\gamma-1)\big)}{\int_{\R^d} F(T,z)^\gamma\varphi_{T}(z)dz}.\end{eqnarray*}
We will show that the lower bound tends to $1$ for $T\to\infty$. In what follows we consider the denominator. We can write it as $\mathbb{E}[F(T,Z)^\gamma]$ with $Z\sim \mathcal{N}(0,TI)$. Let $A$ be an $(m,d)$-matrix with rows consisting of $\vartheta_1,\ldots,\vartheta_m$. Then $X:=AZ\sim \mathcal{N}(0,T AA^\top)$. In particular the marginal distribution is given by $X_j \sim \mathcal{N}(0,T \|\vartheta_j\|^2)$ and we can write
$$ \mathbb{E}[F(T,Z)^\gamma] = \mathbb{E}\Big[\Big( \sum_{k=1}^m p_k \exp(X_k -\frac12 \|\vartheta_k\|^2 T)\Big)^\gamma\Big].$$
Now let $\gamma\in \mathbb{Q},$ i.e.\  we can write $\gamma=\frac{n}{l}$. Recall that $\gamma = 1/(1-\alpha)>1$ in this case. Then we obtain with $\beta=(\beta_1,\ldots,\beta_m)\in\N_0^m$ and with the notation $|\beta|=\beta_1+\ldots +\beta_m$ and
$$ { n \choose \beta_1,\ldots, \beta_m} = \frac{n!}{\beta_1!\ldots \beta_m!} $$
using the multinomial formula that
\begin{eqnarray*}
&&\Big[\Big( \sum_{k=1}^m p_k \exp(X_k -\frac12 \|\vartheta_k\|^2 T)\Big)^n\Big]^\frac{1}{l} =\\
&&  \Big[ \sum_{|\beta|=n} { n \choose \beta_1,\ldots, \beta_m} p_1^{\beta_1}\exp\big(\beta_1 (X_1 -\frac12 \|\vartheta_1\|^2 T)\big)\ldots p_m^{\beta_m}\exp\big(\beta_m (X_m -\frac12 \|\vartheta_m\|^2 T)\big) \Big]^\frac{1}{l}.
\end{eqnarray*}
Now since $(x_1+\ldots +x_K)^\frac{1}{l} \le x_1^\frac{1}{l}+\ldots +x_K^\frac{1}{l}$ for $x_i\ge 0$ we further obtain
\begin{eqnarray*}
&&  \Big[ \sum_{|\beta|=n} { n \choose \beta_1,\ldots, \beta_m} p_1^{\beta_1}\exp\big(\beta_1 (X_1 -\frac12 \|\vartheta_1\|^2 T)\big)\ldots p_m^{\beta_m}\exp\big(\beta_m (X_m -\frac12 \|\vartheta_m\|^2 T)\big) \Big]^\frac{1}{l}\\
&\le &  \sum_{|\beta|=n} { n \choose \beta_1,\ldots, \beta_m}^\frac{1}{l} p_1^{\frac{\beta_1}{l}}\exp\big(\frac{\beta_1}{l} (X_1 -\frac12 \|\vartheta_1\|^2 T)\big)\ldots p_m^{\frac{\beta_m}{l}}\exp\big(\frac{\beta_m}{l} (X_m -\frac12 \|\vartheta_m\|^2 T)\big)\\
&=& p_1^\gamma \exp\big(\gamma (X_1 -\frac12 \|\vartheta_1\|^2 T)\big)+\ldots + p_m^\gamma \exp\big(\gamma (X_m -\frac12 \|\vartheta_m\|^2 T)\big)\\
&& +  \sum_{|\beta|=n \atop \beta_i\neq n} C_\beta\exp\big(\frac{\beta_1}{l} (X_1 -\frac12 \|\vartheta_1\|^2 T)+\ldots +\frac{\beta_m}{l} (X_m -\frac12 \|\vartheta_m\|^2 T)\big),
\end{eqnarray*}
for some constants $C_\beta$ which do not depended on $T$. The last summands can be written as
$$ \exp\big(\frac{\beta_1}{l} (\sqrt{T} \|\vartheta_1\| \tilde{X}_1 -\frac12 \|\vartheta_1\|^2 T)+\ldots +\frac{\beta_m}{l} (\sqrt{T} \|\vartheta_m\|  \tilde{X}_m -\frac12 \|\vartheta_m\|^2 T)\big) $$
with marginal distribution $\tilde{X}_i\sim \mathcal{N}(0,1)$. Since the function $$(x_1,\ldots,x_m) \mapsto \exp\big(\frac{\beta_1}{l} (\sqrt{T} \|\vartheta_1\| x_1 -\frac12 \|\vartheta_1\|^2 T)+\ldots +\frac{\beta_m}{l} (\sqrt{T} \|\vartheta_m\|  x_m -\frac12 \|\vartheta_m\|^2 T)\big) $$ is supermodular  (follows e.g.\ with Lemma 2.1 in \cite{bauerle1997inequalities}) we obtain with the Lorentz-inequality (see e.g. Lemma 2.4 a) in \cite{bauerle1997inequalities})
\begin{eqnarray*}
&&\mathbb{E}\Big[  \exp\big(\frac{\beta_1}{l} (\sqrt{T} \|\vartheta_1\| \tilde{X}_1 -\frac12 \|\vartheta_1\|^2 T)+\ldots +\frac{\beta_m}{l} (\sqrt{T} \|\vartheta_m\|  \tilde{X}_m -\frac12 \|\vartheta_m\|^2 T)\big)  \Big] \\
&\le & \mathbb{E}\Big[  \exp\big(\frac{\beta_1}{l} (\sqrt{T} \|\vartheta_1\| {X} -\frac12 \|\vartheta_1\|^2 T)+\ldots +\frac{\beta_m}{l} (\sqrt{T} \|\vartheta_m\|  {X} -\frac12 \|\vartheta_m\|^2 T)\big)  \Big]
\end{eqnarray*}
with the same random variable $X \sim \mathcal{N}(0,1)$.
Now taking the expectation and using the formula of the moment generating function of a normal distribution yields
\begin{eqnarray}
\nonumber&& \mathbb{E}[F(T,Z)^\gamma]   \le  p_1^\gamma \exp\big(\frac12 \|\vartheta_1\|^2 \gamma(\gamma-1) T)\big)+\ldots + p_m^\gamma \exp\big(\frac12 \|\vartheta_m\|^2\gamma(\gamma-1) T)\big)\\
\label{eq:ineqproof}&& +  \sum_{|\beta|=n \atop \beta_i\neq n} C_\beta\exp\Big(\frac12 T \Big[ (\|\vartheta_1\| \beta_1+\ldots + \|\vartheta_m\|\beta_m)^2 \frac{1}{l^2}-(\beta_1\|\vartheta_1\|^2+\ldots \beta_m \|\vartheta_m\|^2)\frac{1}{l}\Big]\Big).
\end{eqnarray}
Let us now consider the exponent in the last line for an arbitrary admissible $\beta$ without the factor $\frac12 T$ in front. Obviously we can choose numbers $\|\vartheta^*\| $ and $\|\bar{\vartheta}\| $ such that
\begin{eqnarray*}
\|\vartheta_1\| \beta_1+\ldots + \|\vartheta_m\|\beta_m &=& \|\bar{\vartheta}\|  |\beta|\\
\beta_1\|\vartheta_1\|^2+\ldots \beta_m \|\vartheta_m\|^2 &=& \|\vartheta^*\|^2  |\beta|.
\end{eqnarray*}
This implies
\begin{eqnarray*}
\|\vartheta^*\| &=& \sqrt{\frac{\beta_1}{ |\beta|}\|\vartheta_1\|^2+\ldots \frac{\beta_m}{ |\beta|} \|\vartheta_m\|^2}\\
&\ge & \frac{\beta_1}{ |\beta|}\|\vartheta_1\|+\ldots \frac{\beta_m}{ |\beta|} \|\vartheta_m\| = \|\bar{\vartheta}\|.
\end{eqnarray*}
Moreover we have that $\|\vartheta^*\| < \|\vartheta_m\|$ since at least two $\beta_i$ are non-zero. This implies
\begin{eqnarray*}
&&  (\|\vartheta_1\| \beta_1+\ldots + \|\vartheta_m\|\beta_m)^2 \frac{1}{l^2}-(\beta_1\|\vartheta_1\|^2+\ldots \beta_m \|\vartheta_m\|^2)\frac{1}{l}\\
&\le & \|\vartheta^*\|^2 \frac{ |\beta|^2}{l^2}-\|\vartheta^*\|^2 \frac{ |\beta|}{l}= \|\vartheta^*\|^2 \gamma(\gamma-1)<   \|\vartheta_m\|^2 \gamma(\gamma-1).
\end{eqnarray*}

This shows us that all summands of the upper bound are of the form $\exp(\frac12 Tc)$ with the largest $c= \|\vartheta_m\|^2  \gamma(\gamma-1)$. Thus we obtain
\begin{eqnarray*}
 1&\ge& \lim_{T\to\infty} f_m(T) \\
&\ge& \lim_{T\to\infty}  \frac{p_m^\gamma \exp\big(\frac12 T \|\vartheta_m\|^2 \gamma(\gamma-1)\big)}{\sum_k p_k^\gamma \exp\big(\frac12 T \|\vartheta_k\|^2 \gamma(\gamma-1) )\big)+\exp\big(\frac12T \|\vartheta^*\|^2 \gamma(\gamma-1)\big) \sum_{|\beta|=n \atop \beta_i\neq n} C_\beta}=1
 \end{eqnarray*}
which implies the statement for $\gamma\in \mathbb{Q}$. Since the expression is continuous in $\gamma$ we obtain the statement for all $\alpha\in (0,1)$.

Part c) The proof for the case $\alpha<0$ can be done similar. In this case we have to show that $\lim_{T\to\infty} f_1(T)=1$. Note that here $\gamma := \frac{1}{1-\alpha}\in (0,1).$ We start with the similar inequality
\begin{eqnarray*}
f_1(T) &\ge& \frac{\int_{\R^d} p_1 L_T(\vartheta_1,z) (p_1 L_T(\vartheta_1,z))^{\gamma-1}\varphi_{T}(z)dz}{\int_{\R^d} F(T,z)^\gamma\varphi_{T}(z)dz}\\
&=&  \frac{p_1^\gamma \exp\big(\frac12 \|\vartheta_1\|^2 T\gamma (\gamma-1)\big)}{\int_{\R^d} F(T,z)^\gamma\varphi_{T}(z)dz}\end{eqnarray*}
For the denominator we can use the same lines of inequality until \eqref{eq:ineqproof}.
Defining  $\|\vartheta^*\|$ in the same way we obtain $\|\vartheta^*\| > \|\vartheta_1\|$ and since $\gamma\in (0,1)$ that
$$ \|\vartheta^*\|^2 \gamma(\gamma-1)<   \|\vartheta_1\|^2 \gamma(\gamma-1).$$ Looking again for the highest exponents in the denominator we obtain the statement as in part b).

Part d) follows from the representation 
$$ \kappa_i(t,T,y) = \gamma \sum_{k=1}^m \tilde\sigma_i \cdot \vartheta_k f_k(T)$$
in the beginning of part b).

Part e) can be obtained by direct calculation.

\bibliographystyle{apalike}
\bibliography{literature_BSA}
\end{document}